\documentclass{amsart}
\usepackage{amsthm,amsmath,amssymb,amsfonts,hyperref,cleveref,tikz,tikz-cd,enumitem,stmaryrd,xr,quiver,adjustbox,physics}
\usetikzlibrary{positioning,arrows.meta,fit,calc}
\allowdisplaybreaks[1]
\DeclareMathOperator{\Split}{Split}
\DeclareMathOperator{\LR}{LR}
\newcommand{\DD}{\mathbb{D}}
\newcommand{\cat}[1]{\mathbf{#1}}
\newcommand{\sem}[1]{\ensuremath{\llbracket #1 \rrbracket}}
\newcommand{\axref}[1]{\ensuremath{\left(\ref{#1}\right)}}
\newcommand{\id}[1][]{\ensuremath{\mathrm{id}_{#1}}}
\newcommand{\myid}{\ensuremath{\mathrm{id}}}
\newcommand{\prop}[1]{\ensuremath{\langle #1 \rangle}}

\newcommand{\zetak}[1][k]{\ensuremath{\zeta_{#1}}}
\newcommand{\N}{\ensuremath{\mathbb{N}}}
\newcommand{\im}{\ensuremath{\mathrm{im}}}
\newcommand{\new}{\mathsf{new}}
\newcommand{\measure}{\mathsf{measure}}
\newcommand{\apply}{\mathsf{apply}}
\newcommand{\distribli}{\ensuremath{\delta^{\text{l}}}}
\newcommand{\distribri}{\ensuremath{\delta^{\text{r}}}}
\newtheorem{theorem}{Theorem}
\newtheorem{lemma}[theorem]{Lemma}
\newtheorem{proposition}[theorem]{Proposition}
\newtheorem{corollary}[theorem]{Corollary}
\theoremstyle{definition}
\newtheorem{definition}[theorem]{Definition}

\title{Free Quantum Computing}
\author[J. Carette]{Jacques Carette}
\address{McMaster University, Department of Computing and Software, Canada}
\email{carette@mcmaster.ca}
\author[C. Heunen]{Chris Heunen}
\address{University of Edinburgh, School of Informatics, United Kingdom}
\email{chris.heunen@ed.ac.uk}
\author[R. Kaarsgaard]{Robin Kaarsgaard}
\address{University of Southern Denmark, Department of Mathematics and Computer Science, Centre for Quantum Mathematics, Centre for Formal Methods and Future Computing, Denmark}
\email{kaarsgaard@imada.sdu.dk}
\author[N. J. Ross]{\\Neil J. Ross}
\address{Dalhousie University, Department of Mathematics and Statistics, Canada}
\email{neil.jr.ross@dal.ca}
\author[A. Sabry]{Amr Sabry}
\address{Indiana University, Department of Computer Science, United States of America}
\email{sabry@iu.edu}
\thanks{We thank John Baez and J{\o}rgen Ellegaard Andersen for pointing out the topological nature of Axiom~\axref{eq:axiom2}.
Robin Kaarsgaard was supported by Sapere Aude: DFF-Research Leader grant 5251-00024B. 
Neil J. Ross was supported by the Natural Sciences and Engineering Research Council of Canada (NSERC)}

\begin{document}
\begin{abstract}
    Quantum computing improves substantially on known classical algorithms for various important problems, but the nature of the relationship between quantum and classical computing is not yet fully understood.
    This relationship can be clarified by free models, that add to classical computing just enough physical principles to represent quantum computing and no more.
    Here we develop an axiomatisation of quantum computing that replaces the standard continuous postulates with a small number of discrete equations, as well as a free model that replaces the standard linear-algebraic model with a category-theoretical one.
    The axioms and model are based on reversible classical computing, isolate quantum advantage in the ability to take certain well-behaved square roots, and link to various quantum computing hardware platforms.
    This approach allows combinatorial optimisation, including brute force computer search, to optimise quantum computations.
    The free model may be interpreted as a programming language for quantum computers, that has the same expressivity and computational universality as the standard model, but additionally allows automated verification and reasoning.
\end{abstract}

\maketitle

Quantum computing improves substantially on known classical algorithms for certain problems~\cite{google:supremacy} including prime factorisation~\cite{shor:factoring}, boson sampling~\cite{lundetal:bosonsampling}, and Hamiltonian simulation~\cite{abramslloyd:simulation}.
The nature of the relationship between classical and quantum computing is not yet fully understood. 
The literature only identifies several notions that do not explain the difference on their own~\cite{vedral:elusive}, including superposition~\cite{deutsch:universal}, entanglement~\cite{vidal:entanglement}, nonlocality~\cite{jozsalinden:entanglement}, contextuality~\cite{howardetal:contextuality}, and interference~\cite{gottesmanknill}.
The advantage of quantum computing over classical computing is often considered quantitatively: just how much advantage does a specific quantum algorithm have over classical ones~\cite{aaronsonchen:supremacy}? Here we consider it qualitatively: does a model of computation allow algorithms that have advantage over classical ones, or does it not?

The question is difficult partly because quantum computing is often implicitly confined to a \emph{standard model}, that uses complex linear maps~\cite{nielsenchuang}. 
For a meaningful comparison, both classical and quantum computing must be situated within a larger landscape of models.
We use as models bipermutative categories, the most permissive notion of model of computation possible.
Within this framework, we identify a unified \emph{free model} of quantum computing that can express all quantum algorithms with advantage over classical ones.

To explain the free property of a model by way of example, consider combinatorics.
The natural numbers can be completed with negatives in multiple ways: $\mathbb{N}$ embeds into $\mathbb{Z}$ while preserving addition as $n \mapsto n$, but also as $n \mapsto -n$ or $n \mapsto 2n$.
The fact that these are \emph{completions with negatives} means that an additive inverse for the number 1 is adjoined, and entails that all these embeddings are in fact the same up to an unimportant global scalar. 
The fact that it is the \emph{free} completion means that any other completion has to encompass it.
The real numbers also contain $\mathbb{N}$ and allow subtraction, but additionally have many other properties that are superfluous for counting. 
Any addition-preserving embedding $\mathbb{N} \to \mathbb{R}$ factors uniquely through the free model $\mathbb{Z}$ (of natural numbers with negatives).
In this sense, the integers are the smallest model, devoid of extraneous assumptions or constraints, of natural numbers with negatives. This is exactly what being a \emph{free model} captures.
See Figure~\ref{fig:free}.
Generally, free models are unique: they are completely determined by their defining properties (such as having negatives) up to a unique isomorphism (such as an unimportant global scalar).

This article explains the relationship between classical and quantum computing in a similar way. 
Like adjoining $-1$ to the natural numbers, we adjoin a small number of generators with physical significance to reversible classical computing and show that reversible quantum computing arises as the free completion. 
Like the real numbers, any other model of reversible quantum computing, including the standard model of complex unitary matrices, must factor uniquely through this free one.
Conversely, we can canonically identify classical computing within any model of quantum computing by the latter's (bipermutative) structure.
The free model has the same computational universality as the standard model: it can express any quantum algorithm to any desired accuracy. But it has additional virtues.

The main advantage of this free model over the standard model is that it enables automated reasoning about quantum algorithms. For example, deciding whether two quantum circuits represent the same quantum computation is an intractable problem~\cite{janzingetal:qma}, among other reasons because there are continuously many degrees of freedom. But because the free model is entirely discrete and concerns only finitely many equations, the full power of combinatorial optimisations and computer science methods of equational rewriting can be brought to bear~\cite{nametal:optimisation}. As a simple example, we will prove that the Hadamard gate is an involution; this fact is not deep, but the point is that this equality can be derived automatically, as can all others.

A second advantage of the free model is foundational: it does not conflate physical concepts. We thus isolate the difference between quantum and classical computation in the ability to take well-behaved square roots, or in other words, the ability to stop some classical computations half-way. 
The free model is explained from operational first principles, unlike the foundationally unsatisfactory reliance on complex numbers~\cite{delsantogisin:reals}. 
Similarly, it does not need or allow intermediate unphysical matrices that have to be made unitary again at a later stage, circumventing the inefficiency of circuit synthesis that plagues other approaches~\cite{debeaudrapetal:zxhard}.

A third advantage of the free model is practical: it does not resort to unempirical numbers that cannot experimentally be determined with arbitrary precision in finite time, but accounts for accuracy precisely. 
We will not discuss quantitative results in individual models of computation, like the density theorems~\cite{solovaykitaev,rossselinger:synthesis} that show how efficiently a unitary matrix can be approximated by quantum circuits. 
Nevertheless, as an example we will analyse the relationship between accuracy in the free model and efficiency of its implementation of the quantum Fourier transform.

A substantial literature has explored structural and equational approaches to quantum computation, most notably categorical quantum mechanics~\cite{abramskycoecke:cqm,heunenvicary:cqm} and diagrammatic formalisms such as the ZX-calculus~\cite{coeckeduncan:zx}.
These frameworks replace matrices by generators and rewrite rules, enabling symbolic manipulation and circuit optimisation. Our aim is related but conceptually distinct. 
Rather than giving sound and often complete presentations of specific models---typically Hilbert spaces or fragments thereof---we construct the free model generated by a set of axioms with physical connotations. The result is not a calculus for reasoning within a fixed quantum model, but a universal quantum programming language for that axiomatisation.

\section*{Auxiliary qubits}

The quantum computing literature focuses mostly on universality of finite gate sets, rather than explicit comparison to classical computation~\cite{shi:aharonov}.
Here, our focus is on models of quantum computation that reuse the infrastructure of classical computation and are reached by a finite set of axioms.
We fix one set of axioms, containing square roots and a precision parameter $k$, and investigate its free model. 
We show that the precision level can be increased by $n$ at the cost of $n$ auxiliary qubits.
For $k=2$ the free model encompasses Clifford+Toffoli quantum circuits; $k=3$ encompasses Clifford+$T$ quantum circuits; $k=4$ encompasses Clifford+$T$+$\sqrt{T}$ quantum circuits; and higher precision encompasses Clifford cyclotomic quantum circuits~\cite{cyclotomic}.
Other axioms, for example using cube roots, give other free models. See Figure \ref{fig:models}.
In general it makes no sense to compare models for different axioms, but we conjecture that as $k$ tends to infinity, the free model of the axioms using square roots coincides with that using cube roots or roots for any other radix. Our axioms distinguish themselves by having physical connections beyond the mathematical and computational advantages discussed above.

\begin{figure}
  \begin{tikzpicture}[xscale=3,yscale=1.5]
    \node (tl) at (0,1) {Axioms};
    \node (tr) at (1,1) {Free model};
    \node (br) at (1,0) {Any model};
    \draw[->] (tl) to (tr);
    \draw[->] (tl) to (br);
    \draw[->, dashed] (tr) to node[right]{$\exists!$} (br);
  \end{tikzpicture}
  \caption{The defining property of a free model: it realises our axioms of quantum computation, and there is a unique interpretation of the free model in any other model satisfying our axioms. In this sense, the free model contains exactly what is needed to form a model, and nothing more.}
  \label{fig:free}
\end{figure}
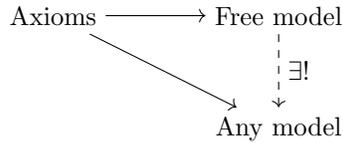

\begin{figure}[t]
{\centering
\resizebox{0.5\textwidth}{!}{
  \begin{tikzpicture}[
    >=Latex,
    node distance=10mm and 16mm,
    model/.style={draw, rounded corners, align=center, inner sep=4pt},
    incl/.style={-Latex, line width=0.8pt},
    factor/.style={-Latex, dashed, line width=0.8pt},
    incomparable/.style={densely dotted, line width=0.6pt}
  ]

  \node[model] (std) {Standard model\\[2pt] $\mathrm{Unitary}(\mathbb{C})$};
  \node[model, below left=22mm and -2mm of std] (k2) {Clifford+Toffoli};
  \node[model, right=10mm of k2] (k3) {Clifford+$T$};
  \node[model, right=10mm of k3] (k4) {Clifford+T+$\sqrt{T}$};
  \node[model, right=10mm of k4] (ccxh) {Clifford+$\sqrt[3]{T}$};
  \node[model, below=22mm of k2] (f2) {$\Pi_{2}$};
  \node[model, below=22mm of k3] (f3) {$\Pi_{3}$};
  \node[model, below=22mm of k4] (f4) {$\Pi_{4}$};
  \node[model, right=27mm of std, yshift=-10mm] (others) {Other models\Large$\dots$};

  \draw[incl] (k2) -- (k3);
  \draw[incl] (k3) -- (k4);
  \draw[incl] (k2) -- (std);
  \draw[incl] (k3) -- (std);
  \draw[incl] (k4) -- (std);
  \draw[incl] (f2) -- (k2);
  \draw[incl] (f3) -- (k3);
  \draw[incl] (f4) -- (k4);
  \draw[incl] (f2) -- (f3);
  \draw[incl] (f3) -- (f4);
  \draw[incl] (ccxh) -- (std);
  \draw[incl] (others) -- (std);

  \draw[incomparable] (ccxh) -- (f4);
  \end{tikzpicture}
}}
\caption{The landscape of models of quantum
  computing and their relations. Solid arrows indicate inclusion
  of one model into another. Incomparable models are connected with dotted lines. 
  For each $k$, $\Pi_k$ is the free model for
  the axioms of bipermutative categories augmented with axioms \axref{eq:axiom1}--\axref{eq:axiom3}. The standard model of unitaries over the complex
  numbers also satisfies these axioms. For each $k$, $\Pi_k$ embeds in
  any model that also satisfies the same axioms. For another example, the model of Clifford+$\sqrt[3]{T}$ circuits is also computationally universal~\cite{nebe_invariants_2001} but incomparable to any $\Pi_k$.}
\label{fig:models}
\end{figure}
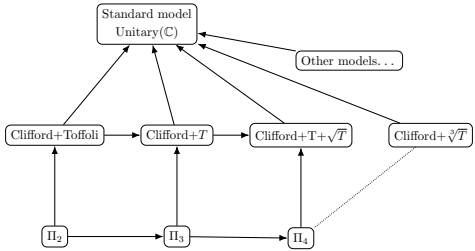

In addition to increased precision, auxiliary qubits also provide the opportunity to extend our free model of reversible quantum computing with quantum measurement. A second free construction allows mid-computation measurements to control the flow of the quantum computation~\cite{heunenkaarsgaard:informationeffects}. We will show that the ability to measure out auxiliary qubits strengthens the completeness of our free model: two irreversible quantum computations are equal if and only if they follow from our axioms, without any need for auxiliary qubits.

\section*{Axioms}

The standard postulates of reversible quantum computing may be replaced by the following; we will discuss (irreversible) quantum measurements later.
The axioms of reversible quantum computing, with precision level $k$, extend reversible classical computing with two generators $\zeta_k$ and $V$ and three relations
\begin{align}
  V^2 & = X \label{eq:axiom1} \\
  V\,S\,V &= S\,V\,S \label{eq:axiom2} \\
  \smash{\zeta_k^{2^k}} & = 1 \label{eq:axiom3} 
\end{align}
where $X$ is a coherent quantum version of the classical NOT gate, and $S$ is defined in terms of $\zeta_k$, as explained in more detail below~\cite{caretteetal:sqrtpi}.
Classical reversible computing corresponds to the free bipermutative category with no further assumptions, which embeds into any other bipermutative category.

These axioms isolate (well-behaved) square roots as a singular source of any difference between quantum computing and classical computing.
We will also consider a weaker form of equivalence $\approx_k$ which identifies computations when they are equal in the presence of auxiliary qubits. This weak equivalence becomes irrelevant when including (irreversible) quantum measurement and state preparation.

\section*{Interpretation}

Before we go into detail, observe the advantages of this axiomatisation.
It consists of a finite number of equations, without continuous variables or logical quantifiers. Therefore they are eminently amenable to automation, and can benefit from combinatorial optimisation, including brute force computer search~\cite{kortevygen:combinatorialoptimization}.

Moreover, all three equations have direct realisations in important physical models of computation, along with links to conceptual principles, notably superposition and interference.

Axiom~\axref{eq:axiom1} can be read as saying that the $X$ gate has a square root.
This is enacted in trapped-ion quantum computing by pulsing the laser for half the duration or intensity needed to enact the $X$ gate~\cite{winelandetal:trappedions}. 
Similarly, $\sqrt{X}$ is a native gate in superconducting quantum computers~\cite{google:supremacy}.
The $X$ gate having a square root induces superposition, but here the ability to halt some computations halfway is primary, and the existence of some superpositions is merely derived. Constructing all superpositions is impossible in a model that extends classical reversible computing with only finitely many axioms.

Axiom~\axref{eq:axiom2} is the defining equation for the 3-strand braid group, linking this axiomatisation to anyonic quantum computing~\cite{simon:topologicalquantum}. 
See \Cref{fig:braid}.
\begin{figure}
  \tikzset{halo/.style={
           preaction={draw,white,line width=4pt,-},
           preaction={draw,white,ultra thick, shorten >=-2.5\pgflinewidth}}}
  $\begin{aligned}\begin{tikzpicture}[scale=.6]
      \draw (1,0) to[out=90,in=-90] (0,1);
      \draw[halo] (0,0) to[out=90,in=-90] (1,1);
      \draw (2,0) to (2,1) to[out=90,in=-90] (1,2);
      \draw (2,1) to[out=90,in=-90] (1,2);
      \draw (1,2) to[out=90,in=-90] (0,3);
      \draw[halo] (1,1) to[out=90,in=-90] (2,2) to (2,3);
      \draw[halo] (0,1) to (0,2) to[out=90,in=-90] (1,3);
    \end{tikzpicture}\end{aligned} 
    \qquad = \qquad 
    \begin{aligned}\begin{tikzpicture}[scale=.6]
      \draw (2,0) to[out=90,in=-90] (1,1) to[out=90,in=-90] (0,2) to (0,3);
      \draw[halo] (1,0) to[out=90,in=-90] (2,1) to (2,2);
      \draw[halo] (0,0) to (0,1) to[out=90,in=-90] (1,2);
      \draw (2,2) to[out=90,in=-90] (1,3);
      \draw[halo] (1,2) to[out=90,in=-90] (2,3);
  \end{tikzpicture}\end{aligned}$

  \caption{An illustration of Axiom~\axref{eq:axiom2}, interpreting $V$ as swapping the leftmost two wires, and $S$ as swapping the rightmost two wires.}
  \label{fig:braid}
\end{figure}
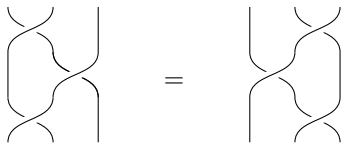

Axiom~\axref{eq:axiom3} concerns constructivity. 
General complex numbers in principle take forever to determine with infinite precision.
The free model instead uses numbers that one can empirically construct or measure exactly.
Positing a nontrivial phase, in combination with the superposition afforded by $V$, induces interference.
We choose a primitive $2^k$th root of unity $\zeta_k = e^{2\pi i/2^k}$ as our primitive nontrivial phase.
The scalars in the free model 
(as defined in ``Bipermutative categories: formalising the main results'' below) 
are therefore the cyclic group of order $2^k$, consisting of powers of $\zeta_k$. 
One can clearly approximate any complex number on the unit circle with such constructive scalars by choosing $k$ sufficiently large, additionally making these scalars useful in parametrised rotation gates, photonic quantum computing, and variational quantum algorithms. 
Notice that the free model has no need for addition of scalars, but the standard model of unitary matrices uses many more scalars.

\section*{Reversible classical computing}

To understand the axiomatisation and explain what we mean by a \emph{model} in more detail, we first discuss \emph{reversible classical computing}. 
In this paradigm, computations are deterministic when run either forwards or backwards~\cite{gluckyokoyama:reversible}.
It relates to physics, as experiments have verified that irreversibly changing information stored in a computer dissipates heat~\cite{berutetal:landauer}, in accordance with  Landauer's principle~\cite{landauer,landauer:informationisphysical}.

The standard model of classical reversible computing is given by circuits of reversible Boolean gates.
For example, the gate $CCX$, also known as the Toffoli gate, which transforms three input bits $(x,y,z) \in \{0,1\}^3$ into $(x,y, xy+z \;\textrm{mod}\;2)$, is universal in the sense that any Boolean function can be implemented by a circuit of CCX gates.

A general model for classical reversible computing is given by the notion of a \emph{bipermutative category}, which has objects and processes that can be combined using combinators $\oplus$ and $\otimes$~\cite{choudhurykawowskisbary:symmetries}. 
This is the bare minimum that any model of computation needs to represent: the ability to compose instructions sequentially (using $\circ$), the ability to consider data in parallel (using $\otimes$), and the ability to use one such concurrent piece of data to influence another (using $\oplus$).
In the standard model, the unit object $1$ for $\otimes$ is analogous to a singleton set, and $2=1 \oplus 1$ models the bit. The swap process $X \colon 1 \oplus 1 \to 1 \oplus 1$ models a NOT gate, and $CCX$ is typical of modelling controlled operations as $\mathrm{id} \oplus X \colon 2 \oplus 2\oplus 2\oplus 2 = 2 \otimes 2 \otimes 2 \to 2 \otimes 2 \otimes 2$.

The free model of classical reversible computing consists of permutations of finite sets, or equivalently, permutation matrices.
Because every permutation of a finite set is a composition of transpositions, the NOT gate $X$  generates all of these.
The free bipermutative category is thus identified with the free model of classical reversible computing: the $X$ and $CCX$ gates are already modeled by the axioms of bipermutative categories.

\section*{Reversible quantum computing}

Similar combinators $\oplus$ and $\otimes$ extend quantum computing from a single qubit to multiple controlled qubits.
We can now understand Axioms~\axref{eq:axiom1}--\axref{eq:axiom3} in detail. Start with the free bipermutative category, \textit{i.e.} the free model of classical reversible computing, and choose an integer $k \ge 2$. Add generators $\zeta_k \colon 1 \to 1$ and $V \colon 2 \to 2$, and consider the free bipermutative category satisfying Axioms~\axref{eq:axiom1}--\axref{eq:axiom3}, where $S=\mathrm{id} \oplus \zeta_k^{2^{k-2}}$.
\Cref{thm:freemodel} below shows that this free model exists. 
Notice that if we only used $\otimes$, adding $\zeta_k$ merely yields physically irrelevant global phases; it is the presence of $\oplus$ that yields relative phases with nontrivial computational consequences.
The combinator $\oplus$ gives the ability to represent qudits for all dimensions rather than only qubits, which harmonises with the need for quantum algorithms such as Shor's prime factoring to allocate storage for values to the nearest power of 2, some of which will not be used in the actual computation. 

The standard model of unitary matrices also satisfies Axioms~\axref{eq:axiom1}--\axref{eq:axiom3} with
\[
  \zeta_k = e^{2\pi i/2^k}\text,
  \qquad 
  V = \frac{1}{2}\begin{pmatrix} 1+i & 1-i \\ 1-i & 1+i  \end{pmatrix} \text,
  \qquad 
  S = \begin{pmatrix} 1 & 0 \\ 0 & i \end{pmatrix}.
\]
Hence there is an interpretation of the free model in the standard matrix model (see \Cref{fig:free}). \Cref{thm:standardmodel} below shows that this interpretation is in fact an isomorphism when $k=2$ to a subset of the standard matrix model. Free models are unique up to unique isomorphism: given two free models, setting one to be the ``Free model'' and instantiating ``Any model'' in Figure~\ref{fig:free} to be the other and vice versa gives two maps that are each other's inverse. We conclude that the free model (which has $k=2$) is formed by unitary matrices with entries from the smallest ring containing $\zeta_2$ and $\frac{1}{2}$. Vice versa, for each $k$, we may think of the free model as a programming language with which to construct reversible quantum programs.

The standard matrix interpretation answers two computer science questions~\cite{rewriting} about this quantum programming language.
First, the word problem is decidable: given two expressions in the programming language, there is an algorithm that can decide whether they compute the same function, by comparing the matrices they induce.
Second, normalisation by evaluation~\cite{151645} is possible: one can optimise a quantum program by synthesising a normal form from the matrix it induces. This is akin to quantum circuit optimisation.

Finally, by foregoing matrices and working with the free model no reasoning power is lost. It follows from \Cref{thm:standardmodel} below that two quantum programs denote the same matrix if and only if they are provably equal under substitutions using Axioms~\axref{eq:axiom1}--\axref{eq:axiom3}. It thus becomes possible to automate verification of quantum programs, and to use automated proof assistants~\cite{norell:agda} to construct, transform, and reason about quantum computations.

\section*{Precision and conditional circuits with auxiliary qubits}

\Cref{thm:density} establishes that for every $k \geq 2$, the
  free model is topologically dense in the standard model of unitary
  matrices over the complex numbers. In other words, any target
  unitary can be approximated to arbitrary accuracy by a program in
  the free model, justifying its status as a model of (reversible)
  quantum computation.

\Cref{thm:precision} further shows that the precision
  parameter~$k$ does not affect expressivity once a clean auxiliary
  qubit is available: $\Pi_{k+1}$ can be simulated exactly by $\Pi_k$
  using a single auxiliary qubit. Thus $k$ only influences the size and
  convenience of terms, not which unitaries are definable. In
  particular, $k=2$ already suffices for universal
  approximation. Larger~$k$ can produce much shorter $\Pi_k$ terms for
  the same target unitary, and can make it easier to reason about
  programs and prove algebraic properties within the syntax, since
  fewer identities must be expressed through long compiled
  sequences. Indeed, certain highly structured unitaries benefit
  dramatically from having more primitive phases. A notable example is
  the quantum Fourier transform (QFT), whose phase angles are exact
  powers of $1/2$, making it an ideal case study for the impact of $k$
  on synthesis cost.

Each $n$-qubit QFT contains $O(n^2)$ controlled-phase gates
  with angles $2\pi/2^d$ for $2\le d\le
  n$~\cite{nielsenchuang}. Because $\Pi_k$ includes all phases with
  denominator $2^k$ as primitives, only smaller-angle rotations
  require approximation, at a cost of $O(\log_2(1/\varepsilon))$ per
  rotation, where $\varepsilon$ is the accuracy of the
  approximation~\cite{kitaevshenvyalyi,rossselinger:synthesis}. Increasing
  $k$ reduces the number of rotations to approximate until $k$ reaches
  $n$, after which all rotations are native and the size plateaus at
  $O(n^2)$. This sharp cutoff is a consequence of the QFT’s
  structured angles; for generic unitaries, which lack such alignment,
  increasing $k$ can only improve constant factors and does not remove
  the $O(\log(1/\varepsilon))$ synthesis overhead.

Beyond reducing the cost of phase synthesis, the $\Pi_k$
  formalism also yields smaller controlled circuits. Recent
  compilation techniques show how to decompose multi-controlled
  operations linearly without helper qubits, by exploiting
  higher-order roots and reusing idle qubits as temporary
  workspace~\cite{10.1145/3656436}. This means that increasing~$k$ not
  only shortens the phase-synthesis portion of a program, but also
  reduces the overhead of conditionals and controlled subroutines,
  further amplifying the practical benefits of higher precision.

\section*{Measurement}

Finally, we discuss a free replacement of the standard postulate about measurement, by incorporating (mid-computation) quantum measurement, initialisation of auxiliary qubits, and classical control into the above account of reversible quantum computing. 
Measurement necessitates working with mixed rather than pure states. 
This means that the objects of the free model can no longer be natural numbers modelling dimensions. 
Instead, they will be lists of natural numbers, modelling dimensions branched on measurement outcomes, enabling classical control.

An irreversible quantum computation from $m$ to $n$ qubits can be simulated perfectly by a reversible quantum computation from $m+n$ to $m+n$ qubits using quantum information effects~\cite{heunenkaarsgaard:informationeffects}. 
This is another free construction that introduces the ability to hide information in a way that respects both sequential composition ($\circ$) and parallel composition ($\otimes$) according to the einselection interpretation of decoherence~\cite{zurek:decoherence}. It may be thought of as a further programming language, in which auxiliary qubits, classical control, and measurements can be used at will anywhere in the quantum program, that gets compiled into a quantum program with a single measurement at the end.
\Cref{thm:measurement} below states that two such programs including measurement denote the same quantum channel if and only if they are provably equal under substitutions using a finite set of equations---see Supporting Information for details. The physical irrelevance of global phase arises in this construction from the uniformity of information hiding, squashing all global phases into a single point.

Similar to how mixed states of $n$ qubits correspond to quantum channels $\mathbb{C} \to (\mathbb{C}^2)^{\otimes n}$, states of some object $A$ in the free model correspond to morphisms $1 \to A$. An input state $\rho \colon 1 \to A$ evolves along an operation $f \colon A \to B$ by composition to give output state $f \circ \rho \colon 1 \to B$. Probabilities are thus a derived concept in the free model, with mixed states given by the equivalence class of their purifications.

\section*{Bipermutative categories: formalising the main results}

Let us now sketch the technical development, leaving full detail to the Supporting Information. 
A \emph{category} consists of objects and morphisms between those objects, such that morphisms $f \colon A \to B$ and $g \colon B \to C$ can be composed to $g \circ f \colon A \to C$ (typically shortened to $gf$) in an associative way, and there are identity morphisms $\mathrm{id} \colon A \to A$ on each object~\cite{maclane:categories}.
Categories can fully characterise quantum theory without referring to the standard model~\cite{heunenkornell:hilbert}.
A \emph{strict symmetric monoidal category} additionally has objects $A \otimes B$ for any two objects $A$ and $B$, and morphisms $f \otimes g \colon A \otimes B \to C \otimes D$ for any morphisms $f \colon A \to C$ and $g \colon B \to D$, a unit object $1$ with $1 \otimes A=A$, and a swap morphism $\sigma \colon A \otimes B \to B \otimes A$, such that $(\mathrm{id} \otimes g) \circ (f \otimes \mathrm{id}) = (f \otimes \mathrm{id}) \circ (\mathrm{id} \otimes g)$ and $\sigma \circ \sigma = \mathrm{id}$~\cite{heunenvicary:cqm}.
Morphisms $s \colon 1 \to 1$ are called \emph{scalars}, any morphism $f \colon A \to B$ can be scaled by a scalar $s$ to form $s \cdot f \colon A \to B$, and this scalar multiplication commutes with all other operations~\cite{heunenvicary:cqm}.
A \emph{bipermutative category} is a category that is strict symmetric monoidal in two ways, $(\otimes,1)$ and $(\oplus,0)$, such that $(A \oplus B) \otimes C=(A \otimes C) \oplus (A \otimes B)$ and three natural equations hold~\cite{may:spectra}.
A category is free with some property when it satisfies the condition of \Cref{fig:free}~\cite{maclane:categories}.

\begin{theorem}\label{thm:freemodel}
  There exists a free bipermutative category $\Pi_k$ with generators $\zeta_k \colon 1 \to 1$ and $V \colon 1 \oplus 1 \to 1 \oplus 1$ satisfying Axioms~\axref{eq:axiom1}--\axref{eq:axiom3}.
\end{theorem}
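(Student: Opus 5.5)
We construct $\Pi_k$ syntactically, as a term model, and then verify the universal property of \Cref{fig:free}. Objects are the natural numbers, or equivalently the free bipermutative objects built from $0$, $1$, $\oplus$ and $\otimes$. Strictness lets us normalise these, since $\oplus$ is strictly associative and unital and distributivity holds strictly. Morphism terms are generated by the following:
\begin{itemize}
\item identities, composition, $\oplus$ and $\otimes$ of terms;
\item the structural morphisms: the symmetries $\sigma_\oplus$ and $\sigma_\otimes$, together with whatever distributivity coherence isomorphisms the definition of bipermutative category requires;
\item the two new generators $\zeta_k \colon 1 \to 1$ and $V \colon 1 \oplus 1 \to 1 \oplus 1$.
\end{itemize}
Here $X$ is defined as $\sigma_\oplus$ on $1\oplus 1$, and $S$ as $\mathrm{id}\oplus \zeta_k^{2^{k-2}}$. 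We then quotient by the least congruence containing three families of equations: the axioms of a category, the axioms of a bipermutative category (strict monoidality in both structures, functoriality of $\oplus,\otimes$, naturality of the symmetries, and the three coherence equations), and Axioms~\axref{eq:axiom1}--\axref{eq:axiom3}. The relation must be a congruence for $\circ$, $\oplus$ and $\otimes$ simultaneously. We impose this by closing under all three operations when generating the relation, so that the operations descend to equivalence classes by construction.

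Next we check that the quotient $\Pi_k$ is indeed a bipermutative category satisfying the axioms. This is immediate, because every required law was imposed as a relation and all operations are well defined on classes. The one subtlety is naturality of the symmetries and of the distributors with respect to the new generators. For example, $\sigma_\otimes\circ(\zeta_k\otimes V)=(V\otimes\zeta_k)\circ\sigma_\otimes$ must hold, and this requires including naturality for \emph{all} terms, generators included, among the imposed relations. The same care applies to the scalar laws, under which scalars commute with everything; in a bipermutative category these follow from naturality together with the unit laws.

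For freeness, let $\mathbf{C}$ be any bipermutative category with chosen morphisms $z\colon 1\to1$ and $v\colon 1\oplus1\to1\oplus1$ satisfying the three axioms, read with $X$ and $S$ defined from the structure of $\mathbf{C}$. We define $F$ on objects by $n\mapsto 1\oplus\cdots\oplus 1$ in $\mathbf{C}$, and on terms by structural recursion, sending $\zeta_k\mapsto z$, $V\mapsto v$, and structural morphisms to their counterparts. We then show that $F$ respects the generating congruence. Every bipermutative law holds in $\mathbf{C}$ by assumption, and Axioms~\axref{eq:axiom1}--\axref{eq:axiom3} hold by hypothesis. It follows that $F$ is a well-defined strict bipermutative functor. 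Uniqueness holds because any such functor is forced on generators and on all structure, and every morphism of $\Pi_k$ is the class of some term built from these.

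The main obstacle is bookkeeping on the object level. In a strict bipermutative category, right distributivity is typically strict while left distributivity is only a natural isomorphism, and the three coherence equations relate the two. We must therefore make sure the interpretation of objects, and in particular of $\otimes$ on objects, is strictly preserved. We plan to handle this by taking $\mathbf{C}$'s objects $n\cdot 1$ and verifying that $(m\cdot1)\otimes(n\cdot1)=(mn)\cdot 1$ holds strictly there, using strict right distributivity and unitality. Alternatively, if the paper's notion of freeness is up to the appropriate strong bipermutative functors, we would appeal to the standard result that free algebras for a finitary essentially algebraic theory exist. This argument realises bipermutative categories with the extra generators and equations as models of such a theory, and obtains $\Pi_k$ as the initial model.
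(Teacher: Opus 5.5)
Your proposal is correct, but it takes a different route from the paper.

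\textbf{Your route.} You build $\Pi_k$ directly as a typed term model, which amounts to the initial model of a finitary essentially algebraic theory. In it, $\otimes$ and $\sigma_\otimes$ are primitive term formers, and you check the universal property by structural recursion. Your treatment of objects is what makes this work against an arbitrary strict bipermutative target: the identity $(m\cdot 1)\otimes(n\cdot 1)=(mn)\cdot 1$ follows from strict right distributivity, unitality and $O\otimes A=O$.

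\textbf{The paper's route.} The paper never builds a term model with $\otimes$ as a primitive. It first shows, in \Cref{symmetries} and \Cref{reduction}, that in a semisimple bipermutative category every $\sigma_\otimes$ is a composite of sums of identities and $\sigma_\oplus$. Hence every $f\otimes g$ can be written using only $\circ$, $\oplus$ and $\sigma_\oplus$. It follows that semisimple bipermutative categories are PROPs satisfying extra equations, and that strict bipermutative functors between them are exactly PROP morphisms. The paper then invokes Mac Lane's existence of free PROPs and quotients by the congruence generated by the axioms.

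\textbf{What each buys.} Your argument is more self-contained. It gives freeness against all bipermutative categories at once, including non-semisimple ones. It also avoids a step the paper leaves implicit: a free PROP must still be quotiented so that the defined Kronecker product is bifunctorial and the defined $\sigma_\otimes$ is natural.

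The paper's route gives a structural fact your presentation lacks. It shows that $\Pi_k$ is generated by $\mathrm{id}$, $\sigma_\oplus$, $\zeta_k$ and $V$ under $\circ$ and $\oplus$ alone. This is what later justifies:
\begin{itemize}
\item the structural induction in \Cref{catalysissum}, which has no $\otimes$ case;
\item treating translations as maps preserving only $\circ$ and $\oplus$;
\item importing the $k=2$ completeness result.
\end{itemize}
With your term model you would still need the reduction lemma before those arguments apply.

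\textbf{A small slip.} Your fallback about ``strong'' bipermutative functors is misattributed. The initial model of an essentially algebraic theory is free with respect to strict homomorphisms. Those are exactly the strict bipermutative functors your main argument already handles, and they are the notion the paper uses. So that result offers no alternative for a strong-functor universal property, but none is needed.
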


Recall that, like any free model, $\Pi_k$ is unique up to unique isomorphism.
For example, $\Pi_0$ is the category whose objects are natural numbers, whose morphisms are permutations of $\{1,\ldots,n\}$, and which only has one trivial scalar.
Another example of a bipermutative category is $\cat{Unitary}(R)$, where objects are natural numbers, morphisms are unitary matrices over the involutive ring $R$, and scalars are elements of $R$.
For example, $R$ could be the ring of dyadic rational numbers $\DD = \mathbb{Z}[\frac{1}{2}] = \{ 2^{-n}z \mid n \in \mathbb{N}, z \in \mathbb{Z} \}$, with a primitive $2^k$th root of unity $\zeta_k$ adjoined. 

\begin{theorem}\label{thm:standardmodel}
  There is an isomorphism $\Pi_2 \simeq \cat{Unitary}(\DD[\zeta_2])$ of bipermutative categories.
\end{theorem}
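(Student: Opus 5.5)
The plan is to build the functor, check it lands in the right place and is bijective on objects, and then get fullness from a normal form and faithfulness from a complete presentation. By \Cref{thm:freemodel} and the universal property of \Cref{fig:free}, there is a unique bipermutative functor $F \colon \Pi_2 \to \cat{Unitary}(\DD[\zeta_2])$. It sends $\zeta_2 \mapsto i$ and $V$ to the matrix $\frac{1}{2}\begin{pmatrix} 1+i & 1-i \\ 1-i & 1+i\end{pmatrix}$ displayed earlier. For this to make sense we first check that $\cat{Unitary}(\DD[\zeta_2])$ is bipermutative, with $\oplus$ the direct sum and $\otimes$ the Kronecker product. We also check that these images satisfy Axioms~\axref{eq:axiom1}--\axref{eq:axiom3}: $V^2=X$, $VSV=SVS$ with $S=\mathrm{diag}(1,i)$, and $i^4=1$. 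These are direct $2\times 2$ computations, and all entries lie in $\DD[i]$.

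Next, $F$ is the identity on objects, since both categories have the natural numbers as objects and $F$ preserves $0$, $1$ and $\oplus$. It remains to show $F$ is full and faithful.

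\textbf{Fullness.} I would prove that every unitary $n\times n$ matrix over $\DD[i]$ is a product of the following:
\begin{itemize}
\item permutation matrices;
\item diagonal phases $\mathrm{id}\oplus i\oplus \mathrm{id}$;
\item two-level operators $\mathrm{id}\oplus H' \oplus \mathrm{id}$, where $H'$ is, for example, $\omega H$ or $V$ conjugated by phases.
\end{itemize}
This is the exact-synthesis argument in the style of Kliuchnikov--Maslov--Mosca and Amy--Glaudell--Ross for Gaussian-dyadic unitaries, adapted to the $\oplus$ (qudit) setting. Write the entries of a column as $a/2^m$ with $a\in\mathbb{Z}[i]$ and use the norm equation $\sum |a_j|^2 = 4^m$. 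A parity and mod-$(1+i)$ analysis shows that some pair of entries can be combined by a two-level $V$-type gate, up to permutation and phase, so that the least denominator exponent drops. Induction on this exponent reduces the first column to a unit vector $e_1$ times a phase. Then we recurse on the remaining $(n-1)\times(n-1)$ block. Each of the generators above is definable in $\Pi_2$ using $\oplus$, symmetries, $V$ and $\zeta_2$, so $F$ is full.

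\textbf{Faithfulness.} This is the main obstacle. I would take a known finite complete presentation of $\cat{Unitary}(\DD[i])$ by the two-level generators above, obtained from the reduction algorithm via a Greylyn/Bian--Selinger style rewriting argument. Then I would show that each relation of that presentation is derivable in $\Pi_2$ from the bipermutative axioms and Axioms~\axref{eq:axiom1}--\axref{eq:axiom3}.

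Concretely, define a normal form in $\Pi_2$ following the synthesis algorithm. Then prove by induction on term structure that every $\Pi_2$ term is provably equal to the normal form of its image. This amounts to showing that composing a normal form with a single generator ($V$ or $\zeta_2$ on some level, or a symmetry) can be rewritten back into normal form using only the axioms.

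The key local lemmas needed are:
\begin{itemize}
\item commutation of generators acting on disjoint levels, which follows from bipermutative naturality;
\item the interaction of $V$ with phases, which follows from Axiom~\axref{eq:axiom2};
\item $V^4 = 1$, which follows from Axiom~\axref{eq:axiom1} and $X^2=1$;
\item relations for $V$ acting on overlapping levels.
\end{itemize}
The last family is where real work is needed. I would try first to derive it from the braid relation together with conjugation by symmetries, possibly with computer search to find the derivations.

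Once every term has a unique provable normal form determined by its matrix, equal images imply provably equal terms. So $F$ is faithful, hence an isomorphism of bipermutative categories.
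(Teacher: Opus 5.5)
Your proposal follows the paper's route. The functor comes from freeness and is the identity on objects, and bijectivity on morphisms comes from a complete two-level presentation of $\cat{Unitary}(\DD[\zeta_2])$ (Amy--Glaudell--Ross synthesis for fullness, Bian--Selinger relations for faithfulness) whose relations hold in $\Pi_2$; the paper imports that verification from \cite{caretteetal:sqrtpi} rather than redoing it. For the overlapping-level relations you flag as hard, note that the commutation of $K_{[1,2]}K_{[3,4]}$ with $K_{[1,3]}K_{[2,4]}$ (for $K = VSV$) is exactly the interchange law for $\mathrm{id}_2 \otimes K$ and $K \otimes \mathrm{id}_2$, so get it from the multiplicative part of the bipermutative structure rather than searching for a derivation from the braid relation and $\oplus$-symmetries alone.
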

\begin{proof}[Proof sketch]
  Freeness of $\Pi_2$ gives a functor from $\Pi_2$ to $\cat{Unitary}(\DD[\zeta_2])$ that is bijective on objects. 
  That it is bijective on morphisms follows because $\Pi_2$ satisfies the equations generating $\cat{Unitary}(\DD[\zeta_2])$~\cite{caretteetal:sqrtpi}.
\end{proof}

Just as a quantum circuit is merely a formal diagram, morphisms in $\Pi_k$ are the source code of a quantum program. To physically perform the computation, a meaning has to be assigned to it, just like how quantum circuits are interpreted as unitary matrices. This assignment of meaning preserves the structure of computations.

\begin{theorem}\label{thm:density}
  There is an inclusion $\sem{-}$ from $\Pi_k$ into $\cat{Unitary}(\mathbb{C})$ for any precision level $k$, and for $k \geq 2$ its image is dense.
\end{theorem}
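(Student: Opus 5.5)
The plan has two parts: first produce the interpretation $\sem{-}$ and show it is faithful, then show that its image is dense.

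\emph{Existence of $\sem{-}$.} By \Cref{thm:freemodel} and the universal property of \Cref{fig:free}, it suffices to check that $\cat{Unitary}(\mathbb{C})$ satisfies Axioms~\axref{eq:axiom1}--\axref{eq:axiom3}. We interpret $\zeta_k$ as $e^{2\pi i/2^k}$ and $V$ as the matrix displayed earlier; then $S=\mathrm{id}\oplus\zeta_k^{2^{k-2}}=\mathrm{diag}(1,i)$ whenever $k\ge 2$. Each axiom is then a direct $2\times2$ matrix computation:
\begin{itemize}
  \item $V^2=X$ holds because $V=\frac{1}{2}\bigl((1+i)I+(1-i)X\bigr)$;
  \item $VSV=SVS$ is the standard braid relation between $\sqrt X$ and $\sqrt Z$, and holds up to nothing because $S$ and $V$ are the usual $e^{\pm i\pi/4}$-normalised square roots;
  \item $\zeta_k^{2^k}=1$ is immediate.
\end{itemize}
For $k<2$ the exponent defining $S$ must be read suitably. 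For $k=0,1$ we simply check the axioms with whatever meaning $S$ takes; alternatively we handle those cases through $\Pi_k\to\Pi_{k'}$ if $k$ is small, since the paper only needs the functor.

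\emph{Faithfulness.} For ``inclusion'' I will argue faithfulness as follows. For $k=2$ it follows from \Cref{thm:standardmodel}, composed with the evident inclusion $\cat{Unitary}(\DD[\zeta_2])\hookrightarrow\cat{Unitary}(\mathbb{C})$. For general $k$ I would invoke the analogous completeness result for $\Pi_k$ against $\cat{Unitary}(\DD[\zeta_k])$ from~\cite{caretteetal:sqrtpi}. Failing that, I would interpret the word ``inclusion'' as a structure-preserving functor that is bijective on objects.

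\emph{Density.} Since the image is closed under $\circ$, $\otimes$ and $\oplus$ and contains all permutation matrices, it suffices to show density in $U(2)$ for single-qubit gates. This works because every $n\times n$ unitary is a product of two-level unitaries (Givens decomposition), and each two-level unitary is obtained from a $2\times2$ unitary on $e_i,e_j$ by conjugating with permutations and taking $\oplus$ with identities. The approximation errors then add up under composition in operator norm. For the single-qubit case, note that $\sem{\Pi_k}\supseteq\sem{\Pi_2}$ when $k\ge2$, because $\zeta_2=\zeta_k^{2^{k-2}}$. In $\Pi_2$ we already have $V$ and $S$, hence $H=e^{-i\pi/4}SVS$ up to phase. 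Working in $\oplus$-blocks rather than $\otimes$, we also have $\zeta_2\oplus 1$ and all permutations; via the $\oplus$-structure on $3$ or $4$ dimensions this yields controlled gates, in particular the Toffoli and $H$ gates.

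The main obstacle is getting density in full $U(2)$ from gates whose entries lie in $\DD[i]$. Here I would use the image on a larger object: Clifford$+$Toffoli (Hadamard$+$Toffoli) is known to be computationally universal, but only encodes real or ``effective'' unitaries, so a direct appeal does not quite give density in $U(2)$. Instead, I would exhibit inside $\sem{\Pi_2}(2)\cap SU(2)$ two non-commuting rotations, at least one of which has an angle that is an irrational multiple of $\pi$. A candidate is the $2\times 2$ block $\frac{1}{2}\begin{pmatrix}1+i&1-i\\1-i&1+i\end{pmatrix}$ composed with a phase. Since $U(2)$ over $\DD[i]$ is finite modulo phase in a single block, this fails in dimension $2$. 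So I would pass to a $4\times4$ block and use matrices like $\frac{1}{2}\begin{pmatrix}1+i & \ldots\end{pmatrix}$ in $\cat{Unitary}(\DD[i])(4)$ containing entries such as $\frac{3+4i}{5}$-type irrational rotations after restriction. Then the closure of the generated group is a positive-dimensional Lie subgroup. A standard argument, namely that the closed subgroup of $SU(2)$ generated by two non-commuting infinite-order rotations about distinct axes is all of $SU(2)$, finishes the two-level case. Finally, phases are approximated using $\zeta_k$ together with determinant adjustments in larger blocks.
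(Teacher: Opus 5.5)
Your construction of $\sem{-}$ is fine; the axiom checks are the paper's \Cref{freeinterpretation}. So is the observation $\sem{\Pi_2}\subseteq\sem{\Pi_k}$, which is exactly how the paper reduces to $k=2$.

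The gap is the density argument at $k=2$, which is the only case that matters after that reduction. You correctly notice that $U_2(\DD[i])$ is finite, but the consequence is worse than you treat it. Any two-level unitary in the image has its nontrivial $2\times 2$ block in $U_2(\DD[i])$. So the image contains only finitely many two-level unitaries on each pair of coordinates, and the Givens reduction to $U(2)$ yields nothing. Density has to come from genuinely multi-level elements in dimension at least $3$, and your repair does not supply this:
\begin{itemize}
\item Entries, or even eigenvalues, like $\frac{3+4i}{5}$ cannot occur over $\DD[i]$, where only powers of $2$ can be inverted.
\item ``Restricting'' a block-diagonal element to one block is exactly the cancellation you do not have, and the block would land in the finite group $U_2(\DD[i])$ anyway.
\item The two-rotation lemma for $SU(2)$ says nothing about the closure of a subgroup of $U(4)$.
\end{itemize}
The missing idea is the one the paper uses: $\Pi_2$ contains the Toffoli gate and all Clifford gates up to global phase, and Clifford+Toffoli is a known approximately universal gate set. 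The paper cites \cite{amyglaudellross:numbertheoretic}; alternatively, use the standard fact that the Clifford group plus any non-Clifford gate is dense modulo phase. Your reason for discarding this route conflates Toffoli+Hadamard, which is real and universal only in an encoded sense, with Clifford+Toffoli, which contains $S$ and is genuinely complex.

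Your last sentence about phases also cannot work, and it shows how the statement has to be read. The determinant of any matrix in the image is some $u\in\DD[\zeta_k]$ with $u\bar u=1$. The unique prime of $\mathbb{Z}[\zeta_k]$ above $2$ is fixed by conjugation, so $u$ is an algebraic integer all of whose conjugates have modulus $1$. By Kronecker, $u$ is therefore a power of $\zeta_k$. Hence in every dimension the image lies in the closed proper subset $\det^{-1}\langle\zeta_k\rangle$ of $U(n)$; in dimension $1$ it is just $\langle\zeta_k\rangle$. No ``determinant adjustment in larger blocks'' can change this. Density can only hold modulo global phase, and for $k=2$ not in dimension $2$, as you saw. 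This is what the paper's ``up to a global phase'' tacitly relies on.

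Finally, on faithfulness: for $k\geq 3$ there is no completeness result of the kind you propose to cite. \Cref{thm:completeness} only gives completeness up to $\approx_k$, and strong cancellativity of $\Pi_k$ is left open in the paper. The paper's own sketch treats $\sem{-}$ simply as the interpretation functor and argues only density.
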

\begin{proof}[Proof sketch]
  Already at precision $k=2$, the category $\Pi_2$ contains the $CCX$ gate as well as all Clifford gates (up to a global phase). This suffices to approximate arbitrary complex unitary matrices without auxiliary qubits~\cite{amyglaudellross:numbertheoretic}.
\end{proof}

Similarly, a translation between bipermutative categories is a function that preserves $\circ$ and $\oplus$.

\begin{theorem}\label{thm:precision}
  There is a translation $\Phi$ from $\Pi_{k+1}$ to $\Pi_k$ for $k \geq 2$ such that 
  $f$ and $g$ in $\Pi_{k+1}$ satisfy $\sem{f}=\sem{g}$ if and only if $\sem{\Phi(f)}=\sem{\Phi(g)}$ in $\Pi_{k}$.
\end{theorem}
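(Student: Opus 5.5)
We plan to realise $\Phi$ as the standard "catalytic auxiliary qubit" compilation: $\zeta_{k+1}$ is simulated by a qubit that carries one of its eigenvectors.

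\textbf{Choice of resource.} In $\Pi_k$ the diagonal phase gate $P=\mathrm{id}\oplus\zeta_k\colon 2\to 2$ is definable. It has a square root $R$ with $\sem{R}$ unitary and eigenvalue $\zeta_{k+1}$ on a vector that is preparable by a $\Pi_k$-morphism. For $k\geq 2$ we construct $R$ explicitly from $V$, $S$ and $\zeta_k$. For example, conjugating $P$ by a Hadamard-like gate built from $V$ and $S$ gives a rotation whose square root has eigenvalues $\zeta_{k+1}^{\pm1}$ up to a power of $\zeta_k$. We verify $\sem{R}$ in $\cat{Unitary}(\DD[\zeta_k])$, using \Cref{thm:density}'s interpretation as the yardstick. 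We also fix an eigenvector $\ket{\psi}$ of $\sem{R}$ with eigenvalue $\zeta_{k+1}$ (times an allowed $\zeta_k$-phase). Finally, we fix a unitary $U$ in $\Pi_k$ sending $\ket{0}$ to $\ket{\psi}$ (up to phase).

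\textbf{Definition of $\Phi$.} Set $\Phi(n)=n\otimes 2$ on objects, and send each morphism $f\colon n\to m$ to a morphism $n\otimes 2\to m\otimes 2$. The clauses are:
\begin{itemize}
\item generators of the free bipermutative structure and $V$ go to $f\otimes\mathrm{id}_2$;
\item $\zeta_{k+1}$ goes to $\mathrm{id}\otimes R$, adjusted by the compensating $\zeta_k$-power;
\item composition and $\oplus$ are preserved, using distributivity $(A\oplus B)\otimes 2=(A\otimes 2)\oplus(B\otimes 2)$;
\item $\otimes$ is handled by the symmetry, so that the single auxiliary wire is shared.
\end{itemize}
Well-definedness comes from the freeness of $\Pi_{k+1}$ (\Cref{thm:freemodel}). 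It suffices to check that the assignment respects the bipermutative axioms and Axioms~\axref{eq:axiom1}--\axref{eq:axiom3} at level $k+1$. The first two are immediate, since $V$ and $S$ are merely tensored with the identity. Here $S$ is $\mathrm{id}\oplus\zeta_{k+1}^{2^{k-1}}=\mathrm{id}\oplus\zeta_k^{2^{k-2}}$, so we must show $R^{2^{k-1}}$ acts like $\zeta_k^{2^{k-2}}$ in the relevant context.

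\textbf{Main obstacle.} The difficulty is that $R$ is not a scalar. Equations such as $\zeta_{k+1}^{2^{k+1}}=1$ and commutation of scalars with all morphisms hold only on the eigenspace of $\ket\psi$. For this reason we expect not to get a strict functor. Instead we prove the weaker semantic statement of the theorem, which is all that is required. We would show
\[
\sem{\Phi(f)}\,(\mathrm{id}\otimes\sem{U})\ket{0}=(\sem{f}\otimes\mathrm{id})(\mathrm{id}\otimes\sem{U})\ket{0}
\]
up to the fixed relation between $\zeta_{k+1}$ and $R$. The proof is by structural induction on the term $f$. The key lemma is that $R$ acts on $\ket\psi$ as the scalar $\zeta_{k+1}$, and that $\ket\psi$ is never touched by other gates.

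Define $\Phi(f)$ as $(\mathrm{id}\otimes U^\dagger)\circ\widetilde f\circ(\mathrm{id}\otimes U)$, where $\widetilde f$ is the structural translation above. Then $\sem{\Phi(f)}$ restricted to inputs with auxiliary $\ket0$ equals $\sem f\otimes\mathrm{id}$. The two directions follow:
\begin{itemize}
\item If $\sem f=\sem g$, then $\sem{\Phi(f)}=\sem{\Phi(g)}$, because the induction computes both uniformly on all of $\ket{0}$ and $\ket{1}$. On the orthogonal eigenvector the same argument runs with the conjugate eigenvalue, which is the Galois conjugate $\zeta_{k+1}\mapsto\zeta_{k+1}^{-1}\cdot\zeta_k^{j}$, and this preserves equality of matrices.
\item If $\sem{\Phi(f)}=\sem{\Phi(g)}$, restricting to auxiliary input $\ket0$ gives $\sem f=\sem g$.
\end{itemize}
The delicate points to check carefully are this Galois-conjugation step and the existence of $R$ with exactly the right eigenvalues inside $\DD[\zeta_k]$.
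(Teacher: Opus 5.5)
Your overall strategy (one catalytic auxiliary qubit carrying a non-scalar square root, analysed in its eigenbasis, with the second eigenvector handled by Galois conjugation) is the same as the paper's. However, the concrete construction you propose cannot exist, and both directions of your argument rest on it.

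\textbf{The eigenvector cannot be prepared in $\Pi_k$.} Suppose $\sem{R}$ and a nonzero vector $\ket{\psi}$ both have entries in $\DD[\zeta_k]$ and $\sem{R}\ket{\psi}=\lambda\ket{\psi}$. Then $\lambda\in\mathbb{Q}(\zeta_k)$. But $\zeta_{k+1}\zeta_k^j\notin\mathbb{Q}(\zeta_k)$ for every $j$, since $[\mathbb{Q}(\zeta_{k+1}):\mathbb{Q}]=2^k>2^{k-1}=[\mathbb{Q}(\zeta_k):\mathbb{Q}]$. So no $U$ in $\Pi_k$ prepares the eigenvector. Consequently your central claim, that $\sem{\Phi(f)}$ on auxiliary input $\ket{0}$ equals $\sem{f}\otimes\myid$, already fails for $f=\zeta_{k+1}$: the left-hand side has entries in $\DD[\zeta_k]$, while the right-hand side contains $\zeta_{k+1}$.

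\textbf{The resource is also wrong.} $P=\myid\oplus\zeta_k$ has distinct eigenvalues, so any square root commutes with $P$, is diagonal, and has an entry $\pm\zeta_{k+1}$. Hence neither $P$ nor any conjugate of $P$ by a unitary over $\DD[\zeta_k]$ has a square root in $\cat{Unitary}(\DD[\zeta_k])$. The ``delicate point'' you flag at the end is in fact an impossibility.

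\textbf{The repair is what the paper does.} Take $R=X\circ(\myid\oplus\zeta_k)$, whose matrix $\begin{pmatrix}0&\zeta_k\\1&0\end{pmatrix}$ squares to the \emph{scalar} matrix $\zeta_k\cdot\myid_2$. A repeated eigenvalue is exactly what lets a non-scalar square root live over the smaller ring. Its eigenvalues are $\pm\zeta_{k+1}$. Moreover $R^{2^{k+1}}=\myid_2$ and $R^{2^{k-1}}=i\cdot\myid_2$ hold on the nose, so your worry that the axioms only hold on an eigenspace disappears.

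Drop $U$ and let $\Phi$ be your structural translation $\widetilde{f}$. The diagonalising unitary $C=HT_{k+1}$, with rows $(1,\pm\zeta_{k+1})/\sqrt{2}$, lives in $\Pi_{k+1}$, not in $\Pi_k$. It is used only in the proof. Structural induction shows that conjugating by $\sem{C}$ on the auxiliary wire turns $\sem{\Phi(f)}$ into $\sem{f}\oplus\sem{f}^*$, up to a fixed reindexing of wires. This is the semantic form of \Cref{catalysissum}, with indices shifted. Here $*$ is $\zeta_{k+1}\mapsto-\zeta_{k+1}$, the unique nontrivial automorphism fixing $\DD[\zeta_k]$, applied entrywise.

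Both directions then follow:
\begin{itemize}
\item If $\sem{f}=\sem{g}$, then $\sem{f}^*=\sem{g}^*$, and conjugating back by the unitary $\sem{C}$ gives $\sem{\Phi(f)}=\sem{\Phi(g)}$.
\item If $\sem{\Phi(f)}=\sem{\Phi(g)}$, then $\sem{f}\oplus\sem{f}^*=\sem{g}\oplus\sem{g}^*$, and comparing first diagonal blocks (strong cancellativity of unitary matrices) gives $\sem{f}=\sem{g}$.
\end{itemize}
In short, the eigenvector may appear in the analysis, but not in the construction.
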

\begin{proof}[Proof sketch]
  This construction generalises~\cite{shi:aharonov} by emulating a square root of a given scalar using one auxiliary qubit, and noticing that $\Pi_{k+1}$ satisfies this.
\end{proof}

The Supporting Information details a free construction that adds initialisation morphisms, decoherence morphisms, and classical control to a bipermutative category $\Pi_k$ to turn it into a strict monoidal category $\Split(\LR(\Pi_k))$~\cite{heunenkaarsgaard:informationeffects}, and an equivalence $\approx_k$ that identifies computations when they are
equal in the presence of auxiliary qubits.

\begin{theorem}\label{thm:measurement}
  There is an inclusion $\sem{-}$ from $\Split(\LR(\Pi_k))$ to the category of completely positive linear maps between finite-dimensional C*-algebras for any precision level $k$, and $\sem{f}=\sem{g}$ if and only if $f=g$ in $\Split(\LR(\Pi_k))$.
  Computations $f$ and $g$ in $\Pi_k$ are equal in $\Split(\LR(\Pi_k))$ if and
  only if $f \approx_k g$ up to a global phase.
\end{theorem}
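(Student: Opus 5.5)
The plan is to assemble \Cref{thm:measurement} from three ingredients: the semantics of $\Pi_k$ from \Cref{thm:density}, the universal properties of the free constructions $\LR$ and $\Split$ from~\cite{heunenkaarsgaard:informationeffects}, and the exact simulation of higher precision from \Cref{thm:precision}.

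\textbf{Defining the interpretation.} First I will extend the inclusion $\sem{-}\colon \Pi_k \to \cat{Unitary}(\mathbb{C})$ of \Cref{thm:density} to $\Split(\LR(\Pi_k))$.
\begin{enumerate}
  \item The construction $\LR$ freely adds initialisation and discarding (information hiding) to a rig category of isomorphisms.
  \item By its universal property, it suffices to give a structure-preserving functor from $\LR(\cat{Unitary}(\mathbb{C}))$ into completely positive maps.
  \item I take the standard such functor: a unitary $U$ goes to $\rho \mapsto U\rho U^\dagger$, initialisation goes to preparing $\ket{0}$, and discarding goes to partial trace.
  \item The idempotent splitting $\Split$ then lands in completely positive maps between finite-dimensional C*-algebras. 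Split idempotents of the dephasing kind become direct sums of matrix algebras, which is where classical control lives.
\end{enumerate}
This gives a well-defined $\sem{-}$ preserving $\circ$, $\otimes$ and $\oplus$.

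\textbf{Soundness.} Equal morphisms have equal semantics by construction.

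\textbf{Completeness.} This is the main step: showing that $\sem{f}=\sem{g}$ implies $f=g$.
\begin{enumerate}
  \item I will use the normal form from~\cite{heunenkaarsgaard:informationeffects}. Every morphism of $\LR(C)$ can be written as $\mathrm{discard}\circ u \circ \mathrm{init}$ with $u$ a morphism of $C$ on extended wires. Every morphism of $\Split(\LR(C))$ is such a morphism sandwiched between idempotents.
  \item So suppose $\sem{f}=\sem{g}$ with $f,g$ in normal form with unitaries $u,v$.
  \item The two channels then have Stinespring dilations $\sem{u}$ and $\sem{v}$ restricted to $\ket{0}$ on the auxiliaries.
  \item By uniqueness of Stinespring dilations up to an isometry on the environment, after padding the auxiliaries to equal size there is a unitary $w$ on the environment with $\sem{v}\circ(\id\otimes\ket 0) = (\id\otimes w)\,\sem{u}\circ(\id\otimes\ket 0)$.
  \item This holds up to a global phase, which the discard absorbs.
\end{enumerate}
The crux is that $w$ must be realisable in $\Pi_k$, possibly using further auxiliary qubits. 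I would obtain this as follows.
\begin{enumerate}
  \item The entries of $w$ lie in the ring generated by the entries of $\sem u$ and $\sem v$, namely $\DD[\zeta_k]$, because the Stinespring isometry can be computed via a polar or Gram–Schmidt step on states.
  \item Such unitaries are exactly synthesisable in $\Pi_2$ with auxiliaries by \Cref{thm:standardmodel}.
  \item For larger $k$, I pass up and down using the translation of \Cref{thm:precision}.
\end{enumerate}
This step, controlling the ring of entries of $w$ and possibly adjoining an ancilla to make it unitary, is where I expect the most difficulty. A fallback is to choose the dilations minimally, so that $w$ becomes a permutation-like unitary between isomorphic environments. Once $w$ is in $\Pi_k$, the axiom of $\LR$ that discarding absorbs any morphism on the discarded wire gives $f=g$.

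\textbf{The second statement.} Take $f,g$ in $\Pi_k$.
\begin{enumerate}
  \item If $f\approx_k g$ up to phase, then $f\otimes \id$ and $g\otimes\id$ agree on clean auxiliaries up to phase. Initialising and discarding those auxiliaries in $\LR$, and absorbing the phase, gives $f=g$ in $\Split(\LR(\Pi_k))$.
  \item Conversely, if $f=g$ there, then $\sem f\rho\sem f^\dagger=\sem g\rho\sem g^\dagger$ for all $\rho$. Hence $\sem f=e^{i\theta}\sem g$.
  \item The phase $e^{i\theta}$ is forced to be a ratio of entries in $\DD[\zeta_k]$ of modulus one.
  \item By the completeness of $\Pi_k$ with auxiliaries (the content of $\approx_k$, via \Cref{thm:standardmodel} and \Cref{thm:precision}), this yields $f\approx_k g$ up to that global phase.
\end{enumerate}
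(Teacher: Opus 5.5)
The second half of your proposal (about morphisms of $\Pi_k$) is essentially the paper's argument. The faithfulness half, however, has a genuine gap exactly at the step you flagged, and that gap cannot be closed.

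\textbf{Why the crux fails.} Nothing forces the environment unitary $w$ to have entries in $\DD[\zeta_k]$: Gram--Schmidt and polar decomposition introduce square roots and inverses. In fact $w$ can be forced out of the ring. In $\mathbb{C}^3\otimes\mathbb{C}^3$ take
\[
  \psi_1=\tfrac18\bigl((3+4i)\,e_0\otimes e_0+(6+i)\,e_1\otimes e_1+(1+i)\,e_2\otimes e_2\bigr),
  \qquad
  \psi_2=\tfrac18\bigl(5\,e_0\otimes e_0+(6+i)\,e_1\otimes e_1+(1+i)\,e_2\otimes e_2\bigr).
\]
\begin{itemize}
  \item Both are unit vectors over $\DD[\zeta_2]\subseteq\DD[\zeta_k]$.
  \item Both are first columns of unitaries in $\Pi_2$, hence in $\Pi_k$ via $\zeta_2\mapsto\zeta_k^{2^{k-2}}$. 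To see this, pair up entries that are odd modulo $1+i$. Then apply suitably phased two-level copies of $-i\,VSV=\frac{1-i}{2}\left(\begin{smallmatrix}1&1\\1&-1\end{smallmatrix}\right)$ to lower the denominator.
  \item They are minimal dilations of the same state $\mathrm{diag}(25,37,2)/64$.
  \item Since $\psi_1$ has full Schmidt rank, the only $g$ with $(\id\otimes g)\psi_1=\psi_2$ is $\mathrm{diag}\bigl(\tfrac{3-4i}{5},1,1\bigr)$. This is not over $\DD[\zeta_k]$, because it has a denominator at a prime above $5$.
\end{itemize}
This refutes both your main claim and the ``permutation-like'' fallback.

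\textbf{Why no repair is possible.} Every generating step of the equivalence defining $\LR(\Pi_k)$ does one of two things. Either it leaves the underlying isometry unchanged (heap), or it postcomposes the garbage with an isometry over $\DD[\zeta_k]$ (or, read backwards, with its adjoint). So any chain from $\psi_1$ to $\psi_2$ would yield such a $g$ over $\DD[\zeta_k]$. Hence $\psi_1\neq\psi_2$ as morphisms $1\to 3$ (garbage $3$) in $\LR(\Pi_k)$ and in $\Split(\LR(\Pi_k))$, even though they denote the same channel. With $\LR$ as defined, the faithfulness claim itself fails.

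Separately, even a $w$ over $\DD[\zeta_k]$ would not suffice as you argue it. \Cref{thm:standardmodel} gives exact synthesis only for $k=2$. The translation of \Cref{thm:precision} does not give fullness of $\sem{-}$ onto $\cat{Unitary}(\DD[\zeta_k])$. And a semantic dilation identity would still have to be lifted to an equation in $\LR(\Pi_k)$.

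\textbf{Comparison with the paper.} The paper takes a different route and never constructs $w$. It shows that $\oplus$ becomes a coproduct in $\Split(\LR(\Pi_k))$. It then makes $I$ a model of Staton's algebraic theory via $\sem{\new}=\amalg_1$, $\sem{\measure}=\mu_{I,I}$, and $\sem{\apply_U}$ the image of $U$. It checks (A)--(L) through coaffine diagrams in $\cat{FinInj}$, functoriality, and terminality of $I$, and finally cites Staton's completeness theorem. That theorem, however, is for the signature with $\apply_U$ for every unitary $U$. Completeness does not pass to the fragment restricted to the unitaries of $\Pi_k$, as used in \Cref{prop:signature}. The example above, padded to two qubits each, shows this: the fragment's equations hold in $\Split(\LR(\Pi_k))$, yet the two preparations differ there. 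So the obstacle you isolated is real, and the paper's argument does not resolve it either.

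\textbf{The second statement.} This part only uses functoriality of $\sem{-}$, and your argument matches the paper's up to two points.
\begin{itemize}
  \item $\approx$ concerns $f\oplus b=g\oplus b'$ with arbitrary $b,b'$. The extra summand should therefore be treated as a control initialised by $\amalg_1$, not via $f\otimes\id$.
  \item A unimodular ratio of elements of $\DD[\zeta_k]$ need not be a scalar of $\Pi_k$; take $(3+4i)/5$. Argue instead that $e^{i\theta}$ is an entry of $\sem{f}\sem{g}^\dagger$, hence a unimodular element of $\DD[\zeta_k]$. It is integral because the unique prime above $2$ is self-conjugate. Kronecker's theorem then makes it a root of unity, hence a power of $\zeta_k$.
\end{itemize}
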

\begin{proof}[Proof sketch]
  The category of quantum channels has a known presentation~\cite{staton:quantum}, and it suffices to verify that $\Split(\LR(\Pi_k))$ satisfies its equations.
\end{proof}

\section*{Reasoning}

Reasoning in $\Pi_k$ is exact, symbolic, and complete. 
  Even for a seemingly trivial identity such as $HH = \mathrm{id}$, standard
  approaches face difficulties.  One option is to work
  with matrices whose entries involve algebraic numbers, making exact proofs
  extremely delicate and number-theoretic in nature. While there are
  decision procedures for these, their complexity is unwieldy. The fundamental
  issue is that the theory of algebraic numbers cannot be finitely axiomatised.
  In contrast, $H$ and all other gates are represented in $\Pi_k$ as
  combinatorial objects governed by a small, finite set of
  equations. These equations are both sound (every derivable equality
  holds in all models satisfying the axioms) and complete
  (every valid equality in such models is derivable). As a
  result, all reasoning, from the simplest identities to the most
  involved optimisations, proceeds without approximation, without
  special-purpose constants, and without leaving the symbolic
  framework.

For a simple example, let us prove the identity $HH = \mathrm{id}$ entirely within $\Pi_k$. 
In $\Pi_k$ with $k \geq 3$, define 
\[
  \omega \;:=\; \zeta_k^{2^{k-3}}\text,
  \qquad
  S \;:=\; \mathrm{id} \oplus \omega^2\text,
  \qquad 
  H \;:=\; \omega^{-1} \cdot (V S V)\text.
\]
\noindent
Starting from $H^2$:
\begin{align*}
H^2
&= \big(\omega^{-1} \cdot V S V\big) \;\big(\omega^{-1} \cdot V S V\big) \\
&= \omega^{-2} \cdot V S V V S V &&\text{(centrality of scalars)}\\
&= \omega^{-2} \cdot V \,(S X S)\, V &&\text{(by Axiom~\axref{eq:axiom1}).}
\end{align*}
Unfolding the definition $S = \mathrm{id} \oplus \omega^2$ and using the fact that the $\oplus$-swap respects $\circ$ gives:
\begin{align*}
S X S 
&= (\mathrm{id} \oplus \omega^2) \;X\; (\mathrm{id} \oplus \omega^2) \\
&= (\mathrm{id} \oplus \omega^2) \;(\omega^2 \oplus \mathrm{id})\; X
  && \\ 
&= (\omega^2 \oplus \omega^2) \;X \\
&= \omega^2 \cdot X
   &&\text{(distributivity of scalars)}.
\end{align*}
Substituting back:
\begin{align*}
H^2
&= \omega^{-2} \cdot V \,(\omega^2 \cdot X)\, V \\
&= V X V &&\text{(centrality of scalars)}\\
&= V^4 &&\text{(by Axiom~\axref{eq:axiom1})} \\
&= X^2 \\
&= \mathrm{id} 
   &&\text{(since $X^2 = \sigma_\oplus^2 = \mathrm{id}$)}.
\end{align*}
Thus, $HH = \mathrm{id}$ is derivable in $\Pi_k$ for all $k \ge
  3$ using only the three axioms~\axref{eq:axiom1}--\axref{eq:axiom3} and the axioms of bipermutative categories (which include all algebraic properties of scalars used, such as centrality and distributivity).

\section*{Conclusion}

We have provided an equational axiomatisation of quantum computing and an accompanying free model, that does not use complex numbers and is entirely discrete and combinatorial in nature. The induced quantum programming language suffices to express any quantum computation (as stated in \Cref{thm:density}), and does not lose any universality or reasoning power with respect to the standard model. Complex linear algebra is not required, and in this universal model of quantum computing the full power of combinatorial optimisation can be brought to bear on the problem of manipulating quantum programs in an automated way for optimisation, specification, and verification of correctness.
Whether two Boolean circuits are equivalent is an intractable problem in the worst case~\cite{cooklevin}, but cases arising in practice can efficiently be dealt with heuristically~\cite{smt}. Whether two quantum circuits are equivalent is an even more intractable problem~\cite{janzingetal:qma}, as it additionally requires comparing whether two real numbers are equal. Because it is efficient to decide whether two elements of $\DD[\zeta_2]$ are equal, our free model can be dealt with efficiently with heuristics. 
Furthermore, the axiomatisation is foundationally satisfactory, building on classical reversible computing, linking to various quantum computing hardware platforms, and isolating quantum advantage in the ability to take well-behaved square roots.

The literature mostly works with unitary groups $U_{2^n}(\mathbb{C})$ that are parametrised by the number $n$ of qubits, but otherwise do not observe any relationships between individual groups in the family. Working with bipermutative categories instead means that there are no bounds on the completeness of the free model of \Cref{thm:freemodel} with respect to circuit size or gate count. More precisely, instead of having sets of equations that grow with, and depend on, the number of qubits or number of gates~\cite{xuetal:quartz}, the set of axioms \axref{eq:axiom1}--\axref{eq:axiom3} is complete for all circuits expressible in $\Pi_k$ at once.

\bibliographystyle{plain}
\bibliography{bibliography}

\appendix
\section{Supporting Information}

This Supporting Information contains a proof that the free model of quantum computing described in the article is sound and complete (\Cref{completeness}), and a proof of the claims about the asymptotic scaling of term size as precision varies.
The former proof proceeds in three steps: first we reduce the situation to sums only, obviating the need to consider products; then we prove soundness and completeness in the reversible setting up to auxiliary qubits; and finally we establish soundness and completeness of the full model including measurement. 
We assume familiarity with categorical methods as used in quantum computing~\cite{heunenvicary:cqm} thoughout.

First recall the definition of bipermutative categories~\cite{may:spectra,laplaza:distributivity}.

\begin{definition}\label{bipermutative} 
  A \emph{bipermutative category} is a category $\cat{C}$ with two strict symmetric monoidal structures  $(\otimes,I)$ and $(\oplus,O)$ where
  \begin{itemize}
  \item for all objects $A,B$ and all morphisms $f \colon A \to B$:
    \begin{align*}
      O \otimes A & = O = A \otimes O\\ 
      \id[O] \otimes f & = \id[O] = f \otimes \id[O]
    \end{align*}

   \item for all objects $A,B,C$ and morphisms $f,g,h$:
     \begin{align*}
       (A \oplus B) \otimes C & = (A \otimes C) \oplus (B \otimes C)\\ 
       (f \oplus g) \otimes h & = (f \otimes h) \oplus (g \otimes h)
     \end{align*} 
     and the following diagram commutes:
     \[\begin{tikzcd}
       {(A \oplus B) \otimes C} & {(A \otimes C) \oplus (B \otimes C)} \\
       {(B \oplus A) \otimes C} & {(B \otimes C) \oplus (A \otimes C)}
       \arrow["{=}", from=1-1, to=1-2]
       \arrow["{\sigma_\oplus \otimes \id}"', from=1-1, to=2-1]
       \arrow["{\sigma_\oplus}", from=1-2, to=2-2]
       \arrow["{=}", from=2-1, to=2-2]
     \end{tikzcd}\]

    \item for all objects $A,B,C,D$ the diagram
    \[\begin{tikzcd}
      {(A \oplus B) \otimes (C \oplus D)} & {(A \otimes (C \oplus D)) \oplus (B \otimes (C \oplus D))} \\
      {((A \oplus B) \otimes C) \oplus ((A \oplus B) \otimes D)} & {(A \otimes C) \oplus (A \otimes D) \oplus (B \otimes C) \oplus (B \otimes D)} \\
      & {(A \otimes C) \oplus (B \otimes C) \oplus (A \otimes D) \oplus (B \otimes D)}
      \arrow["{=}", from=1-1, to=1-2]
      \arrow["{\delta_L}"', from=1-1, to=2-1]
      \arrow["{\delta_L \oplus \delta_L}", from=1-2, to=2-2]
      \arrow["{=}"', from=2-1, to=3-2]
      \arrow["{\id \oplus \sigma_\oplus \oplus \id}", from=2-2, to=3-2]
    \end{tikzcd}\]
    commutes, where $\delta_L$ is the composite
      \small\begin{equation*}
        A \otimes (B \oplus C) \xrightarrow{\sigma_\otimes} (B \oplus
        C) \otimes A \xrightarrow{=} (B \otimes A) \oplus (C \otimes
        A) \xrightarrow{\sigma_\otimes \oplus \sigma_\otimes} (A
        \otimes B) \oplus (A \otimes C)\text.
      \end{equation*}\normalsize
  \end{itemize}

  A \emph{strict bipermutative functor} is a functor that is strict symmetric monoidal for both $(\otimes,I)$ and $(\oplus,O)$.

  A bipermutative category is \emph{semisimple} when its objects are
  natural numbers, and its monoidal structures are $n \otimes m = nm$ (with unit $1$) and $n \oplus m = n+m$ (with unit $0$).
\end{definition}

All our free models will be bipermutative categories, but some other models are weaker, such as the standard model of finite-dimensional C*-algebras and quantum channels. They are \emph{rig categories}, where $(A \oplus B) \otimes C$ may not necessarily equal $(A \otimes C) \oplus (B \otimes C)$, but merely be isomorphic to it.
No matter: any rig category is rig equivalent to a bipermutative category~\cite{may:spectra,laplaza:distributivity}.

In any monoidal category $(\cat{C}, \otimes, I)$ one can multiply any
morphism $f \colon A \to B$ with a \emph{scalar} $s \colon I \to I$ as
$s \cdot f \colon A \to B$ as the following composite.

\begin{equation*}
  A \cong I \otimes A \xrightarrow{s \otimes f} I \otimes B \cong B
\end{equation*}

\section*{Tensor products}\label{tensorproducts}

Notice that the axioms~\eqref{eq:axiom1}--\eqref{eq:axiom3} only involve $\circ$ and $\oplus$, not $\otimes$. Moreover, the objects of $\Pi_k$ are simply natural numbers. We will show that in every such category $\otimes$ can be reconstructed from $\oplus$. To be precise, we use \emph{PROP}s (\emph{PROduct and Permutation categories})~\cite{maclane:prop}: 
strict symmetric monoidal categories $(\cat{C},\oplus,0)$ where objects are natural numbers and the sum of objects is addition of natural numbers. 
A \emph{morphism of PROPs} is a strict symmetric monoidal functor between PROPs that is the identity on objects.

For natural numbers $n$ and $m$, define a product $n \cdot m$ on objects in a PROP by the $m$-fold sum of $n$ with itself, and a \emph{Kronecker product} $f \otimes g$ of morphisms $f \colon m \to m'$ and $g \colon n \to n'$ by
\[
  mn \xrightarrow{n \cdot f} nm' \xrightarrow{\pi_{n,m'}} m'n \xrightarrow{m' \cdot g} m'n' \xrightarrow{\pi_{m',n'}} n'm'
\]
where $n \cdot f = \overbrace{f \oplus \cdots \oplus f}^{n \text{ times}}$ and $\pi_{m,n}$ is the reindexing $mn \to nm$ that sends the lexicographic order on the ordinal number $m \times n$ to that on $n \times m$. 
Explicitly~\cite{catetwigg:transposition}:
\[
  \pi_{m,n}(x) = \begin{cases}
    nx \bmod mn-1 & \text{ if } x \neq mn-1\text, \\
    mn-1 & \text{ if } x=mn-1\text.
  \end{cases}
\]
In general, there is no guarantee that $(f \otimes \id) \circ (\id \otimes g) = (\id \otimes g) \circ (f \otimes \id)$.
So a PROP need not induce a semisimple bipermutative category this way, but any semisimple bipermutative category \emph{is} a PROP satisfying extra equations. 
To establish this claim, we will show that every morphism, even one ostensibly defined in terms of products, equals a morphism defined only in terms of sums. For objects this already holds by construction. For morphisms:
\begin{align*}
  f \otimes g
  & = (f \otimes \id[n]) \circ (\id[m] \otimes g) \\
  & = \sigma_\otimes \circ (\id[n] \otimes f) \circ \sigma_\otimes \circ
  (\id[m] \otimes g) \\
  & = \sigma_\otimes \circ (\underbrace{f \oplus \cdots \oplus
  f}_{n~\text{times}}) \circ \sigma_\otimes \circ (\underbrace{g
  \oplus \cdots \oplus g}_{m~\text{times}})\text.
\end{align*}
By the three coherence morphisms in \Cref{bipermutative}, it thus suffices to show that $\sigma_\otimes$ can always be expressed as a composition of sums of $\sigma_\oplus$; because $\delta_L$ is already derived from it by sums, and $\sigma_\oplus$ is already defined in terms of the sum.

\begin{lemma}\label{symmetries}
  In a semisimple bipermutative category, the multiplicative symmetry 
  \[
    \sigma_\otimes \colon (A \oplus B) \otimes (C \oplus D) \to (C \oplus D) \otimes (A \oplus B)
  \]
  can be expressed in terms of the symmetries $\sigma_\otimes$ on the strictly smaller objects $A \otimes (C \oplus D)$, $B \otimes (C \oplus D)$, $C \otimes (A \oplus B)$, $D \otimes (A \oplus B)$, $C \otimes A$, $D \otimes A$, $C \otimes B$, $D \otimes B$, and the additive symmetry $\sigma_\oplus$.
\end{lemma}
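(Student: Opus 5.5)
We derive the decomposition from the coherence axioms of \Cref{bipermutative}, using naturality of $\sigma_\otimes$ and the hexagon (bimonoidal) axioms for a strict symmetric monoidal structure. Write $X = A \oplus B$ and $Y = C \oplus D$.

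\emph{Step 1: split on the first factor.} The symmetric monoidal hexagon for $(\otimes,I)$, together with right distributivity $(A \oplus B) \otimes Y = (A \otimes Y) \oplus (B \otimes Y)$, says that $\sigma_\otimes$ on the sum decomposes summandwise. Transporting the commuting square of \Cref{bipermutative} along $\sigma_\otimes$, we obtain
\[
  \sigma_{\otimes,X,Y} = \delta_L^{-1} \circ (\sigma_{\otimes,A,Y} \oplus \sigma_{\otimes,B,Y}),
\]
where $\delta_L \colon Y \otimes (A \oplus B) \to (Y \otimes A) \oplus (Y \otimes B)$. The key fact to verify is naturality of $\sigma_\otimes$ in the first variable with respect to the coproduct-like splitting. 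This follows from the definition of $\delta_L$ as $(\sigma_\otimes \oplus \sigma_\otimes)\circ \sigma_\otimes$, read backwards: by definition $\delta_L \circ \sigma_{\otimes,X,Y} = \sigma_\otimes \oplus \sigma_\otimes$ after the equality $X \otimes Y = (A\otimes Y)\oplus(B\otimes Y)$. Since $\sigma_\otimes$ is an involution, this already expresses $\sigma_{\otimes,X,Y}$ via $\sigma_{\otimes,A,Y}$, $\sigma_{\otimes,B,Y}$ and $\delta_L$.

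\emph{Step 2: unfold $\delta_L$.} By definition, $\delta_L$ itself involves $\sigma_\otimes$ on $Y \otimes A$ and $Y \otimes B$ (namely $\sigma_{\otimes}$ on $C\otimes A$-type pieces after splitting $Y$). We apply the same right-distributivity splitting to $Y \otimes A = (C \otimes A) \oplus (D \otimes A)$. Then $\sigma_{\otimes,Y,A}$ becomes $(\sigma_{\otimes,C,A} \oplus \sigma_{\otimes,D,A})$ composed with a $\delta_L$ on $A\otimes(C\oplus D)$. That $\delta_L$ is the one appearing as $\sigma_{\otimes}$ on $A\otimes (C\oplus D)$ in the list, so we do not recurse further.

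\emph{Step 3: reassemble.} The objects now appearing are interleaved as $(A\otimes C)\oplus(A\otimes D)\oplus(B\otimes C)\oplus(B\otimes D)$ versus $(C\otimes A)\oplus(C\otimes B)\oplus(D\otimes A)\oplus(D\otimes B)$. The third diagram of \Cref{bipermutative} tells us the reordering between these is exactly $\id \oplus \sigma_\oplus \oplus \id$. So the final expression is a composite of $\sigma_\otimes$ on $A\otimes Y$, $B\otimes Y$, $C\otimes X$, $D\otimes X$ and on the four binary products, together with a middle $\id\oplus\sigma_\oplus\oplus\id$, as claimed.

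The main obstacle is bookkeeping in Step 1. In a strict setting we must confirm that the composite is literally equal, not merely isomorphic, and in particular that left distributivity is only available through $\delta_L$, which is built from $\sigma_\otimes$. I would handle this by writing $\sigma_{\otimes,X,Y}$ as $\sigma_{\otimes,X,Y} = (\delta_L)^{-1}$-style composites using $\sigma_\otimes^2=\id$, and by checking each equation against the defining composite of $\delta_L$. Semisimplicity guarantees that the objects involved are strictly smaller, for $A,B,C,D$ nonzero, so the result can later feed an induction.
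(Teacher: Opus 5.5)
There is a genuine gap, at the one step that carries the content: removing the circular occurrence of the big symmetry. Your Step 1 identity $\sigma_{\otimes,X,Y}=\delta_L^{-1}\circ(\sigma_{\otimes,A,Y}\oplus\sigma_{\otimes,B,Y})$ is true, but only as a rewording of the definition of $\delta_L$ plus $\sigma_\otimes^2=\mathrm{id}$. (The hexagon axioms concern $\otimes$ alone and play no role here.) The problem is what that $\delta_L$ contains. The distributor $\delta_L\colon Y\otimes(A\oplus B)\to(Y\otimes A)\oplus(Y\otimes B)$ is by definition $(\sigma_{\otimes,A,Y}\oplus\sigma_{\otimes,B,Y})\circ\sigma_{\otimes,Y,X}$. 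Here $\sigma_{\otimes,Y,X}$ is the inverse of the very morphism you are decomposing. So Step 1 is circular until this particular $\delta_L$ is rewritten without $\sigma_{\otimes,Y,X}$, and your proposal never does this.

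Step 2 goes wrong in two ways. It describes $\delta_L$ as built only from symmetries into $Y\otimes A$ and $Y\otimes B$, silently dropping $\sigma_{\otimes,Y,X}$ from its definition. It then unfolds $\sigma_{\otimes,Y,A}$ through $\delta_L$ on $A\otimes(C\oplus D)$, which is unnecessary because $\sigma_{\otimes,Y,A}$ is just the inverse of the allowed $\sigma_{\otimes,A,Y}$. As a result, the symmetries on $C\otimes X$ and $D\otimes X$ in your Step 3 list are never derived. You also invoke the third diagram of Definition~\ref{bipermutative} only as a statement about ``reordering'' summands. In fact it is exactly the nontrivial equation between morphisms needed to express the big distributor through smaller ones.

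To repair this, apply that coherence diagram with $(C,D,A,B)$ in place of $(A,B,C,D)$. It gives $\delta_L=(\mathrm{id}\oplus\sigma_\oplus\oplus\mathrm{id})\circ(\delta_L\oplus\delta_L)$ on $(C\oplus D)\otimes(A\oplus B)$, with the inner distributors on $C\otimes X$ and $D\otimes X$. Unfolding those by definition produces only $\sigma_\otimes$ on $C\otimes X$, $D\otimes X$ and on the four binary products, up to inversion. Substituting into Step 1, and suppressing the strict right-distributivity equalities, you get
\[
  \sigma_{\otimes,X,Y}
  = (\sigma_{\otimes,X,C}\oplus\sigma_{\otimes,X,D})
  \circ(\sigma_{\otimes,C,A}\oplus\sigma_{\otimes,C,B}\oplus\sigma_{\otimes,D,A}\oplus\sigma_{\otimes,D,B})
  \circ(\mathrm{id}\oplus\sigma_\oplus\oplus\mathrm{id})
  \circ(\sigma_{\otimes,A,Y}\oplus\sigma_{\otimes,B,Y}),
\]
where $\sigma_{\otimes,X,C}=\sigma_{\otimes,C,X}^{-1}$. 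This is the mirror image of the paper's proof. The paper starts from the definition of $\delta_L$ on $(A\oplus B)\otimes(C\oplus D)$, namely $\sigma_{\otimes,X,Y}$ followed by the symmetries on $C\otimes X$ and $D\otimes X$. It applies the same coherence axiom to that distributor, unfolds the distributors on $A\otimes Y$ and $B\otimes Y$, and inverts. The two resulting formulas agree up to naturality of $\sigma_\oplus$.
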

\begin{proof}
  For the base case, observe that the symmetries $\sigma_\otimes \colon 1 \otimes n \to n \otimes 1$ and $\sigma_\otimes \colon n \otimes 1 \to 1 \otimes n$ are identity morphisms for any $n$.
  For $m>1$, observe that the following diagram commutes:
  \begin{center}
  \adjustbox{scale=0.66}{
  \begin{tikzcd}
    {(A \otimes (C \oplus D)) \oplus (B \otimes (C \oplus D)) } & {((C \oplus D) \otimes A) \oplus ((C \oplus D) \otimes B) } & {(C  \otimes A) \oplus (D \otimes A) \oplus (C \otimes B) \oplus (D  \otimes B) } \\
    {(A \oplus B) \otimes (C \oplus D)} \\
    && {(A \otimes C) \oplus (A \otimes D) \oplus (B \otimes C) \oplus (B \otimes D)} \\
    {(C \oplus D) \otimes (A \oplus B)} \\
    {(C \otimes (A \oplus B)) \oplus (D  \otimes (A \oplus B))} & {((A \oplus B) \otimes C) \oplus ((A \oplus B) \otimes D)} & {(A \otimes C) \oplus (B \otimes C) \oplus (A \otimes D) \oplus (B \otimes D)}
    \arrow["{\sigma_\otimes \oplus \sigma_\otimes}", from=1-1, to=1-2]
    \arrow[""{name=0, anchor=center, inner sep=0}, "{\delta_L \oplus \delta_L}", from=1-1, to=3-3]
    \arrow["{=}", from=1-2, to=1-3]
    \arrow["{\sigma_\otimes \oplus \sigma_\otimes \oplus \sigma_\otimes \oplus \sigma_\otimes}", from=1-3, to=3-3]
    \arrow["{=}", from=2-1, to=1-1]
    \arrow["{\sigma_\otimes}"', from=2-1, to=4-1]
    \arrow[""{name=1, anchor=center, inner sep=0}, "{\delta_L}"', from=2-1, to=5-2]
    \arrow[""{name=2, anchor=center, inner sep=0}, "{=}"', from=4-1, to=5-1]
    \arrow["{\sigma_\otimes \oplus \sigma_\otimes}"', from=5-1, to=5-2]
    \arrow["{=}"', from=5-2, to=5-3]
    \arrow["{\id \oplus \sigma_\oplus \oplus \id}"', from=5-3, to=3-3]
    \arrow["{(ii)}"{description}, draw=none, from=0, to=1-3]
    \arrow["{(i)}", draw=none, from=0, to=1]
    \arrow["{(ii)}", draw=none, from=1, to=2]
  \end{tikzcd}}
  \end{center}
  Indeed, $(i)$ commutes by coherence condition \cite[(IX)]{laplaza:distributivity} and $(ii)$ by definition of $\delta_L$. 
  Because $\sigma_\otimes \colon B \otimes A \to A \otimes B$ is the inverse of $\sigma_\otimes \colon A \otimes B \to B \otimes A$, it follows that the following diagram also commutes:
  \setcounter{equation}{0}
  \begin{equation}\label{eq:decomp}
  \adjustbox{scale=0.62}{
  \begin{tikzcd}
    {(A \otimes (C \oplus D)) \oplus (B \otimes (C \oplus D)) } & {((C \oplus D) \otimes A) \oplus ((C \oplus D) \otimes B) } & {(C  \otimes A) \oplus (D \otimes A) \oplus (C \otimes B) \oplus (D  \otimes B) } \\
    {(A \oplus B) \otimes (C \oplus D)} \\
    & {} & {(A \otimes C) \oplus (A \otimes D) \oplus (B \otimes C) \oplus (B \otimes D)} \\
    {(C \oplus D) \otimes (A \oplus B)} \\
    {(C \otimes (A \oplus B)) \oplus (D  \otimes (A \oplus B))} & {((A \oplus B) \otimes C) \oplus ((A \oplus B) \otimes D)} & {(A \otimes C) \oplus (B \otimes C) \oplus (A \otimes D) \oplus (B \otimes D)}
    \arrow["{\sigma_\otimes \oplus \sigma_\otimes}", from=1-1, to=1-2]
    \arrow["{\delta_L \oplus \delta_L}", from=1-1, to=3-3]
    \arrow["{=}", from=1-2, to=1-3]
    \arrow["{\sigma_\otimes \oplus \sigma_\otimes \oplus \sigma_\otimes \oplus \sigma_\otimes}", from=1-3, to=3-3]
    \arrow["{=}", from=2-1, to=1-1]
    \arrow["{\sigma_\otimes}"', from=2-1, to=4-1]
    \arrow["{\delta_L}"', from=2-1, to=5-2]
    \arrow["{\id \oplus \sigma_\oplus \oplus \id}", from=3-3, to=5-3]
    \arrow["{=}", from=5-1, to=4-1]
    \arrow["{\sigma_\otimes \oplus \sigma_\otimes}", from=5-2, to=5-1]
    \arrow["{=}", from=5-3, to=5-2]
  \end{tikzcd}}
  \end{equation}
\end{proof}

\begin{lemma}\label{reduction}
  In a semisimple bipermutative category, every multiplicative symmetry
  $\sigma_\otimes \colon m \otimes n \to n \otimes m$ can be constructed as a composition of sums of identities and additive symmetries $\sigma_\oplus$.
\end{lemma}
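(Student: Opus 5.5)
The plan is a well-founded induction on the pair $(m,n)$, driven by \Cref{symmetries}. In a semisimple bipermutative category every object is a natural number, and $\oplus$ is addition. So any $m>1$ splits as $m = A \oplus B$ with $A,B$ strictly smaller positive numbers, for instance $A=1$ and $B=m-1$. Likewise $n = C \oplus D$ whenever $n>1$. The induction measure will be the product $mn$, or equivalently the multiset $\{m,n\}$ ordered lexicographically.

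The base cases are those where $m\le 1$ or $n\le 1$, which the proof of \Cref{symmetries} already records.
\begin{itemize}
\item If $m=1$ or $n=1$, then $\sigma_\otimes \colon 1\otimes n\to n\otimes 1$ and its mirror are identities.
\item If $m=0$ or $n=0$, the axioms $O\otimes A = O$ and $\id[O]\otimes f=\id[O]$ of \Cref{bipermutative} force $\sigma_\otimes$ to be $\id[0]$.
\end{itemize}
Both are trivially compositions of sums of identities.

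For the inductive step, take $m,n\ge 2$ and write $m=1\oplus(m-1)$ and $n=1\oplus(n-1)$. Diagram~\eqref{eq:decomp} in the proof of \Cref{symmetries} then expresses $\sigma_\otimes\colon m\otimes n\to n\otimes m$ as a composite of the following pieces:
\begin{itemize}
\item strict equalities, which are identities in the semisimple setting;
\item $\id\oplus\sigma_\oplus\oplus\id$;
\item $\oplus$-sums of symmetries $\sigma_\otimes$ on the objects $A\otimes(C\oplus D)$, $B\otimes(C\oplus D)$, $C\otimes(A\oplus B)$, $D\otimes(A\oplus B)$, and on the four products $C\otimes A,\dots,D\otimes B$;
\item inverses of such symmetries.
\end{itemize}
Each of these smaller symmetries is of the form $\sigma_\otimes\colon p\otimes q\to q\otimes p$. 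In every case either one factor is strictly smaller than before with the other unchanged, or both factors are no larger and one is strictly smaller, so $pq<mn$ (because $A,B,C,D\ge1$). By the induction hypothesis each is a composition of sums of identities and $\sigma_\oplus$. The inverses are handled the same way, since the inverse of $\sigma_\otimes\colon p\otimes q\to q\otimes p$ is $\sigma_\otimes\colon q\otimes p\to p\otimes q$, which is also covered by the hypothesis on the unordered pair. A sum of compositions of sums is itself a composition of sums, by functoriality of $\oplus$ (interchange with $\circ$), so the whole composite has the required form.

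The main obstacle I expect is bookkeeping rather than conceptual. I must check that the measure genuinely decreases for all eight auxiliary symmetries, including the edge cases where $m-1=1$ or $n-1=1$; there the pieces fall into the base case. I must also check that the strict equalities in the diagram really are identity morphisms in a semisimple category, so they contribute nothing. A further subtlety is that the decomposition should be valid for every choice of splitting, not just $1\oplus(m-1)$. It suffices, however, to exhibit one decomposition, because \Cref{symmetries} holds for arbitrary $A,B,C,D$.
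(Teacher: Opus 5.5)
Your proposal is correct and follows the paper's argument: a well-founded induction driven by the decomposition in diagram~\eqref{eq:decomp} of \Cref{symmetries}, with inverse symmetries handled via $\sigma_{\otimes[p,q]}^{-1}=\sigma_{\otimes[q,p]}$. The only difference is that you use the symmetric measure $mn$ instead of the paper's lexicographic order on ordered pairs, and yours is the cleaner choice: every auxiliary pair has product strictly below $mn$ once $m,n\ge 2$, whereas some auxiliary indices, such as $[n-1,m]$, are not lexicographically below $[m,n]$ when $n>m+1$.
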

\begin{proof}
  We proceed by well-founded induction on the pair $[m,n]$ under lexicographic ordering. Write $\sigma_{\otimes[m,n]}$ for the multiplicative symmetry $m \otimes n \to n \otimes m$.

  The base case $m=n=0$ is vacuous because then $\sigma_\otimes = \id[0]$.  
  For the induction step, assume that $\sigma_{\otimes[m',n']}$ is a composition of sums of identities and additive symmetries $\sigma_\oplus$ for all $(m',n')$ lexicographically before $(m,n)$. 
  By diagram~\eqref{eq:decomp} in the proof above, $\sigma_{\otimes[m,n]}$ is a composition of sums of identities and
  $\sigma_{\otimes[n-1,m]}$, $\sigma_{\otimes[1,m]}$,
  $\sigma_{\otimes[n-1,m-1]}$, $\sigma_{\otimes[n-1,1]}$,
  $\sigma_{\otimes[1,m-1]}$, $\sigma_{\otimes[1,1]}$,
  $\sigma_{\otimes[m-1,n]}$, and $\sigma_{\otimes[n,1]}$, as well as $\sigma_\oplus$.
  All these indices are lexicographically before $[n,m]$, except $[m-1,n]$.
  So for all multiplicative symmetries except $\sigma_{\otimes[m-1,n]}$, we can apply the induction hypothesis to write them as compositions of sums of identities and additive symmetries.

  To construct $\sigma_{\otimes[m-1,n]}$, notice that $\sigma_{\otimes[m-1,n]}^{-1} = \sigma_{\otimes[n,m-1]}$. Apply induction to construct $\sigma_{\otimes[n,m-1]}$ as a composition of sums of identities and additive symmetries. 
  As the inverse of identities and additive symmetries are, again, identities and additive summetries (respectively), this immediately lets us construct $\sigma_{\otimes[m-1,n]}$ of the right form.
\end{proof}

\begin{theorem}
  Semisimple bipermutative categories are PROPs, and a functor between semisimple bipermutative categories is a strict bipermutative functor if and only if it is a PROP morphism.
\end{theorem}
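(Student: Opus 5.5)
The first claim follows by forgetting structure. A semisimple bipermutative category $\cat{C}$ has natural numbers as objects, and $(\oplus,0)$ is a strict symmetric monoidal structure with $n \oplus m = n+m$. Discarding $(\otimes,1)$ therefore leaves exactly a PROP.

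For the second claim, one direction is also immediate. A strict bipermutative functor $F$ between semisimple bipermutative categories is strict symmetric monoidal for $\oplus$. It must be the identity on objects: $F(1)=1$ because $F$ preserves the $\otimes$-unit, and then $F(n)=F(1\oplus\cdots\oplus 1)=n$ because it preserves $\oplus$. So $F$ is a PROP morphism.

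The converse carries the content. Let $F$ be a PROP morphism, so $F$ is identity on objects and strictly preserves $\oplus$, $0$ and $\sigma_\oplus$. We must show that $F$ strictly preserves $\otimes$ on objects and morphisms, the unit $1$, and $\sigma_\otimes$.

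On objects this is automatic, since $F(nm)=nm$. For the symmetry, \Cref{reduction} writes each $\sigma_{\otimes[m,n]}$ as a composite of $\oplus$-sums of identities and additive symmetries $\sigma_\oplus$. The decomposition in diagram~\eqref{eq:decomp} is canonical and built uniformly from $\oplus$, $\circ$, identities and $\sigma_\oplus$, so the same expression computes $\sigma_\otimes$ in both the domain and the codomain. Since $F$ preserves all these operations, $F(\sigma_\otimes)=\sigma_\otimes$.

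For morphisms $f\colon m\to m'$ and $g\colon n\to n'$, we use the identity derived just before \Cref{symmetries}:
\[
  f\otimes g = \sigma_\otimes \circ (n\cdot f)\circ \sigma_\otimes \circ (m\cdot g),
\]
with appropriately indexed symmetries. This relies on interchange, $\id[n]\otimes f = n\cdot f$ (which comes from right distributivity $(f\oplus g)\otimes h=(f\otimes h)\oplus(g\otimes h)$ applied to $n=1\oplus\cdots\oplus 1$, together with $1\otimes f=f$), and naturality of $\sigma_\otimes$. Applying $F$ and using that it preserves $\circ$, $\oplus$ and $\sigma_\otimes$ gives $F(f\otimes g)=F(f)\otimes F(g)$.

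The remaining coherence data are then preserved automatically. The distributor equalities are identities on objects. The map $\delta_L$ is defined from $\sigma_\otimes$ and $\oplus$, so $F$ preserves it. The unit laws $O\otimes A=O$ and $1\otimes A=A$ hold on the nose.

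The main obstacle is making rigorous that the reduction of \Cref{reduction} is literally the same formal expression in both categories. The $\sigma_\oplus$ appearing there should be exactly the PROP symmetries between specified natural numbers, and the induction should not depend on choices inside $\cat{C}$. I would handle this by extracting from \Cref{reduction} an explicit recursive definition of a term $t_{m,n}$ in the language of PROPs, and proving by the same lexicographic induction that $\sigma_{\otimes[m,n]}=t_{m,n}$ holds in every semisimple bipermutative category. As a sanity check, the term should equal the permutation $\pi_{m,n}$ built from transpositions.
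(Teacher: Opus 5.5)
Your proposal is correct and is essentially the paper's proof: you forget $(\otimes,1)$ for the first claim, and for the converse you use \Cref{reduction} to get $F(\sigma_\otimes)=\sigma_\otimes$, then write $f\otimes g$ as $\oplus$-multiples of $f$ and $g$ conjugated by $\sigma_\otimes$ to get $F(f\otimes g)=F(f)\otimes F(g)$ (for $g\colon n\to n'$ the multiple of $f$ should be $n'\cdot f$, a slip you share with the paper). Your extra care, in checking that a strict bipermutative functor is the identity on objects and in extracting an explicit PROP term $t_{m,n}$ for $\sigma_{\otimes[m,n]}$, makes precise what the paper only asserts; if you write out that recursion from diagram~\eqref{eq:decomp} with $A=m-1$, $B=D=1$, $C=n-1$, induct on $m+n$ rather than lexicographically, because the call to $\sigma_{\otimes[n-1,m-1]}$ is not lexicographically below $[m,n]$ when $n>m+1$.
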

\begin{proof}
  By definition, the objects of semisimple bipermutative categories are
  precisely natural numbers. As regards morphisms, by distributivity
  every monoidal product of morphisms reduces to monoidal sums and
  multiplicative symmetries, and by \Cref{reduction}
  every coherence isomorphism of semisimple bipermutative categories
  -- and, consequently, every coherence condition of bipermutative
  categories -- can be expressed in terms of the strict symmetric
  monoidal structure $(\oplus,0)$.

  If $F \colon \cat{C} \to \cat{D}$ is a functor between semisimple bipermutative categories, it follows immediately that if $F$ is a strict bipermutative functor it also a PROP morphism. 
  Conversely, suppose $F$ is a PROP morphism. Since $F(n) = n$ on morphisms in particular $F(1) = 1$, so on objects it preserves the multiplicative unit. On morphisms it follows from \Cref{reduction} that $F(\sigma_\otimes) = \sigma_\otimes$ since $\sigma_\otimes$ is constructed (in both $\cat{C}$ and $\cat{D}$) purely in terms of the strict symmetric monoidal structure $(\oplus,0)$, which is preserved exactly by $F$. Moreover, $F$ preserves monoidal products strictly:
  \begin{align*}
    F(f \otimes g)
    & = F((f \otimes \id[n]) \circ (\id[m] \otimes g)) \\
    & = F(\sigma_\otimes \circ (\id[n] \otimes f) \circ \sigma_\otimes
    \circ (\id[m] \otimes g)) \\
    & = F(\sigma_\otimes \circ (\underbrace{f \oplus f \oplus \cdots
      \oplus f}_{n~\text{times}}) \circ \sigma_\otimes \circ
    (\underbrace{g \oplus g \oplus \cdots \oplus g}_{m~\text{times}})\\
    & = F(\sigma_\otimes) \circ F(\underbrace{f \oplus f \oplus \cdots
      \oplus f}_{n~\text{times}}) \circ F(\sigma_\otimes) \circ
    F(\underbrace{g \oplus g \oplus \cdots \oplus g}_{m~\text{times}})\\
    & = F(\sigma_\otimes) \circ \underbrace{F(f) \oplus F(f) \oplus \cdots
      \oplus F(f)}_{n~\text{times}} \circ \,F(\sigma_\otimes) \circ
    \underbrace{F(g) \oplus F(g) \oplus \cdots \oplus F(g)}_{m~\text{times}}\\
    & = \sigma_\otimes \circ \underbrace{F(f) \oplus F(f) \oplus \cdots
      \oplus F(f)}_{n~\text{times}} \circ \,{\sigma_\otimes} \circ
    \underbrace{F(g) \oplus F(g) \oplus \cdots \oplus F(g)}_{m~\text{times}}\\
    & = \sigma_\otimes \circ (\id[n] \otimes F(f)) \circ \,{\sigma_\otimes} \circ
    (\id[m] \otimes F(g)) \\
    & = (F(f) \otimes \id[n]) \circ (\id[m] \otimes F(g)) \\
    & = F(f) \otimes F(g)\text.
  \end{align*}
  Thus $F$ is a strict symmetric monoidal functor for $(\otimes,1)$ as
  well, in turn making it a strict bipermutative functor.
\end{proof}

Because free PROPs are known to exist~\cite{maclane:prop}, this proves \Cref{thm:freemodel} of the main article. 

\begin{proposition}\label{piprop}
  There exists a free bipermutative category $\prop{\zetak,S,V}$ on generators $\zetak \colon 1 \to 1$ and $V \colon 1 \oplus 1 \to 1 \oplus 1$ for $k \geq 2$. 
  Write $X$ for the symmetry morphism $\sigma \colon 1 \oplus 1 \to 1 \oplus 1$, and $S$ for $1 \oplus \zetak^{2^{k-2}} \colon 1 \oplus 1 \to 1 \oplus 1$.
  There exists a quotient $\Pi_k$ of $\prop{\zetak,S,V}$ by the smallest  congruence $\sim_k$ of bipermutative categories containing equations~\eqref{eq:axiom1}--\eqref{eq:axiom3}, and it is a bipermutative category.
\end{proposition}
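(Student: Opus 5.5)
The plan has three parts: build the free bipermutative category from a free PROP, quotient it by the congruence generated by the axioms, and check that the quotient is still bipermutative and free.

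\emph{Existence of the free category.} I start from the free PROP on the generators. By Mac Lane's construction~\cite{maclane:prop}, there is a free PROP $P$ on a signature consisting of a morphism $\zetak \colon 1 \to 1$ and a morphism $V \colon 2 \to 2$. This PROP has to be made bipermutative. I define $\otimes$ on objects by multiplication of natural numbers and on morphisms by the Kronecker product built from sums and the reindexings $\pi_{m,n}$. I then take the quotient of $P$ by the smallest PROP congruence that forces the bipermutative axioms of \Cref{bipermutative}. These are the interchange law $(f \otimes \id) \circ (\id \otimes g) = (\id \otimes g) \circ (f \otimes \id)$ together with the coherence conditions.

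By \Cref{reduction}, every coherence morphism of a semisimple bipermutative category is already a composite of sums of identities and $\sigma_\oplus$. So all these requirements are equations between PROP terms, and quotienting a PROP by a congruence generated by equations again gives a PROP. I denote the result $\prop{\zetak,S,V}$, with $X := \sigma_\oplus$ and $S := \id \oplus \zetak^{2^{k-2}}$ as defined morphisms.

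\emph{Freeness.} Let $\cat{C}$ be a semisimple bipermutative category with chosen morphisms $z \colon 1 \to 1$ and $v \colon 2 \to 2$. The universal property of $P$ gives a unique PROP morphism $P \to \cat{C}$ sending the generators to $z$ and $v$. Since $\cat{C}$ satisfies the imposed equations, this morphism factors through the quotient. By the preceding theorem, PROP morphisms between semisimple bipermutative categories are exactly strict bipermutative functors, so the factorisation is the required unique bipermutative functor.

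For a general bipermutative target I would either restrict to the full subcategory on $\oplus$-powers of $I$, or appeal to strictification~\cite{may:spectra,laplaza:distributivity}. This is the point I expect to need the most care, because the paper only states the correspondence for semisimple targets.

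\emph{The quotient $\Pi_k$.} Let $\sim_k$ be the intersection of all equivalence relations on parallel morphisms that contain \eqref{eq:axiom1}--\eqref{eq:axiom3} and are closed under $\circ$, $\oplus$ and $\otimes$. This intersection is again such a relation, so the smallest congruence exists.

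Define $\Pi_k$ to have the same objects as $\prop{\zetak,S,V}$ and equivalence classes of morphisms as its morphisms. Because $\sim_k$ is a congruence for $\circ$, $\oplus$ and $\otimes$, these operations are well defined on classes. Every bipermutative axiom is an equation that holds on representatives, so it descends to classes. Identities and symmetries are just the classes of the corresponding morphisms. Hence $\Pi_k$ is a bipermutative category and the quotient map is a strict bipermutative functor.

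Freeness of $\Pi_k$ with respect to the axioms then follows from the standard argument. A bipermutative functor out of $\prop{\zetak,S,V}$ into a model of \eqref{eq:axiom1}--\eqref{eq:axiom3} has a kernel that is a congruence containing the axioms. That kernel therefore contains $\sim_k$, so the functor factors uniquely through $\Pi_k$.

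The only real subtlety is that closure under $\otimes$ is automatic once closure under $\oplus$ and $\circ$ holds. This is because, by the reduction above, $f \otimes g$ is expressed through sums, $\sigma_\oplus$ and composition.
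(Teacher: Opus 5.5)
Your proposal is correct and follows essentially the paper's route. You realise the free bipermutative category via Mac Lane's free PROP on $\zetak$ and $V$, which is legitimate because \Cref{reduction} expresses all multiplicative structure through $\oplus$, $\circ$ and $\sigma_\oplus$. You then transfer the universal property through the correspondence between PROP morphisms and strict bipermutative functors, and finally pass to the quotient by the smallest congruence $\sim_k$. Your explicit quotienting of the free PROP by the interchange and coherence equations, and your caveat about non-semisimple targets, spell out steps the paper leaves implicit.
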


\section*{Cancellativity}\label{cancellativity}

Note that throughout this section, we use $a,a',b,b',c$ to denote morphisms of
a category $\cat{C}$.

Next, we build towards soundness and completeness of $\Pi_k$. Our proof will use auxiliary objects by relying on the following property.

\begin{definition}
  A monoidal category $(\cat{C},\oplus,O)$ is \emph{strongly cancellative} when 
  \[
    a \oplus b=a' \oplus b'
    \implies
    a=a'\ \text{and}\ b=b'
  \]
  for all morphisms $a,a' \colon A \to A'$ and $b,b' \colon B \to B'$.
\end{definition}

Any monoidal category that embeds into one where $\oplus$ is a biproduct is strongly cancellative. This includes the PROP $\cat{Unitary}(R)$ of unitary matrices over an involutive ring $R$, which embeds into the rig category of $R$-modules, where $a \oplus b$ is a block-diagonal unitary matrix. 
We believe that $\Pi_k$ itself is strongly cancellative for all $k$ (it is known to be the case for $k=2$), but leave a proof of this as an open question.
All we need here is to force a PROP to become strongly cancellative.

\begin{proposition}\label{congruence}
  Any symmetric monoidal category $(\cat{C},\oplus,O)$ has a monoidal congruence defined for $a,a' \colon A \to A'$ by:
  \[
    a \approx a'
    \qquad \iff \qquad
    \exists b,b' \colon B \to B' \colon a \oplus b = a' \oplus b'
  \]
\end{proposition}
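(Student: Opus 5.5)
The plan is to verify directly that $\approx$ is an equivalence relation on each hom-set and that it is preserved by composition and by $\oplus$. Throughout I will use only three facts: $\oplus$ is a bifunctor, so the interchange law $(c \oplus d)\circ(a \oplus b) = ca \oplus db$ holds; $\oplus$ is strictly associative and unital; and the symmetry $\sigma_\oplus$ is natural, so that $\sigma \circ (x \oplus y) \circ \sigma = y \oplus x$ for appropriately typed symmetries. Since witnesses may be chosen freely, I will build new witnesses by summing old ones and reordering summands with symmetries.

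\emph{Equivalence relation.} Reflexivity: take $b = b' = \id[O]$, so $a \oplus \id[O] = a = a \oplus \id[O]$. Symmetry: swap the roles of $b$ and $b'$. Transitivity: suppose $a \oplus b = a' \oplus b'$ with $b,b' \colon B \to B'$, and $a' \oplus c = a'' \oplus c'$ with $c,c' \colon C \to C'$. Summing the first equation with $c$ gives $a \oplus b \oplus c = a' \oplus b' \oplus c$. Conjugating both sides by $\id \oplus \sigma_\oplus$, with the appropriate source and target objects, and using naturality yields $a \oplus c \oplus b = a' \oplus c \oplus b'$. Rewriting $a' \oplus c$ as $a'' \oplus c'$ then gives $a \oplus (c \oplus b) = a'' \oplus (c' \oplus b')$. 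The witnesses $c \oplus b$ and $c' \oplus b'$ both have type $C \oplus B \to C' \oplus B'$, as required.

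\emph{Composition.} Let $a \approx a'$ via $b,b' \colon B \to B'$, and let $c \approx c'$ via $d,d' \colon D \to D'$, where $a,a' \colon A \to A'$ and $c,c' \colon A' \to A''$. The witnesses are not composable with each other, so I pad them with identities. By interchange,
\[
(c \oplus \id[B'] \oplus d)\circ(a \oplus b \oplus \id[D]) = ca \oplus b \oplus d .
\]
Each factor can be rewritten using the hypotheses: the first equals $(c' \oplus d')$ with $\id[B']$ inserted after reordering by symmetries, and the second equals $a' \oplus b' \oplus \id[D]$. Recomposing gives $c'a' \oplus b' \oplus d'$. Hence $ca \approx c'a'$ with witnesses $b \oplus d$ and $b' \oplus d'$.

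\emph{Monoidal structure.} If $a \approx a'$ via $b,b'$ and $c \approx c'$ via $d,d'$, then conjugating by the middle symmetry $\id \oplus \sigma_\oplus \oplus \id$ gives
\[
(a \oplus c) \oplus (b \oplus d) = \sigma(a \oplus b \oplus c \oplus d)\sigma = \sigma(a' \oplus b' \oplus c' \oplus d')\sigma = (a' \oplus c') \oplus (b' \oplus d') .
\]
The identities and symmetries are related to themselves by reflexivity, so $\approx$ is a monoidal congruence.

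I expect the main obstacle to be the bookkeeping in the transitivity and composition steps. The required symmetries have to be typed correctly, and naturality has to be applied to the correct objects (e.g.\ $\sigma \colon B' \oplus C' \to C' \oplus B'$ on the codomain side versus $\sigma \colon C \oplus B \to B \oplus C$ on the domain side). Strictness of the monoidal structure is what allows the parenthesisations to be ignored while doing this.
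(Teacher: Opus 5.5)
Your proof is correct and follows the same plan as the paper: check reflexivity, symmetry, transitivity, compositionality and monoidality directly, with monoidality obtained from the middle symmetry $\id \oplus \sigma_\oplus \oplus \id$.

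The one real difference is in how you build witnesses for transitivity and composition. The paper uses composites: $c'\circ b$ and $c''\circ b'$ for transitivity, and $db$, $d'b'$ via $(c\oplus d)(a\oplus b)$ for composition. These only typecheck when the witness objects happen to line up, e.g.\ when the codomain of $b$ equals the domain of $c'$, and the paper never arranges this. Your witnesses $c\oplus b$, $c'\oplus b'$ and $b\oplus d$, $b'\oplus d'$ come from padding with identities and reordering with natural symmetries. They are well typed for arbitrary witness objects, so your version is the more careful one. Padding the paper's witnesses to make them composable would essentially turn its argument into yours.

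Similarly, for monoidality the paper says it \emph{postcomposes} with the symmetry, but your conjugation is what is actually needed. Your reliance on strictness is harmless: for a non-strict $\oplus$, the same conjugations go through once the natural associators are inserted.
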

\begin{proof}
  We verify that the relation $\approx$ is a monoidal congruence:
  \begin{itemize}
    \item Reflexivity.
    Taking $b=b'=\id[I]$ shows that $a \approx a$.

    \item Symmetry.
    Suppose that $a \approx a'$, say because $a \oplus b = a' \oplus b'$. 
    Then also $a' \oplus b' = a \oplus b$, so $a' \approx a$.

    \item Transivity. 
    Suppose that $a \approx a'$ because $a \oplus b = a' \oplus b'$, and that $a' \approx a''$ because $a' \oplus c' = a'' \oplus c''$. Then $a \approx a''$ because
    \begin{align*}
      a \oplus (c' \circ b)
      &= (1 \oplus c') \circ (a \oplus b) \\ 
      &= (1 \oplus c') \circ (a' \oplus b')\\ 
      &= (a' \oplus c') \circ (1 \oplus b')\\ 
      &= (a'' \oplus c'') \circ (1 \oplus b') \\
      &= a'' \oplus (c'' \circ b')\text.
    \end{align*}

    \item Compositionality.
    Suppose that $a \approx a'$ because $a \oplus b = a' \oplus b'$ for $a,a' \colon A \to A'$, and that $c \approx c'$ because $c \oplus d = c' \oplus d'$ for $c,c' \colon A' \to A''$. Then $ca \approx c'a'$ because 
    \[
      ca \oplus db = (c \oplus d)(a \oplus b) = (c' \oplus d')(a' \oplus b') = c'a' \oplus d'b'\text.
    \]

    \item Monoidality.
    Suppose that $a \approx a'$ because $a \oplus b = a' \oplus b'$, and that $c \approx c'$ because $c \oplus d = c' \oplus d'$.
    Then
    \[
      a \oplus b \oplus c \oplus d = a' \oplus b' \oplus c' \oplus d'
    \]
    and by postcomposing with the symmetry hence also
    \[
      a \oplus c \oplus b \oplus d = a' \oplus c' \oplus b' \oplus d'\text,
    \]
    that is, $a \oplus c \approx a' \oplus c'$.
  \end{itemize}    
\end{proof}

\begin{corollary}
  If $(\cat{C},\oplus,O)$ is a symmetric monoidal category, then $\cat{C} \slash \mathop{\approx}$ is a well-defined strongly cancellative symmetric monoidal category. If $\cat{C}$ satisfies some equation, then so does $\cat{C}\slash\mathop{\approx}$.
\end{corollary}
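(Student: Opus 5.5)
We derive this from Proposition~\ref{congruence} in three steps: well-definedness of the quotient, strong cancellativity, and inheritance of equations.

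\emph{Well-definedness.} By Proposition~\ref{congruence}, $\approx$ is an equivalence relation on each hom-set that is compatible with composition and with $\oplus$. So $\cat{C}/\mathop{\approx}$ has the same objects as $\cat{C}$, and its hom-sets are the equivalence classes $[a]$. Composition and $\oplus$ are defined on representatives, $[c][a]=[ca]$ and $[a]\oplus[c]=[a\oplus c]$; independence of the choice of representatives is exactly compositionality and monoidality of $\approx$. Identities, associativity, unit laws, interchange, the symmetry $[\sigma]$ and all coherence equations hold in the quotient because they hold for representatives in $\cat{C}$. Hence $\cat{C}/\mathop{\approx}$ is a symmetric monoidal category, and the quotient map $a\mapsto[a]$ is a strict symmetric monoidal functor that is the identity on objects. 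Strictness, and the PROP structure if $\cat{C}$ is a PROP, are also inherited, since objects are untouched.

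\emph{Strong cancellativity.} This is the main step. Suppose $[a]\oplus[b]=[a']\oplus[b']$, that is, $a\oplus b\approx a'\oplus b'$. Then there are $c,c'\colon C\to C'$ with
\[
  a\oplus b\oplus c = a'\oplus b'\oplus c'.
\]
Reading this as $a\oplus(b\oplus c)=a'\oplus(b'\oplus c')$ gives $a\approx a'$, with witnesses $b\oplus c$ and $b'\oplus c'$.

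For $b$ we conjugate with symmetries to move $a$ to the back. Write $\sigma_{A,B}$ for the additive symmetry $A\oplus B\to B\oplus A$, and compose with $\sigma_{A',B'}\oplus\id$ and $\sigma_{A,B}\oplus\id$ on either side. Naturality of the symmetry yields
\[
  b\oplus a\oplus c = b'\oplus a'\oplus c'.
\]
Hence $b\approx b'$ with witnesses $a\oplus c$ and $a'\oplus c'$, noting that $a$ and $a'$ are parallel. So $[a]=[a']$ and $[b]=[b']$.

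The only subtlety is that $\approx$ is defined with witnesses of a fixed parallel type $B\to B'$. Here the witnesses $b\oplus c,\ b'\oplus c'$ and $a\oplus c,\ a'\oplus c'$ are indeed pairwise parallel, so the definition applies directly.

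\emph{Inheritance of equations.} An equation in the signature, built from $\circ$, $\oplus$, identities, symmetries and generators, that holds in $\cat{C}$ still holds after applying the strict structure-preserving quotient functor. Its image in $\cat{C}/\mathop{\approx}$ is exactly the interpretation of the equation with the generators interpreted as their classes. For example, $V^2=X$ in $\cat{C}$ gives $[V]^2=[X]$.

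If $\cat{C}$ carries bipermutative structure, $\otimes$ on morphisms is expressible through $\oplus$ and the symmetries by Lemma~\ref{reduction} and the preceding theorem. The equations involving $\otimes$ therefore also transfer.
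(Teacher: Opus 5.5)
Your proof is correct and takes the same route as the paper. Well-definedness of the quotient comes from Proposition~\ref{congruence}, and equations transfer along the strict structure-preserving quotient functor $\cat{C} \to \cat{C}/\mathop{\approx}$. The one difference is that you spell out strong cancellativity: you re-bracket $a \oplus b \oplus c = a' \oplus b' \oplus c'$ and conjugate by additive symmetries to get $a \approx a'$ and $b \approx b'$ with parallel witnesses. The paper leaves this step implicit in ``follows directly''.
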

\begin{proof}
  The first statement follows directly from \Cref{congruence}.
  The second statement follows from the fact that the (symmetric) monoidal functor $\cat{C} \to \cat{C}\slash\mathop{\approx}$ preserves equations.
\end{proof}

We can now define the functor of \Cref{thm:standardmodel} and \Cref{thm:density} of the main article.

\begin{proposition}\label{freeinterpretation}
  The following defines a symmetric monoidal functor $\sem{-} \colon \Pi_k \to \cat{Unitary}(\DD[\zetak])$:
  \[
    \sem{\zetak} = e^{2\pi i / 2^k} 
    \qquad\qquad
    \sem{V} = \frac{1}{2} \begin{pmatrix} 1+i & 1-i \\ 1-i & 1+i \end{pmatrix} 
  \]
\end{proposition}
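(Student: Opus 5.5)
We define the functor by the universal property of $\Pi_k$ as a quotient of the free bipermutative category. First, $\prop{\zetak,S,V}$ is the free bipermutative category on the generators $\zetak$ and $V$, and $\cat{Unitary}(\DD[\zetak])$ is a semisimple bipermutative category. Freeness therefore gives a unique strict bipermutative functor $\prop{\zetak,S,V} \to \cat{Unitary}(\DD[\zetak])$ that is the identity on objects and sends the generators to the displayed matrices. Two checks are needed before this applies. First, the assignments must be morphisms of the target: $e^{2\pi i/2^k}$ is the chosen primitive root $\zetak \in \DD[\zetak]$, which is a unit of modulus one. The entries of $\sem{V}$ lie in $\DD[i]$, and $i = \zetak^{2^{k-2}} \in \DD[\zetak]$ because $k \geq 2$. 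Second, $\sem{V}$ must be unitary: writing $\sem{V} = \tfrac12\big((1+i)\,\id + (1-i)X\big)$ and using $X^2=\id$, we get $\sem{V}^\dagger \sem{V} = \tfrac14\big(|1+i|^2 + |1-i|^2\big)\id + \tfrac14\big((1-i)(1-i) + (1+i)(1+i)\big)X = \id$, since the cross term $(1-i)^2 + (1+i)^2 = -2i + 2i$ vanishes.

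Next we show that the resulting functor respects the congruence $\sim_k$. Since $\sim_k$ is the smallest bipermutative congruence containing \eqref{eq:axiom1}--\eqref{eq:axiom3}, and the kernel of a strict bipermutative functor is such a congruence, it suffices to check the three axioms on the matrices.

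Axiom \eqref{eq:axiom3} holds because $\zetak^{2^k}=1$. For Axiom \eqref{eq:axiom1}, $\sem{V}^2 = \tfrac14\big((1+i)^2\,\id + 2(1+i)(1-i)X + (1-i)^2\,\id\big) = \tfrac14(2i + 4X - 2i) = X$. For Axiom \eqref{eq:axiom2}, $\sem{S} = \id \oplus \zetak^{2^{k-2}} = \mathrm{diag}(1,i)$, and the claim $\sem{V}\sem{S}\sem{V} = \sem{S}\sem{V}\sem{S}$ is a direct $2\times 2$ computation. Both sides equal $\tfrac{1+i}{2}\begin{pmatrix}1 & -i\\ -i & 1\end{pmatrix}$ (up to my arithmetic, to be checked). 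This is also where care is needed: the identity depends on the particular normalisation of $V$ (with $V \propto H S H$ phase-adjusted), so a sign slip would break it.

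Finally, we descend to the quotient. The functor factors uniquely through $\Pi_k = \prop{\zetak,S,V}/\mathord{\sim_k}$. Quotienting by a bipermutative congruence preserves strict monoidality, so the induced $\sem{-}$ is a strict symmetric monoidal functor for both $\oplus$ and $\otimes$. By the theorem identifying strict bipermutative functors between semisimple categories with PROP morphisms, it suffices to note that the functor preserves $\oplus$ and its symmetry. The only nontrivial step is the braid-relation calculation; everything else is formal.
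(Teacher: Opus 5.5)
Your strategy matches the paper's, whose proof is only the remark that the matrices satisfy \eqref{eq:axiom1}--\eqref{eq:axiom3}; freeness and the passage to the quotient are left implicit there. Your checks of membership in $\DD[\zetak]$, of unitarity of $\sem{V}$, and of \eqref{eq:axiom1} and \eqref{eq:axiom3} are correct.

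The gap is in the one step you yourself call nontrivial. You never verify \eqref{eq:axiom2}, and the common value you propose is wrong. In fact
\[
  \tfrac{1+i}{2}\begin{pmatrix}1 & -i\\ -i & 1\end{pmatrix} = \tfrac12\begin{pmatrix}1+i & 1-i\\ 1-i & 1+i\end{pmatrix} = \sem{V},
\]
and $\sem{V}\sem{S}\sem{V}=\sem{V}$ would force $\sem{S}\sem{V}=\id$, which is false. So, as written, the braid relation is asserted rather than proved, via an incorrect intermediate value.

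The relation does hold, so the repair is just the computation. Write $a=1+i$ and $b=1-i$. Then $ib=a$, $ia=-b$, $a^2=-b^2=2i$ and $ab=2$. With $\sem{S}=\mathrm{diag}(1,i)$ this gives
\[
  \sem{S}\sem{V}=\tfrac12\begin{pmatrix}a & b\\ a & -b\end{pmatrix},
  \qquad
  \sem{V}\sem{S}\sem{V}=\tfrac14\begin{pmatrix}a^2+ab & ab-b^2\\ a^2+ab & b^2-ab\end{pmatrix}=\tfrac{a}{2}\begin{pmatrix}1 & 1\\ 1 & -1\end{pmatrix},
\]
and
\[
  \sem{V}\sem{S}=\tfrac12\begin{pmatrix}a & a\\ b & -b\end{pmatrix},
  \qquad
  \sem{S}\sem{V}\sem{S}=\tfrac12\begin{pmatrix}a & a\\ ib & -ib\end{pmatrix}=\tfrac{a}{2}\begin{pmatrix}1 & 1\\ 1 & -1\end{pmatrix}.
\]
So both sides equal $e^{i\pi/4}$ times the Hadamard matrix, consistent with the paper's definition $H=\omega^{-1}\cdot VSV$.

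Your worry about normalisation is justified. $\sem{V}$ equals $HSH$ exactly, and $-\sem{V}$ also satisfies \eqref{eq:axiom1}. However, replacing $\sem{V}$ by $\lambda\sem{V}$ scales the two sides of \eqref{eq:axiom2} by $\lambda^2$ and $\lambda$ respectively, so the braid relation is precisely what rules out $\lambda=-1$. With this computation inserted, the rest of your argument goes through: the kernel of a strict bipermutative functor is a congruence, so the functor factors through $\Pi_k$.
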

\begin{proof}
  Observe that these matrices satisfy~\eqref{eq:axiom1}--\eqref{eq:axiom3}.
\end{proof}

Because $\cat{Unitary}(\DD[\zetak])$ is strongly cancellative, there is also a symmetric monoidal functor $\prop{\zetak,S,V} \to \cat{Unitary}(\DD[\zetak])$.

\begin{definition}
  \label{def:gates}
  For $k\geq 2$, define the following standard morphisms in $\Pi_k$:
  \begin{align*} 
    -1 &= \zetak^{2^{k - 1}} \colon 1 \to 1 &
    X &= \sigma_\oplus \colon 2 \to 2 \\
    H &= (\zetak^{2^{k - 3}})^7 \bullet T_k^{2^{k-2}}VT_k^{2^{k-2}} \colon 2 \to 2 \enspace (k \ge 3)&
    T_k &= \myid \oplus \zeta_k \colon 2 \to 2 
  \end{align*}
\end{definition}

The ring $\DD[\zetak]$ has a conjugation $x+\zetak y \mapsto (x+\zetak y)^* = x-\zetak y$, and this ring automorphism extends componentwise to an automorphism on the ring of matrices with entries in $\DD[\zetak]$. We now define a syntactic analogue.

\begin{definition}
  Let $a\in\Pi_k$ for $k \geq 2$. The \emph{conjugate} $a^*$ of $a$ is defined inductively by:
  \begin{itemize}
    \item $(\myid_n)^* = \myid_n$;
    \item $(\sigma)^* = \sigma$;
    \item $(\zeta_{k})^* = - \zeta_{k}$;
    \item $(V)^* = V$;
    \item $(ab)^* = (a)^*(b)^*$, for any $a,b \colon m \to m$;
    \item $(a\oplus b)^* = (a)^* \oplus (b)^*$, for any $a \colon m \to m$ and $b \colon n \to n$.
  \end{itemize}
\end{definition}

A straightforward induction shows that $\sem{a^*} = \sem{a}^*$.

\begin{definition}\label{catalyticembedding}
  Let $k\geq 3$. The \emph{catalytic embedding} $\Phi_k \colon
  \Pi_{k} \to \Pi_{k-1}$ is the symmetric monoidal functor defined on
  objects by $\Phi_k(n) = 2n$ and on morphisms by
  \begin{itemize}
    \item $\Phi_k(\myid_n) = \myid_n \oplus \myid_n$;
    \item $\Phi_k(\sigma) = \sigma\oplus \sigma$;
    \item $\Phi_k(\zeta_{k}) = X \circ (\myid \oplus \zeta_{k-1})$;
    \item $\Phi_k(V) = V\oplus V$;
    \item $\Phi_k(ab) = \Phi_k(a)\Phi_k(b)$ for $a,b\colon n \to n$;
    \item $\Phi_k(a\oplus b) =\sigma^{\otimes}_{2,m+n} \Phi_k(a) \oplus \Phi_k(b) \sigma^{\otimes}_{2,m+n}$, for $a \colon m \to m$ and $b \colon n \to n$.
  \end{itemize}
  For $n \in \N$, the \emph{catalyst} $c_{k,n} \colon 2n \to 2n$ is
  $
    c_{k,n} = (H \circ T_{k}) \otimes \myid_n
  $.
\end{definition}

\begin{lemma}\label{catalysissum}
  If $a \colon n \to n$ in $\Pi_k$ for $k \geq 3$, then
  $
    c_{k,n} \Phi_k(a) c_{k,n}^\dag \sim_{k} a \oplus a^*
  $.
\end{lemma}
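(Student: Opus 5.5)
Since $\Phi_k$ is a symmetric monoidal functor defined by structural recursion, and conjugation $(-)^*$ is likewise defined by recursion on the same grammar, I would prove $c_{k,n}\Phi_k(a)c_{k,n}^\dag \sim_k a \oplus a^*$ by induction on the structure of $a$. Write $c_n := c_{k,n} = (HT_k)\otimes \myid_n$, and regard $\Phi_k(a)$ as acting on $2\otimes n$, with the outer qubit as catalyst. The point of $c$ is that its leading factor $HT_k$ diagonalises $X$ on the catalyst: $X$ has eigenvectors $\ket{\pm}$, and $X(\myid\oplus\zeta_{k-1})$ becomes diagonal with entries $\zeta_k,\ -\zeta_k = \zeta_k^*$ after conjugating by $HT_k$. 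This is the general mechanism that turns a catalytic qubit into a direct sum $a\oplus a^*$.

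\textbf{Compositional steps.} I would first handle the recursive clauses, which are routine given centrality and functoriality. For composition, $c\,\Phi_k(ab)\,c^\dag = (c\Phi_k(a)c^\dag)(c\Phi_k(b)c^\dag)$ because $c^\dag c=\myid$. This holds since $H$ and $T_k$ are unitary in $\Pi_k$; the relation $HH=\myid$ was derived in the Reasoning section, and $T_k^\dag T_k = \myid$ follows from \eqref{eq:axiom3}. Then $(a\oplus a^*)(b\oplus b^*) = ab\oplus (ab)^*$ by interchange. For sums, $\Phi_k(a\oplus b)$ is conjugated by $\sigma^\otimes_{2,m+n}$, which reorders the wires from catalyst-outer to catalyst-inner. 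Using \Cref{reduction}, this reindexing together with $c_{m+n}$ can be rewritten in terms of $c_m$, $c_n$ and additive symmetries. The result is $a\oplus b\oplus a^*\oplus b^*$ up to a block permutation, and that permutation should be exactly the one folded into the definition of $\Phi_k(a\oplus b)$. The identity and $\sigma$ clauses are immediate because $\Phi_k$ acts as $(-)\oplus(-)$ on them and $c$ commutes with $\myid_2\otimes(-)$.

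\textbf{Generator $V$.} We have $\Phi_k(V)=V\oplus V = \myid_2\otimes V$ up to the reindexing, which commutes with $c_2=(HT_k)\otimes\myid_2$. After the reindexing this gives $V\oplus V = V\oplus V^*$, which is what the lemma requires since $V^*=V$.

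\textbf{Generator $\zeta_k$ (main obstacle).} The base case $a=\zeta_k$ with $n=1$ is where the real content lies. We must show, within the axioms of $\Pi_{k}$ for $k\ge 3$, that
\[
HT_k\,X(\myid\oplus\zeta_{k-1})\,T_k^\dag H \sim_k \zeta_k\oplus(-\zeta_k).
\]
Here $\zeta_{k-1}$ is identified inside $\Pi_k$ as $\zeta_k^2$, so that $\Phi_k$ makes sense as a map into $\Pi_k$ for this comparison. My first attempt would be the following rewriting:
\begin{enumerate}
\item Commute $T_k$ past $X$ using $X(\myid\oplus s)=(s\oplus\myid)X$.
\item Combine the phases, turning $T_k^{-1}X(\myid\oplus\zeta_k^2)T_k$ into $\zeta_k\cdot X$ via $(\zeta_k\oplus\myid)X(\myid\oplus\zeta_k)$-style manipulations, as in the $SXS=\omega^2X$ computation.
\item Reduce to showing $H X H = Z := \myid\oplus -1$.
\end{enumerate}
Since $H = \omega^{-1}VSV$ (or its variant in \Cref{def:gates}), the identity $HXH=Z$ should follow from \eqref{eq:axiom1} ($X=V^2$) and the braid relation \eqref{eq:axiom2}. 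Concretely, $VSV\,VV\,VSV$ collapses by repeatedly applying $VSV=SVS$ together with $S^2=Z$. I expect this bookkeeping, including getting the global phase $\omega$-powers to cancel, to be the hardest step. If a direct derivation proves awkward, I would fall back on the fact that $\sim_k$ may be checked modulo auxiliary wires, and verify the single-qubit identity by a short search over words in $V,S$.
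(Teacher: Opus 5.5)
Your proposal follows the paper's proof: structural induction on $a$ with the same cases, the sum case handled by pushing $c_{k,m+n}$ through $\sigma^{\otimes}_{2,m+n}$ to obtain $c_{k,m}\oplus c_{k,n}$, and the $\zeta_k$ case reduced, via $\zeta_{k-1}=\zeta_k^2$, to $T_kXT_k=\zeta_k\cdot X$ and $HXH^\dagger=Z$. Two small corrections. First, in step~(2) the expression should be $T_kX(\myid\oplus\zeta_k^2)T_k^{-1}=T_kXT_k$; your $T_k^{-1}X(\myid\oplus\zeta_k^2)T_k$ equals $(\zeta_k^3\oplus\zeta_k^{-1})X$, not $\zeta_k\cdot X$. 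Second, no fallback is needed for $HXH=Z$, and the one you propose would wrongly treat $\sim_k$ as $\approx_k$: the identity follows directly, since $HXH=\omega^{-2}VS^2V$, and $(SVS)^2=(VSV)^2=\omega^2$ gives $VS^2V=\omega^2S^{-2}=\omega^2Z$.
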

\begin{proof}
  Proceed by structural induction on $a$. If $a=\myid_n$, then
  \begin{align*}
    c_{k,n} \circ \Phi_k(a) \circ c_{k,n}^\dagger
    &\sim_k c_{k,n} \circ \Phi_k(\myid_n) \circ c_{k,n}^\dagger \\
    &\sim_k (HT \otimes \myid_n) \circ \myid_n\oplus \myid_n \circ (T^\dagger H^\dagger \otimes \myid_n) \\
    &\sim_k (HT \otimes \myid_n) \circ \myid_2\otimes \myid_n \circ (T^\dagger H^\dagger \otimes \myid_n) \\    
    &\sim_k (HT \circ \myid_2 \circ T^\dagger H^\dagger) \otimes \myid_n\\
    &\sim_k \myid_2 \otimes \myid_n \\
    &\sim_k \myid_n \oplus \myid_n \\
    &\sim_k a \oplus a^*.
  \end{align*}
  If $a=\sigma_{m,n}$, then
  \begin{align*}
    c_{k,m+n} \circ \Phi_k(a) \circ c_{k,m+n}^\dagger
    &\sim_k c_{k,m+n} \circ \Phi_k(\sigma_{m,n}) \circ c_{k,m+n}^\dagger \\
    &\sim_k (HT \otimes \myid_{m+n}) \circ \sigma_{m,n}\oplus \sigma_{m,n} \circ (T^\dagger H^\dagger \otimes \myid_{m+n}) \\
    &\sim_k (HT \otimes \myid_{m+n}) \circ \myid_2 \otimes \sigma_{m,n} \circ (T^\dagger H^\dagger \otimes \myid_{m+n}) \\
    &\sim_k (HT \circ \myid_2 \circ T^\dagger H^\dagger) \otimes \sigma_{m,n}  \\
    &\sim_k \myid_2 \otimes \sigma_{m,n} \\
    &\sim_k \sigma_{m,n} \oplus \sigma_{m,n} \\
    &\sim_k a\oplus a^*
  \end{align*}
  If $a=\zeta_{k}$, then
  \begin{align*}
    c_{k,1} \circ \Phi_k(a) \circ c_{k,1}^\dagger
    &\sim_k c_{k,1} \circ \Phi_k(\zeta_{k}) \circ c_{k,1}^\dagger \\
    &\sim_k (HT \otimes \myid_1) \circ (X\circ (\myid \oplus \zeta_{k-1})) \circ (T^\dagger H^\dagger \otimes \myid_1) \\
    &\sim_k HT \circ (X\circ (\myid \oplus \zeta_{k-1})) \circ T^\dagger H^\dagger\\ 
    &\sim_k H\circ T \circ X \circ T^2 \circ T^\dagger \circ H^\dagger\\
    &\sim_k H\circ T \circ X \circ T \circ H^\dagger\\
    &\sim_k H\circ (\zeta_{k}\oplus \zeta_{k}) \circ X \circ H^\dagger\\
    &\sim_k (\zeta_{k}\oplus \zeta_{k}) \circ Z\\
    &\sim_k \zeta_{k}\oplus -\zeta_{k}    \\
    &\sim_k a\oplus a^*
  \end{align*}
  If $a=V$, then
  \begin{align*}
    c_{k,2} \circ \Phi_k(a) \circ c_{k,2}^\dagger
    &\sim_k c_{k,2} \circ \Phi_k(V) \circ c_{k,2}^\dagger \\
    &\sim_k (HT \otimes \myid_2) \circ V\oplus V \circ (T^\dagger H^\dagger \otimes \myid_2) \\
    &\sim_k (HT \otimes \myid_2) \circ \myid_2 \otimes V \circ (T^\dagger H^\dagger \otimes \myid_2) \\
    &\sim_k (HT \circ \myid_2 \circ T^\dagger H^\dagger) \otimes V  \\
    &\sim_k \myid_2 \otimes V \\
    &\sim_k V \oplus V\\
    &\sim_k a\oplus a^*
  \end{align*}
  If $a=a_1a_2$, for $a_1,a_2 \colon m \to m$ in $\Pi_k$, then by induction,
  \begin{align*}
    c_{k,m} \circ \Phi_k(a) \circ c_{k,m}^\dagger
    &\sim_k c_{k,m} \circ \Phi_k(a_1a_2) \circ c_{k,m}^\dagger \\
    &\sim_k c_{k,m} \circ \Phi_k(a_1)\Phi_k(a_2) \circ c_k^\dagger \\    
    &\sim_k c_{k,m} \circ \Phi_k(a_1)\circ c_{k,m} \circ c_{k,m}^\dagger \Phi_k(a_2) \circ c_k^\dagger \\        
    &\sim_k a_1 \oplus a_1^* \circ a_2 \oplus a_2^* \\
    &\sim_k (a_1a_2)\oplus (a_1^*a_2^*) \\
    &\sim_k a\oplus a^*
  \end{align*}
  Finally, consider the case $a=a_1\oplus a_2$, for $a_1 \colon m \to m$ and $a_2 \colon n \to n$ in $\Pi_k$. Observe
  \begin{align*}
    c_{k,m+n} \circ \sigma^{\otimes}_{2,m+n} 
    &\sim_k (HT \otimes \myid_{m+n})\circ \sigma^{\otimes}_{2,m+n} \\ 
    &\sim_k \sigma^{\otimes}_{2,m+n}\circ (\myid_{m+n} \otimes HT) \\ 
    &\sim_k \sigma^{\otimes}_{2,m+n}\circ ((\myid_{m}\oplus \myid_{n}) \otimes HT) \\     
    &\sim_k \sigma^{\otimes}_{2,m+n}\circ ((\myid_{m} \otimes HT) \oplus (\myid_{n} \otimes HT)) \\ 
    &\sim_k \sigma^{\otimes}_{2,m+n}\circ (c_{k,m} \oplus c_{k,n}).
  \end{align*}
  Hence, by induction,
  \begin{align*}
    c_{k,m+n} \circ \Phi_k(a) \circ c_{k,m+n}^\dagger
    &\sim_k c_{k,m+n} \circ \Phi_k(a_1\oplus a_2) \circ c_{k,m+n}^\dagger \\
    &\sim_k c_{k,m+n}\circ (\sigma^{\otimes}_{2,m+n} \Phi_k(a) \oplus \Phi_k(b) \sigma^{\otimes}_{2,m+n}) \circ c_{k,m+n}^\dagger\\ 
    &\sim_k \sigma^{\otimes}_{2,m+n} (c_{k,m} \oplus c_{k,n})  \Phi_k(a) \oplus \Phi_k(b) (c_{k,m}^\dagger \oplus c_{k,n}^\dagger) \sigma^{\otimes}_{2,m+n} \\
    &\sim_k \sigma^{\otimes}_{2,m+n} (a_1\oplus a_1^*) \oplus (a_2\oplus a_2^*) \sigma^{\otimes}_{2,m+n}    \\
    &\sim_k (a_1\oplus a_2)\oplus (a_1^*\oplus a_2^*) \\
    &\sim_k a \oplus a^*. 
  \end{align*}
\end{proof}

We can now prove \Cref{thm:precision} from the main article.

\begin{lemma}\label{lower}
  If $k\geq 3$, and $a,b \in \Pi_{k}$ satisfy $\sem{a}=\sem{b}$,
  then $\sem{\Phi_k(a)}=\sem{\Phi_k(b)}$.
\end{lemma}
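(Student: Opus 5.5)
The idea is to show that $\sem{\Phi_k(a)}$ is determined by $\sem{a}$ alone, via the catalysis identity of \Cref{catalysissum}. Apply the (symmetric monoidal) interpretation functor $\sem{-}$ of \Cref{freeinterpretation} to the relation $c_{k,n}\Phi_k(a)c_{k,n}^\dagger \sim_k a\oplus a^*$. Since $\sem{-}$ is defined on the quotient $\Pi_k$, and also on $\Pi_{k-1}$ (where $\Phi_k(a)$ lives and which maps into matrices over $\DD[\zeta_{k-1}]\subseteq\DD[\zeta_k]$), it respects $\sim_k$. We obtain
\[
  \sem{c_{k,n}}\,\sem{\Phi_k(a)}\,\sem{c_{k,n}}^\dagger = \sem{a}\oplus\sem{a^*} = \sem{a}\oplus\sem{a}^*,
\]
using the already noted fact $\sem{a^*}=\sem{a}^*$.

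Before relying on this, I would check one compatibility point. The catalysis identity compares a morphism of $\Pi_{k-1}$ with one of $\Pi_k$. So I must make sure that $\sem{-}$ on $\Pi_{k-1}$ agrees with the composite of the evident inclusion $\Pi_{k-1}\to\Pi_k$, $\zeta_{k-1}\mapsto\zeta_k^2$, followed by $\sem{-}$ on $\Pi_k$. This holds because $\sem{\zeta_{k-1}}=e^{2\pi i/2^{k-1}}=\sem{\zeta_k}^2$, and both functors are strict monoidal and agree on the generators.

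With that in place, $\sem{c_{k,n}}$ is a unitary matrix, being the interpretation of a morphism of $\Pi_k$. Its inverse is therefore $\sem{c_{k,n}}^\dagger=\sem{c_{k,n}^\dagger}$, and solving the displayed equation gives
\[
  \sem{\Phi_k(a)} = \sem{c_{k,n}}^\dagger\bigl(\sem{a}\oplus\sem{a}^*\bigr)\sem{c_{k,n}}.
\]
The right-hand side depends only on $n$ and $\sem{a}$. Here $n$ is forced, since $\sem{a}=\sem{b}$ implies $a,b$ have the same type $n\to n$. Hence if $\sem{a}=\sem{b}$, the same formula yields $\sem{\Phi_k(a)}=\sem{\Phi_k(b)}$.

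The main obstacle is not this short computation but the well-definedness it presupposes. \Cref{catalysissum} was proved by structural induction on terms, so $\Phi_k$ must be well defined on the quotient $\Pi_k$, or else the lemma is read for representative terms. Reading it for representatives is harmless here: the conclusion only concerns the matrices $\sem{\Phi_k(a)}$, and the formula above shows these depend only on $\sem{a}$, so the choice of representative does not matter.
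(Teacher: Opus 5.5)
Your proposal is correct and follows the paper's argument. You apply the sound interpretation $\sem{-}$ to the catalysis identity of \Cref{catalysissum}, obtaining $\sem{c_{k,n}}\sem{\Phi_k(a)}\sem{c_{k,n}}^\dagger=\sem{a}\oplus\sem{a}^*$, and then cancel the unitary catalyst. Your extra remarks do not change the route: checking that $\sem{-}$ is compatible with the inclusion $\Pi_{k-1}\to\Pi_k$, $\zeta_{k-1}\mapsto\zeta_k^2$, and arguing that working with representative terms is harmless both make explicit what the paper leaves implicit.
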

\begin{proof}
  Let $a,b \in \Pi_{k}$ and suppose that $\sem{a}=\sem{b}$, By \Cref{catalysissum}, 
  \[
  c_k \Phi_k(a) c_k^\dagger \sim_k a\oplus a^*
  \quad
  \mbox{and}
  \quad
  c_k \Phi_k(b) c_k^\dagger \sim_k b\oplus b^*. 
  \]
  Because the axioms are sound with respect to the interpretation, then
 \begin{align*}
 \sem{c_k}\sem{ \Phi_k(a)}\sem{c_k^\dagger}
  &= \sem{c_k \Phi_k(a) c_k^\dagger}  \\
  &= \sem{ a\oplus a^*} \\
  &= \sem{a}\oplus\sem{a}^* \\
  &= \sem{b}\oplus\sem{b}^*\\
  &= \sem{b\oplus b^*}  \\
  &= \sem{c_k \Phi_k(b) c_k^\dagger}\text.
  \end{align*}
  We conclude that
  \[
  \sem{\Phi_k(a)} 
  = \sem{c_k^\dagger} \sem{c_k}\sem{ \Phi_k(a)}\sem{c_k^\dagger} \sem{c_k}
  = \sem{c_k^\dagger} \sem{c_k}\sem{ \Phi_k(b)}\sem{c_k^\dagger} \sem{c_k}
  = \sem{\Phi_k(b)} . \qedhere
  \]  
\end{proof}

Write $\approx_k$ for the smallest congruence of bipermutative categories
containing $\sim_k$ and $\approx$.  It is clear that if $a \approx_k
b$ for $a,b \in \Pi_k$, then $\sem{a}=\sem{b}$, that is, the equations
of $\Pi_k$ are \emph{sound}. The following theorem shows
\emph{completeness}, which is the converse.

\begin{theorem}
  \label{thm:completeness}
  Let $k\geq 2$. If $a,b \in \Pi_k$ satisfy $\sem{a}=\sem{b}$, then $a
  \approx_k b$.
\end{theorem}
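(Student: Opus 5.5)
Proof proposal. We argue by induction on $k$. For the base case $k=2$ we use \Cref{thm:standardmodel}: the functor $\sem{-}\colon \Pi_2 \to \cat{Unitary}(\DD[\zeta_2])$ of \Cref{freeinterpretation} is an isomorphism, which rests on the known complete presentation of $\cat{Unitary}(\DD[\zeta_2])$~\cite{caretteetal:sqrtpi}. Hence $\sem{a}=\sem{b}$ already gives $a \sim_2 b$, and so $a \approx_2 b$. If one wants to avoid relying on strong cancellativity of $\Pi_2$ at this point, it is still enough that $\Pi_2$ satisfies every generating equation of the target presentation. That is exactly the content of the proof sketch of \Cref{thm:standardmodel}.

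For the inductive step, fix $k \ge 3$ and assume completeness holds for $\Pi_{k-1}$. Take $a,b\colon n\to n$ in $\Pi_k$ with $\sem{a}=\sem{b}$.
\begin{enumerate}
\item By \Cref{lower}, $\sem{\Phi_k(a)}=\sem{\Phi_k(b)}$ in $\Pi_{k-1}$.
\item The induction hypothesis then gives $\Phi_k(a)\approx_{k-1}\Phi_k(b)$.
\end{enumerate}
Next we transport this back to $\Pi_k$. Every generator of $\Pi_{k-1}$ can be expressed in $\Pi_k$ by sending $\zeta_{k-1}$ to $\zeta_k^2$, and this inclusion $\iota\colon\Pi_{k-1}\to\Pi_k$ respects Axioms~\axref{eq:axiom1}--\axref{eq:axiom3}. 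The only thing to check is that $S$ is preserved, and this holds since $\zeta_{k-1}^{2^{k-3}}\mapsto\zeta_k^{2^{k-2}}$. Consequently $\iota$ also respects $\approx$, because it is a monoidal functor and $\approx$ is defined by an existential over sums. So $\iota\Phi_k(a)\approx_k\iota\Phi_k(b)$.

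Now conjugate both sides by the catalyst $c_{k,n}$. Since $\approx_k$ is a congruence, and $c_{k,n}$ is unitary in $\Pi_k$ (which needs $c^\dagger c\sim_k \myid$, provable from $HH=\myid$ and $T_kT_k^\dagger=\myid$), \Cref{catalysissum} gives
\[
a\oplus a^* \sim_k c_{k,n}\,\iota\Phi_k(a)\,c_{k,n}^\dagger \approx_k c_{k,n}\,\iota\Phi_k(b)\,c_{k,n}^\dagger \sim_k b\oplus b^*\text.
\]
By definition of $\approx$, the equation $a\oplus a^*\approx_k b\oplus b^*$ says precisely that there are $d,d'$ with $a\oplus a^*\oplus d \sim_k b\oplus b^*\oplus d'$, up to the congruence closure. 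Hence $a\approx b$ with witnesses $a^*\oplus d$ and $b^*\oplus d'$, and so $a\approx_k b$.

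The main obstacle is this last cancellation step. The issue is that $\approx_k$ is the smallest congruence containing both $\sim_k$ and $\approx$, so a chain of mixed steps must itself be shown to collapse into a single witness of the form $a\oplus d = b\oplus d'$ modulo $\sim_k$. We plan to handle it by proving that on the quotient $\Pi_k/{\sim_k}$ the relation $\approx$ is already a congruence, by \Cref{congruence}. It therefore contains $\sim_k$, which means $\approx_k$ coincides with $\approx$ computed in $\Pi_k/{\sim_k}$, and then strong cancellativity of the quotient does the rest.

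A secondary delicate point is that \Cref{catalysissum} silently identifies $\Phi_k(a)$, which is a $\Pi_{k-1}$-morphism, with its image in $\Pi_k$. We would make this identification explicit through $\iota$ and check that it is harmless. This amounts to checking that $\Phi_k(\zeta_k)$ becomes $X(\myid\oplus\zeta_k^2)$, which is what the $\zeta_k$ case of the lemma actually computes.
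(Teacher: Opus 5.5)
Your proposal is correct and follows the paper's proof. Both argue by induction on $k$, take the base case $k=2$ from the known presentation, and then chain \Cref{lower}, the induction hypothesis, conjugation by the catalyst via \Cref{catalysissum}, and strong cancellation. Your explicit inclusion $\iota\colon\Pi_{k-1}\to\Pi_k$, and your observation that $\approx_k$ is just $\approx$ computed in $\Pi_k$ (so $a^*\oplus d$ and $b^*\oplus d'$ witness $a\approx_k b$), make precise what the paper leaves implicit, including its index slips between $\Phi_k$ and $\Phi_{k+1}$.
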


\begin{proof}
  The proof proceeds by induction on $k$. The base case $k=2$ is
  known~\cite{caretteetal:sqrtpi}. Now assume that the property holds
  for $k$, and suppose that $a,b \in \Pi_{k+1}$ satisfy
  $\sem{a}=\sem{b}$.  Then $\Phi_k(a),\Phi_k(b)\in \Pi_k$ and, by
  \Cref{lower}, $\sem{\Phi_{k}(a)}=\sem{\Phi_{k}(b)}$.  Therefore, by
  the induction hypothesis, $\Phi_{k}(a) \approx_k \Phi_{k}(b)$.
  Hence, $\Phi_{k}(a) \approx_{k+1} \Phi_{k+1}(b)$, and thus $C
  \Phi_{k}(a) C^\dag \approx_{k+1} C \Phi_{k+1}(b) C^\dag$.
  \Cref{catalysissum} now provides $a',b' \in \Pi_{k+1}$ satisfying
  \[
  a \oplus a' \approx_{k+1} C \Phi_{k+1}(a) C^\dag \approx_{k+1} C
  \Phi_{k+1}(b) C^\dag \approx_{k+1} b \oplus b'\text.
  \]
  Thus $a \approx_{k+1} b$ by strong cancellation.
\end{proof}

The goal of the rest of this Supporting Information is to prove completeness, in \Cref{completeness} below, of the free model including measurement. The main idea is to reduce to an existing algebraic axiomatisation~\cite{staton:quantum}. This is achieved in four stages. After defining the free model of measurement, we make precise how a model can validate equations. For this to make sense, we then show that the free model including measurement supports classical control, by first reducing to manipulations of injections between finite sets. Finally, we can then verify the required axioms.

Measurement can be freely added to any model in three steps, as follows~\cite{heunenkaarsgaard:informationeffects,andresmartinezheunenkaarsgaard:universal}. Let $\cat{C}$ be a rig category. 
\begin{enumerate}
  \item 
  Freely add the ability to initialise states. This results in a category $R(\cat{C})$ that has the same objects as $\cat{C}$, but morphisms $A \to B$ in $R(\cat{C})$ are morphisms $A \oplus H \to B$ in $\cat{C}$ for some \emph{heap} object $H$, where morphisms are identified when they are equal up to preprocessing the heap. There is a unique morphism $O \to A$ in $R(\cat{C})$ for each object $A$, that initialises states. If $\cat{C}$ is the PROP of unitary matrices, then $R(\cat{C})$ is the category of isometries.

  \item 
  Freely add decoherence. This results in a category $\LR(\cat{C})$ that has the same objects as $\cat{C}$, but morphisms $A \to B$ in $\LR(\cat{C})$ are morphisms $A \to B \otimes G$ in $R(\cat{C})$ for some \emph{garbage} object $G$, where morphisms are identified when they are equal up to postprocessing the garbage. There is a unique morphism $A \to I$ in $\LR(\cat{C})$ for each object $A$, that discards the system $A$ by removing it from the control of the experimenter to the environment, according to Zurek's interpretation of decoherence~\cite{zurek:decoherence}.
  If $\cat{C}$ is the PROP of unitary matrices, then $\LR(\cat{C})$ is the category of finite-dimensional Hilbert spaces and completely positive linear maps, also known as quantum channels.

  \item 
  Freely add classical control, \textit{i.e.}\ the ability to branch on measurement outcomes. This results in a category $\Split(\LR(\cat{C}))$, where objects are projections $p^\dag = p^2 = p \colon A \to A$ in $\LR(\cat{C})$, and where morphisms $p \to p'$ are $f \colon A \to A'$ in $\LR(\cat{C})$ satisfying $p'fp=f$. 
  This is also known as the \emph{Cauchy completion} or \emph{Karoubi envelope}~\cite{selinger:idempotent}, and has the universal property that projections on $B$ in $\LR(\cat{C})$ correspond to embeddings $A \to B$ in $\Split(\LR(\cat{C}))$.
  If $\cat{C}$ is the PROP of unitary matrices, then $\Split(\LR(\cat{C}))$ is the category of finite-dimensional C*-algebras and quantum channels.
\end{enumerate}
The $\oplus$ of the original rig category $\cat{C}$ becomes a coproduct in $\Split(\LR(\cat{C}))$, and for this to happen there is no choice in the order in which these steps are taken~\cite{heunenkaarsgaard:informationeffects,andresmartinezheunenkaarsgaard:universal}.

\section*{Algebraic theories}\label{sec:structures}
An important step in the axiomatisation of quantum computation is Staton's algebraic theory of quantum computation~\cite{staton:quantum}. This shows that the theory of finite-dimensional C*-algebras and quantum channels can be reduced to a small number of generating spaces, channels, and equations between them. In brief, an algebraic theory consists of two things:
\begin{itemize}
  \item Some \emph{operations}, each carrying an arity $p \to (n_1,
    \dots, n_m)$, meaning that the operation accepts $p$ parameters and produces a number of results in the list $n_1, \dots, n_m$. Operations can be combined
    both in sequence (when the number of parameters match) and in parallel
    to form \emph{terms} $t$, which describe computations.
  \item Some \emph{equations} describing which terms are equal.
\end{itemize}
As the name suggests, this setup is entirely analogous to that found
in algebra -- for example in combinatorial group theory, where groups
are described in terms of a number of elementary operations
(\emph{generators}) and equations between elements of the group
(\emph{relations}).

The algebraic theory of quantum computation consists of three
operations (and twelve equations, which we disregard for now). The
operations are:
\begin{itemize}
  \item $\mathsf{new} \colon 0 \to 1$ which allocates a new qubit in the state $\ket{0}$.
  \item $\mathsf{measure} \colon 1 \to (0,0)$ which measures a qubit in
    the computational basis and records the outcome in the choice of output.
  \item $\mathsf{apply}_U \colon n \to n$ which applies an $n$-qubit unitary $U$.
\end{itemize}
An algebraic theory can be interpreted in a category that has enough structure to support the operations and equations:
\begin{enumerate}
\item It must have coproducts, it must have a distinguished object $X$, and for 
  each operation $O \colon n \to n_1, \dots, n_m$ a morphism
  $\sem{O} \colon n \bullet X \to (n_1 \bullet X + \cdots + n_m \bullet X)$ for an
 \emph{action} $n \bullet -$ of
  the PROP $\cat{FinPerm}$ whose morphisms $n \to n$ are the permutations of $\{1,\ldots,n\}$ (and there are no morphisms $n \to n'$ if $n\neq n'$)~\cite{janelidzekelly:action}.
\item For each equation $s=t$, it must be the case that the morphisms
  $\sem{s}$ and $\sem{t}$ coincide: $\sem{s} = \sem{t}$.
\end{enumerate}
Such a category is called a \emph{model} of the algebraic theory.

The algebraic theory of quantum computation has a natural model given
by the category of finite-dimensional C*-algebras and quantum
channels.  Here, a term $t$ of arity $(p \mid n_1, \dots, n_m)$
describes a quantum channel $\sem{t}$ from the algebra
$M_p(\mathbb{C})$ of $2^p$-by-$2^p$ matrices to the direct sum of matrix algebras
$M_{n_1}(\mathbb{C}) \oplus \cdots \oplus
M_{n_m}(\mathbb{C})$. Equations $t=u$ are assertations of the fact
that the quantum channels $\sem{t}$ and $\sem{u}$ do the same thing,
i.e., $\sem{t} = \sem{u}$. What is surprising is that this model is
\emph{complete}: if $\sem{t} = \sem{u}$ happens to be the case (for
any choice of quantum channel $t$ and $u$), then this can be
established purely by applying the equations of the algebraic theory.

Our goal is to prove that $\Split(\LR(\Pi_k))$ is also a complete model of Staton's algebraic theory of quantum computation. For that to make sense, we first explain in full detail how categories can interpret algebraic theories.

For a category $\cat{C}$ to host a structure for the signature of an
algebraic theory, it must have an \emph{action} for $\cat{FinPerm}$:
there must be a functor $\bullet \colon \cat{FinPerm} \times \cat{C}
\to \cat{C}$ and natural isomorphisms $0 \bullet X \cong X$ and $(n+m)
\bullet X \cong n \bullet (m \bullet X)$ subject to certain coherence
conditions. We start by showing that any symmetric monoidal category
$\cat{C}$ gives rise to a family of canonical actions for
$\cat{FinPerm}$.

Let $A^{\otimes n}$ denote the $n$-fold monoidal product of an object
$A$ with itself (choosing a canonical bracketing, e.g., leaning all
parentheses to the left), with $A^{\otimes 0} = I$.
Define the \emph{exponential action} of an object $S$ in $\cat{C}$ to
be the functor $\bullet \colon \cat{FinPerm} \times \cat{C} \to \cat{C}$ given
by $n \bullet X = S^{\otimes n} \otimes X$ on objects, and on
morphisms $(\pi : n \to n) \bullet (f \colon X \to Y) = \lceil\pi\rceil
\otimes f$, where $\lceil\pi\rceil \colon S^{\otimes n} \to S^{\otimes n}$
is the unique permutation that uses the natural symmetry $\sigma_{X,Y}
\colon X \otimes Y \to Y \otimes X$ to permute the $n$ different $S$-wires
according to the permutation $\pi \colon n \to n$.
\begin{lemma}\label{lem:expact}
  When $(\cat{C},\otimes,I)$ is a symmetric monoidal category and $S$
  is an object of $\cat{C}$, the \emph{exponential action of $S$ on
  $\otimes$} is an action of $\cat{FinPerm}$ on $\cat{C}$.
\end{lemma}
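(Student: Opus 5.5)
We have to check three things: that $\bullet$ is a functor $\cat{FinPerm} \times \cat{C} \to \cat{C}$, that the two structure isomorphisms exist and are natural, and that the action coherence conditions hold. The approach is to reduce everything to Mac Lane's coherence theorem for symmetric monoidal categories. Concretely, we use the consequence that the assignment $\pi \mapsto \lceil\pi\rceil$ is a well-defined strict monoidal functor from $\cat{FinPerm}$ (with $+$ as tensor) into $\cat{C}$. In other words, any two composites of symmetries and associators that induce the same underlying permutation of $S$-wires are equal.

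\textbf{Functoriality.} Well-definedness of $\lceil\pi\rceil$ is exactly the coherence statement just quoted: it does not depend on the chosen decomposition of $\pi$ into transpositions. The same statement gives $\lceil\myid\rceil = \id$ and $\lceil\pi\rho\rceil = \lceil\pi\rceil\circ\lceil\rho\rceil$, since both sides are symmetry composites realising the same permutation. Functoriality of $\otimes$ then gives
\[
  (\pi\rho)\bullet(fg) = \lceil\pi\rceil\lceil\rho\rceil \otimes fg = (\lceil\pi\rceil\otimes f)\circ(\lceil\rho\rceil\otimes g),
\]
and identities are preserved as well.

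\textbf{Structure isomorphisms.} For the unit we take $0\bullet X = I\otimes X \cong X$, the left unitor. For the multiplication we take $(n+m)\bullet X = S^{\otimes(n+m)}\otimes X \cong S^{\otimes n}\otimes(S^{\otimes m}\otimes X) = n\bullet(m\bullet X)$, built from the canonical rebracketing $S^{\otimes (n+m)}\cong S^{\otimes n}\otimes S^{\otimes m}$ followed by an associator. Naturality in $X$ is naturality of the unitor and associator. Naturality in the permutation argument needs $\lceil \pi + \rho\rceil$ to correspond to $\lceil\pi\rceil\otimes\lceil\rho\rceil$ under the rebracketing, which is again an instance of coherence: both sides are structural morphisms over the same permutation of wires.

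\textbf{Coherence axioms.} The action axioms are the pentagon-like condition comparing the two isomorphisms $(n+m+l)\bullet X \cong n\bullet(m\bullet(l\bullet X))$, together with the unit triangles involving $0$. Every morphism in these diagrams is built solely from associators, unitors and identities; no symmetry is involved except $\lceil \myid \rceil$. Each diagram therefore commutes by Mac Lane's coherence theorem for monoidal categories, applied to the free monoidal category on the generators $S$ and $X$.

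The main obstacle is the bookkeeping for naturality in the $\cat{FinPerm}$ argument. The point to be careful about is the compatibility $\lceil\pi+\rho\rceil \simeq \lceil\pi\rceil\otimes\lceil\rho\rceil$ across the fixed left-leaning bracketing. I would handle it first by passing to a strictification of $\cat{C}$, where $S^{\otimes(n+m)} = S^{\otimes n}\otimes S^{\otimes m}$ on the nose, so that the identity becomes immediate, and then transport the result back along the monoidal equivalence.
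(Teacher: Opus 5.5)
Your proposal is correct and takes essentially the same route as the paper: functoriality comes from $\lceil - \rceil$ respecting composition of permutations, the structure isomorphisms come from the unitor and the associator, and the action coherence conditions follow by passing to the strict case (or equivalently Mac Lane coherence), where every structure map is an identity. You are more explicit than the paper in two places: you ground well-definedness of $\lceil\pi\rceil$ in symmetric monoidal coherence, and you check naturality of $(n+m)\bullet X \cong n\bullet(m\bullet X)$ in the permutation argument, via $\lceil\pi+\rho\rceil$ corresponding to $\lceil\pi\rceil\otimes\lceil\rho\rceil$, which the paper leaves implicit.
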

\begin{proof}
  To see that $\bullet : \cat{FinPerm} \times \cat{C} \to \cat{C}$ is
  bifunctorial, we notice that
  \begin{equation*}
    \mathrm{id}_n \bullet \mathrm{id}_X =
    \lceil\mathrm{id}_n\rceil \otimes \mathrm{id}_X =
    \mathrm{id}_{S^{\otimes n}} \otimes \mathrm{id}_X
    = \mathrm{id}_{S^{\otimes n} \otimes X} = \mathrm{id}_{n \bullet X}
  \end{equation*}
  and
  \begin{align*}
    (\pi_2 \bullet g) \circ (\pi_1 \bullet f) & =
    (\lceil\pi_2\rceil \otimes g) \circ (\lceil\pi_1\rceil \otimes f) \\
    & = (\lceil\pi_2\rceil \circ \lceil\pi_1\rceil) \otimes (g \circ f) \\
    & = \lceil\pi_2 \circ \pi_1\rceil \otimes (g \circ f) \\
    & = (\pi_2 \circ \pi_1) \bullet (g \circ f)
  \end{align*}
  using the fact that permuting $n$ wires according to first $\pi_1$ and then
  $\pi_2$ is exactly the same as permuting them according to their composition
  $\pi_2 \circ \pi_1$. To see that this is an action, we notice first that
  \begin{equation*}
    0 \bullet X = S^{\otimes 0} \otimes X = I \otimes X \cong X
  \end{equation*}
  and
  \begin{align*}
    (n+m) \bullet X & = S^{\otimes n+m} \otimes X \\
    & \cong S^{\otimes n} \otimes (S^{\otimes m} \otimes X) \\
    & = n \bullet (m \bullet X)
  \end{align*}
  where the isomorphism $S^{\otimes n+m} \otimes X \cong S^{\otimes n}
  \otimes (S^{\otimes m} \otimes X)$ uses associativity to move
  parentheses accordingly. As for the coherence conditions, we may
  freely assume that $\cat{C}$ is strict symmetric monoidal, and
  notice that $\cat{FinPerm}$ already is strict symmetric monoidal.
  In this strict case, we notice that the isomorphisms $\vartheta :
  (n+m) \bullet X \to n \bullet (m \bullet X)$ and $\nu : 0 \bullet X
  \to X$ given above, as well as all coherence isomorphisms of
  $\cat{C}$ and $\cat{FinPerm}$, are identities, so the three
  coherence conditions~\cite{janelidzekelly:action} trivially
  commute (by virtue of every morphism involved being the identity).
\end{proof}

We are now ready to define structures for a signature. Instead of the Heisenberg picture~\cite{staton:quantum}, we will use the Schrödinger picture.
For the following definition to make sense, there must be coproducts, which we will avail ourselves in \cref{prop:coproducts} below.

\begin{definition}
  A \emph{signature} is a set of operations with arity $O \colon p \to (m_1,\ldots,m_k)$, intuitively specifying that operation $O$ takes $p$ many inputs and can produce either $m_1$, $m_2$, \ldots, or $m_k$ many outputs.
  Let $\cat{C}$ have an action of $\cat{FinPerm}$. A \emph{structure}
  for a signature in $\cat{C}$ is an object $X$ together with, for
  each operation $O \colon p \to (m_1, \ldots, m_k)$ a morphism $p \bullet X
  \to (m_1 \bullet X) + \cdots + (m_k \bullet X)$.
\end{definition}

\section*{The free coaffine rig category}
\label{subsec:fininjinitial}

Many of the equations we will be interested in are easily seen to hold in the PROP $\cat{FinInj}$ whose morphisms $m \to n$ are injections $\{1,\ldots,m\} \rightarrowtail \{1,\ldots,n\}$. The rig structure is given on objects by
addition and multiplication of natural numbers. On morphisms $f \colon n
\to m$ and $g \colon p \to q$, their sum $f \oplus g \colon n+p \to m+q$ is
$$
(f \oplus g)(a) = \begin{cases}
  f(a) & \text{if } a \leq n \\
  m+g(a-n) & \text{otherwise}
\end{cases}
$$
while their product $f \otimes g \colon np \to mq$ is given by the
function 
\begin{align*}
p_{f,g} \colon n \times p & \to m \times q \\
(a,b) &\mapsto (f(a),g(b))
\end{align*}
conjugated by the Cantor permutation $\{1,\ldots,np\} \to \{1,\ldots,n\} \times \{1,\ldots,p\}$.

\begin{lemma}\label{lem:fininj}
  $R(\cat{FinPerm})$ is rig isomorphic to $\cat{FinInj}$.
\end{lemma}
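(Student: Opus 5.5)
I will define an explicit identity-on-objects functor $F \colon R(\cat{FinPerm}) \to \cat{FinInj}$, show that it is well defined, full and faithful, and then check that it preserves both monoidal structures.

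\textbf{The functor.} Recall that a morphism $m \to n$ in $R(\cat{FinPerm})$ is represented by a morphism $m \oplus h \to n$ in $\cat{FinPerm}$ for some heap $h$. Such a representative is a bijection $\pi \colon \{1,\ldots,m+h\} \to \{1,\ldots,n\}$, so in particular $n = m+h$. Two representatives $\pi$ and $\pi'$ with the same heap are identified when $\pi' = \pi \circ (\id[m] \oplus \rho)$ for some permutation $\rho$ of the heap.

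I send $\pi$ to its restriction $F(\pi) = \pi \circ \iota_m$. Here $\iota_m \colon m \rightarrowtail m+h$ is the first coproduct injection in $\cat{FinInj}$, so $F(\pi)$ is an injection $m \rightarrowtail n$.

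This is well defined on equivalence classes: precomposing with $\id[m] \oplus \rho$ only permutes the heap indices, so it does not change the restriction to the first $m$ points.

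Functoriality reduces to how composition works in $R$. Given $\pi \colon m \oplus h \to n$ and $\pi' \colon n \oplus h' \to p$, the composite is $\pi' \circ (\pi \oplus \id[h'])$ on $m \oplus h \oplus h'$. Restricting this to the first $m$ points gives $\pi' \circ \iota_n \circ \pi \circ \iota_m$, which is $F(\pi') \circ F(\pi)$. Identities are sent to identities because the identity in $R$ has the empty heap.

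\textbf{Bijectivity on hom-sets.} This is the step I expect to be the main obstacle, because one must check that the quotient by heap preprocessing is exactly the right one.

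For surjectivity, take an injection $f \colon m \rightarrowtail n$. Set $h = n-m$ and extend $f$ to a bijection by sending $m+1,\ldots,n$ in increasing order onto the complement of the image of $f$.

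For injectivity, suppose $F(\pi) = F(\pi')$. Both $\pi$ and $\pi'$ then have heap $n-m$, since a heap is forced by the cardinalities. They agree on $\{1,\ldots,m\}$, so $\pi^{-1}\pi'$ fixes $\{1,\ldots,m\}$ pointwise and therefore has the form $\id[m] \oplus \rho$. This is precisely the generating identification in $R$.

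The subtlety to address explicitly is that the paper describes $R$ informally. Morphisms with heaps of different sizes might also be identified, for instance by padding a heap with extra points. I would either argue that this cannot occur in $\cat{FinPerm}$, since the object equation $m+h = n$ pins down $h$, or show that the generated equivalence relation collapses exactly to "equal restriction". I would use the description of $R$ as a quotient by the relation generated by heap morphisms, as in \cite{heunenkaarsgaard:informationeffects}.

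\textbf{The rig structure.} On objects $F$ is the identity, so the remaining work is on morphisms.

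For $\oplus$: the sum in $R$ of $\pi \colon m \oplus h \to n$ and $\pi' \colon m' \oplus h' \to n'$ is $(\pi \oplus \pi') \circ (\id[m] \oplus \sigma_{m',h} \oplus \id[h'])$, with heap $h \oplus h'$. Its restriction to the first $m+m'$ points is $F(\pi) \oplus F(\pi')$, as given by the case formula for sums in $\cat{FinInj}$.

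For $\otimes$: here the heap of the product is $mh' + hm' + hh'$, arranged by distributivity. After conjugating by the Cantor permutation, the restriction to the $m \times m'$ block is the function $(a,b) \mapsto (F(\pi)(a), F(\pi')(b))$. This matches the definition of $f \otimes g$ in $\cat{FinInj}$.

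Symmetries in both categories are the same permutations, so they are preserved. Hence $F$ is a strict rig isomorphism. Alternatively, the monoidal checks can be shortened by invoking the preceding theorem on semisimple bipermutative categories, which reduces $\otimes$ to $\oplus$ once both categories are seen to be PROPs.
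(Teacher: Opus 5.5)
Your proof is correct. Its combinatorial core is the same as the paper's: fullness comes from extending an injection $f$ by the increasing enumeration of the complement of $\im(f)$, and faithfulness from observing that $\pi^{-1}\pi'$ fixes the first $m$ points, so it has the form $\id[m]\oplus\rho$.

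Where you differ is in the scaffolding around that core. The paper never writes down composition, sum or product in $R(\cat{FinPerm})$. Instead it obtains $\hat{F}$ from the universal property of the $R$-construction, applied to the forgetful rig functor $\cat{FinPerm}\to\cat{FinInj}$ (this implicitly uses that $0$ is initial in $\cat{FinInj}$). Hence $\hat{F}$ is automatically a rig functor and the identity on objects. The only remaining work is full faithfulness, which the paper reduces, via the Hermida--Tennent expansion/raw-morphism factorisation, to exactly your two combinatorial facts.

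Your restriction functor is in fact that same $\hat{F}$, and your route is self-contained: it needs neither the universal property nor the factorisation lemma. The price is that you verify functoriality and preservation of $\oplus$ and $\otimes$ by hand. The $\otimes$ case is only sketched: you would still have to check that the distributor arrangement of the heap $mh'+hm'+hh'$ lines up with the Cantor ordering on the $m\times m'$ block. That bookkeeping is exactly what the universal property lets the paper skip.

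Your worry about heaps of different sizes is settled by your own first option. $\cat{FinPerm}$ has no morphisms $m+h\to n$ unless $m+h=n$, so the heap is pinned to $n-m$. Since $\cat{FinPerm}$ is a groupoid, the relation ``$\pi'=\pi\circ(\id[m]\oplus\rho)$ for some permutation $\rho$'' is already an equivalence relation, so nothing further is generated.
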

\begin{proof}
  It follows by the universal property of the $R$-construction that
  the forgetful functor $F \colon \cat{FinPerm} \to \cat{FinInj}$ factors
  as
\[\begin{tikzcd}
  {\mathbf{FinPerm}} & {R(\mathbf{FinPerm})} \\
  & {\mathbf{FinInj}}
  \arrow["{\mathcal{E}}", from=1-1, to=1-2]
  \arrow["F"', from=1-1, to=2-2]
  \arrow["{\hat{F}}", dashed, from=1-2, to=2-2]
\end{tikzcd}\]
for a unique rig functor $\hat{F}$. Since $F$ and $\mathcal{E}$ are
both identity on objects, it follows that $\hat{F}$ is identity (so
specifically bijective) on objects as well: as such, $\hat{F}$ is an
isomorphism of rig categories iff it is full and faithful.

By the Expansion-Raw Morphism factorisation~\cite[Lemma 2.8]{hermidatennent:indeterminates}, every morphism in $R(\mathbf{FinPerm})$ factors uniquely
as a canonical injection $\amalg_1 \colon n \to n + k$ followed by a bijection
$\pi \colon n + k \to m$. It follows that $\hat{F}$ is full
iff every injection between finite sets admits such a factorisation,
and faithful iff this factorisation is unique up to a bijection
applied solely to $k$.

Let $f \colon n \to m$ be some injection, and consider the set $m \setminus
\mathrm{im}(f)$ with $k = |m \setminus \im(f)|$. Notice that $m
\setminus \im(f)$ is a finite subset of an ordered set $m$, so ordered
again. This means that we can define a bijection $\psi \colon k \to m
\setminus \im(f)$ by sending $\psi(1)$ to the least element of $m
\setminus \im(f)$, $\psi(1)$, to the second-to-least element, and so
on. Now define $\pi \colon n + k \to m$ by
$$
\pi(a) = \begin{cases}
  f(a) & \text{if }a \leq n \\
  \psi(a-n) & \text{otherwise}
\end{cases}
$$
This is injective since $f$ and $\psi$ are injective and their images
are disjoint by
$$
\im(f) \cap \im(\psi) = \im(f) \cap (m \setminus \im(f)) = \emptyset,
$$
and it is surjective since
$$
\im(\pi) = \im(f) \cup \im(\psi) = \im(f) \cup (m \setminus \im(f)) = m
$$
so bijective. But then since $\amalg_1 \colon n \to n+k$ is given by
$\amalg_1(n) = n$, it follows by definition of $\pi$ that $\pi \circ
\amalg_1 = f$. Thus $\hat{F}$ is full.

To see that it is also faithful, suppose there exists some other $\pi'
\colon n+k \to m$ such that $\pi' \circ \amalg_1 = f$ yet $\pi \neq
\pi'$. Since $\pi' \circ \amalg_1 = f = \pi \circ \amalg_1$, it
follows for all $a \leq n$ that
$$\pi(a) = \pi(\amalg_1(a)) = f(a) = \pi'(\amalg_1(a)) = \pi'(a)$$ so
$\pi$ and $\pi'$ can only disagree on values of $a$ where $n \le a \le
n+k$, which must either way fall in the subset of the image $m
\setminus \im(f)$. Consider now the bijection $\pi'^{-1} \circ \pi \colon
n+k \to n+k$. By the argument above $(\pi'^{-1} \circ \pi)(a) = a$
when $a \leq n$, and $(\pi'^{-1} \circ \pi)(a) \ge n$ when $a \ge n$.
Defining $\psi'(a) = (\pi'^{-1} \circ \pi)(a+n)-n$, we see that
when $a \leq n$, 
$$
(\myid_n \oplus \psi')(a) = \myid_n(a) = a = (\pi'^{-1} \circ \pi)(a)
$$
and when $a > n$ we have
$$
(\myid_n \oplus \psi')(a) = n+\psi'(a-n) = n+(\pi'^{-1} \circ \pi)((a-n)+n)-n
= (\pi'^{-1} \circ \pi)(a)
$$
so $\myid \oplus \psi' = \pi'^{-1} \circ \pi$. But then unicity up to
a bijection applied on $k$ follows by
$$\pi' \circ (\myid_n \oplus \psi') = \pi' \circ (\pi'^{-1} \circ \pi)
= (\pi' \circ \pi'^{-1}) \circ \pi = \pi$$
which, in turn, means that $\hat{F}$ is faithful as well.
\end{proof}

A rig category is \emph{coaffine} when any object $A$ has a unique morphism from the additive unit $O$.

\begin{theorem}
  If $\cat{C}$ is an coaffine rig category, there exists a unique
  rig functor $\cat{FinInj} \to \cat{C}$.
\end{theorem}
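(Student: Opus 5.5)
The plan is to combine two universal properties: that of $\cat{FinPerm}$ as the initial rig category (the free bipermutative category on no generators), and that of the $R$-construction, via \Cref{lem:fininj}. First, any rig category $\cat{C}$ receives a unique rig functor $F \colon \cat{FinPerm} \to \cat{C}$. After strictifying $\cat{C}$ to a bipermutative category, as the paper allows for rig categories, this functor sends $n$ to the $n$-fold sum $I \oplus \cdots \oplus I$ and a permutation to the corresponding composite of additive symmetries. It is unique because every permutation is a composite of sums of identities and $\sigma_\oplus$, and every object is built from $I$ and $\oplus$. Uniqueness of the action on multiplicative structure follows from \Cref{reduction}.

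Second, I would invoke the universal property of $R$ from \cite{heunenkaarsgaard:informationeffects,andresmartinezheunenkaarsgaard:universal}. It says that $\mathcal{E} \colon \cat{C}' \to R(\cat{C}')$ is universal among rig functors from $\cat{C}'$ into coaffine rig categories. Concretely, a rig functor $G \colon \cat{C}' \to \cat{D}$ with $\cat{D}$ coaffine extends uniquely to $\hat G \colon R(\cat{C}') \to \cat{D}$. The extension sends a morphism represented by $f \colon A \oplus H \to B$ to $G(f) \circ (\myid_{G A} \oplus\, !_{G H})$, where $!_{GH} \colon O \to GH$ is the unique morphism from the additive unit. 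This is the same universal property already used in the proof of \Cref{lem:fininj}, applied there with target $\cat{FinInj}$.

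Applying it with $\cat{C}' = \cat{FinPerm}$ and $G = F$ produces a rig functor $R(\cat{FinPerm}) \to \cat{C}$. Composing with the inverse of the rig isomorphism $\hat F \colon R(\cat{FinPerm}) \cong \cat{FinInj}$ from \Cref{lem:fininj} gives existence of a rig functor $\cat{FinInj} \to \cat{C}$.

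For uniqueness, take any rig functor $K \colon \cat{FinInj} \to \cat{C}$. Its restriction along $\cat{FinPerm} \to \cat{FinInj}$ is a rig functor out of $\cat{FinPerm}$, so it equals $F$ by initiality. The uniqueness clause of the $R$-universal property then forces $K$ to equal the functor constructed above. An alternative direct argument: by the expansion–raw factorisation from the proof of \Cref{lem:fininj}, every injection $f$ is $\pi \circ \amalg_1 = \pi \circ (\myid_n \oplus\, !_k)$. Any rig functor must send $!_k \colon 0 \to k$ to the unique morphism $O \to K(k)$ because $\cat{C}$ is coaffine, and it must send $\pi$ to $F(\pi)$.

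The main obstacle is that the $R$-universal property must be available in exactly this form: unique extension of rig functors into coaffine rig categories, with rig (not merely monoidal) structure preserved. If only the direct route is available, the delicate step is well-definedness, namely that $F(\pi) \circ (\myid \oplus\, !_k)$ does not depend on the choice of factorisation. The faithfulness part of \Cref{lem:fininj} shows that two factorisations differ by some $\myid_n \oplus \psi'$. In $\cat{C}$ we have $F(\psi') \circ !_k = !_k$ by coaffineness, so the two images agree. It then remains to check functoriality and preservation of $\otimes$, using $!_{k} \otimes g$-type identities that again follow from uniqueness of morphisms out of $O$ together with $O \otimes A = O$.
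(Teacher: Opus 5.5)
Your proposal is correct and follows essentially the paper's route: initiality of $\cat{FinPerm}$, then the universal property of the $R$-construction for coaffine targets, then transport along the rig isomorphism of \Cref{lem:fininj}. The only difference is in the uniqueness step, where you lead with the uniqueness clause of the universal property. The paper instead uses exactly your ``alternative direct argument'': a competing functor agrees on bijections, preserves initial maps $0 \to A$, and hence preserves $\amalg_1 = (\id \oplus {!}) \circ \rho_\oplus^{-1}$.
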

\begin{proof}
  Since $\cat{FinPerm}$ is the initial rig category there exists a
  unique rig functor $\cat{FinPerm} \to \cat{C}$. Since $\cat{C}$ is
  coaffine, it follows by the universal property of the
  $R$-construction, that this unique functor factors uniquely as
\[\begin{tikzcd}
  {\mathbf{FinPerm}} && {R(\mathbf{FinPerm})} \\
  && {\mathbf{C}}
  \arrow["{\mathcal{E}}", from=1-1, to=1-3]
  \arrow["{!}"', dashed, from=1-1, to=2-3]
  \arrow["{\hat{!}}", dashed, from=1-3, to=2-3]
\end{tikzcd}\]
This takes care of existence. For uniqueness, suppose there exists
another rig functor $F \colon R(\cat{FinPerm}) \to \cat{C}$. We note that,
by uniqueness of the rig functor ${!} \colon \cat{FinPerm} \to \cat{C}$, we
must also have ${!} = F \circ \mathcal{E}$. By the Expansion-Raw
Morphism factorisation, every morphism in $R(\cat{FinPerm})$ factors as
a canonical injection $\amalg_1 \colon n \to n + k$ followed by a bijection
$\pi \colon n+k \to m$. By ${!} = F \circ \mathcal{E}$, $F$ and $\hat{!}$
must agree on all bijections, so they can only every disagree on where
they send the canonical injections $\amalg_i$. Since $F$ and $\hat{!}$
are rig functors they must preserve the initial object, as it is the
additive unit in both $R(\cat{FinPerm})$ and $\cat{C}$: thus, they
must also preserve initial maps $0 \to A$, as these are unique in both
$R(\cat{FinPerm})$ and $\cat{C}$. But then they must preserve
$\amalg_1$ since $\amalg_1 = (\id \oplus {!}) \circ \rho^{-1}_\oplus$, where $\rho_\oplus \colon O \oplus A \to A$ is the coherence isomorphism, and
$F$ and $\hat{!}$ preserve all maps involved; so $F = \hat{!}$.

Since $R(\cat{FinPerm})$ and $\cat{FinInj}$ are rig isomorphic
by Lemma~\ref{lem:fininj}, it follows that $\cat{FinInj}$ satisfies
the same universal property.
\end{proof}

Another way of phrasing the theorem above is that $\cat{FinInj}$ is the free coaffine rig category.
A diagram is called coaffine when it is made up of rig
coherence isomorphisms and initial maps $0 \to A$.

\begin{corollary}\label{corr:coaffinediagrams}
  An coaffine diagram that commutes in $\cat{FinInj}$
  commutes in any other coaffine rig category as well.
\end{corollary}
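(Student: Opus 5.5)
The plan is to read the corollary straight off the preceding theorem, which says $\cat{FinInj}$ is the free coaffine rig category. Fix a coaffine rig category $\cat{C}$ and a coaffine diagram in it. Such a diagram is built formally from objects obtained by $\oplus$, $\otimes$, $O$, $I$ and variables, together with rig coherence isomorphisms and initial maps $0 \to A$ between these. I would treat it as a formal diagram over the free rig structure, and interpret it in any coaffine rig category by choosing values for the variable objects. The hypothesis that it commutes in $\cat{FinInj}$ I take to mean that it commutes for all interpretations there.

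First I reduce to the case where all variable objects are instantiated by objects in the image of the unique rig functor $F \colon \cat{FinInj} \to \cat{C}$. In general the variables are arbitrary objects of $\cat{C}$, so I would not literally factor through $F$. Instead I replace each variable $A_i$ by a fresh generator and work in the free coaffine rig category on a set of generating objects. By the universal property of the $R$-construction, applied as in the preceding theorem but starting from the free rig category on those generators, this is $R$ of the free rig category on the generators.

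Next I observe that coaffine diagrams involve no morphisms between generators other than identities. So whether such a diagram commutes can be decided by bookkeeping in which every variable is treated as a single point. Concretely, a coherence morphism or initial map is determined by how it injects summands of the normal form (a sum of monomials) of its domain into those of its codomain. Because rig coherence isomorphisms are strict in the bipermutative setting, this assignment is an injection between the finite index sets of monomials. Composites of such morphisms then compose these injections. Two parallel coaffine composites are therefore equal in the free coaffine rig category exactly when the induced injections agree. Taking every variable to be $1$ computes those injections in $\cat{FinInj}$, since distinct monomials are then kept as distinct points. Commutation in $\cat{FinInj}$ thus forces equality in the free coaffine rig category on the generators. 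Mapping this free category to $\cat{C}$ along the chosen interpretation, via its unique structure-preserving functor, transports the equality, so the diagram commutes in $\cat{C}$.

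The main obstacle is the injectivity step: equal $\cat{FinInj}$ injections must imply equal morphisms in the free coaffine rig category. This is a coherence theorem in the spirit of Laplaza's. I would attack it using the Expansion-Raw factorisation, as in Lemma~\ref{lem:fininj}, to write each composite as a canonical injection followed by a coherence isomorphism. Laplaza coherence, which holds strictly in bipermutative form, then identifies coherence isomorphisms with equal underlying permutations. A simpler fallback is to state the corollary only for closed diagrams, built from $0$ and $1$ alone, in which case it follows immediately by applying $F$ from the preceding theorem.
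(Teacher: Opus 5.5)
Your main route fails at the step claiming that a coaffine composite in the free coaffine rig category on generators is determined by the injection it induces on summand positions of the normal form, with that injection computed in $\cat{FinInj}$ by setting every variable to $1$.

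Take the parallel pair $\sigma_\otimes, \id[A \otimes A] \colon A \otimes A \to A \otimes A$. Both are rig coherence isomorphisms, both induce the identity on the single monomial, and at $A=1$ both become $\id[1]$ in $\cat{FinInj}$. Yet they are different morphisms of the free coaffine rig category: interpreting $A$ as $2$ in $\cat{FinInj}$ sends $\sigma_\otimes$ to a nontrivial permutation of $4$ points. So your argument as written would prove $\sigma_{A,A}=\id[A \otimes A]$ in every coaffine rig category.

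Passing to the bipermutative setting does not rescue the bookkeeping. Strictness there covers associativity, units, absorption and right distributivity only. The maps $\sigma_\oplus$, $\sigma_\otimes$ and $\delta_L$ remain genuine isomorphisms, and they act on the factors inside monomials.

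A repair would need two things. First, interpret variables by sets with at least two elements, so that the permutation of letters inside each monomial is recorded as well. Second, prove that coaffine morphisms of the free coaffine rig category are determined by this finer graph. That second step is a real coherence theorem for non-regular objects. Laplaza's theorem assumes a regular domain and does not supply it, and your proposal only gestures at it.

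Your fallback is exactly what the paper does. The corollary is read off from the preceding theorem by pushing the diagram along the unique rig functor $\cat{FinInj} \to \cat{C}$. That functor preserves coherence isomorphisms, and because it preserves $O$ and initial maps are unique, it also preserves the initial maps $O \to A$. A diagram in $\cat{FinInj}$ has objects generated by $O$ and $I$, so this is a complete proof of the statement as literally formulated.

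You are right that this covers only such closed diagrams, while the paper later invokes the corollary for diagrams involving arbitrary objects $A, B$. But extending the result to those diagrams is precisely what your main argument does not establish.
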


\section*{Classical control}\label{sec:coproducts}

Our last preparation before verifying the required axioms is to show that $\Split(\LR(\cat{C}))$ has classical control; more precisely, that $\oplus$ is a coproduct. The main idea is that in an coaffine rig category like $\Split(\LR(\cat{C}))$, we can record the outcome of a measurement in the object $I \oplus I$ as follows.

\begin{definition}
In a coaffine rig category, define a natural family of morphisms $\mu_{A,B}$:

\[\begin{tikzcd}
  {A \oplus B} &&& {(A \oplus B) \otimes (I \oplus I)} \\
  {(A \otimes I) \oplus (B \otimes I)} &&& {(A \otimes (I \oplus I)) \oplus (B \otimes (I \oplus I))}
  \arrow["{\mu_{A,B}}", from=1-1, to=1-4]
  \arrow["{\rho_\otimes^{-1} \oplus \rho_\otimes^{-1} }"', from=1-1, to=2-1]
  \arrow["{(\mathrm{id} \otimes \amalg_1) \oplus (\mathrm{id} \otimes \amalg_2)}"', from=2-1, to=2-4]
  \arrow["{\delta_R^{-1}}"', from=2-4, to=1-4]
\end{tikzcd}\]
\end{definition}

\begin{lemma}
   The pentagon below commutes in any coaffine rig category.
 \[\begin{tikzcd}[column sep=tiny]
  & {(A \oplus B) \otimes ((I \oplus I) \otimes (I \oplus I))} \\
  {(A \oplus B) \otimes (I \oplus I)} && {((A \oplus B) \otimes (I \oplus I)) \otimes (I \oplus I)} \\
  {A \oplus B} && {(A \oplus B) \otimes (I \oplus I)}
  \arrow["{\mathrm{id} \otimes \mu_{I,I}}", from=2-1, to=1-2]
  \arrow["{\alpha_\otimes}"', from=2-3, to=1-2]
  \arrow["{\mu_{A,B}}", from=3-1, to=2-1]
  \arrow["{\mu_{A,B}}"', from=3-1, to=3-3]
  \arrow["{\mu_{A,B} \otimes \mathrm{id}}"', from=3-3, to=2-3]
 \end{tikzcd}\]
 \end{lemma}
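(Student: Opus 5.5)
The plan is to reduce the pentagon to a coaffine diagram and then invoke \Cref{corr:coaffinediagrams}. Unfolding the definition of $\mu$ shows that every morphism in the pentagon is built from rig coherence isomorphisms ($\rho_\otimes$, $\alpha_\otimes$, $\delta_R$ and their inverses) together with the coproduct-injection-like maps $\amalg_1,\amalg_2 \colon I \to I \oplus I$. Moreover, in a coaffine rig category each $\amalg_i$ is itself coaffine: $\amalg_1 = (\id \oplus {!}) \circ \rho_\oplus^{-1}$, and $\amalg_2$ is the analogous map with the initial map ${!}\colon O \to I$ in the first summand, composed with the left unitor for $\oplus$. Both are composites of coherence isomorphisms and initial maps $O \to A$. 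The maps $\id\otimes\mu_{I,I}$ and $\mu_{A,B}\otimes\id$ are tensors of such composites with identities. Hence, after expanding, both paths around the pentagon are coaffine morphisms from $A\oplus B$ to $(A\oplus B)\otimes((I\oplus I)\otimes(I\oplus I))$.

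The argument then proceeds in three steps.

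\begin{enumerate}
\item Rewrite each occurrence of $\mu$ and $\amalg_i$ in coaffine form, so the whole pentagon is a coaffine diagram in the sense defined before \Cref{corr:coaffinediagrams}. Here $A$ and $B$ are treated as formal object variables.
\item Verify that the diagram commutes in $\cat{FinInj}$, taking $A = n$ and $B = m$. There $I\oplus I = 2$ and all coherence maps are the canonical reindexings. The lower-left path sends $a\in A$ to $(a,1,1)$ and $b\in B$ to $(b,2,2)$ after reassociation: $\mu_{A,B}$ tags $a$ with $1$ and $b$ with $2$, and $\mu_{I,I}$ then duplicates the tag. The other path sends $a$ to $(a,1)$ and then, via $\mu_{A,B}\otimes\id$ applied to the first component, to $((a,1),1)$. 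It treats $b$ likewise, and $\alpha_\otimes$ reassociates. So both paths are the same injection.
\item Apply \Cref{corr:coaffinediagrams} to transfer commutativity to an arbitrary coaffine rig category.
\end{enumerate}

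The main obstacle is step 3. The corollary, as derived from the universal property of $\cat{FinInj}$, really concerns diagrams in the image of the unique rig functor $\cat{FinInj}\to\cat{C}$. Such diagrams have objects built from $O$ and $I$, whereas the pentagon involves arbitrary objects $A$ and $B$. I would handle this in one of two ways. The first is to appeal to the coherence theorem for rig categories (Laplaza), in the free-on-variables form: adjoining formal objects $A,B$ amounts to working in the free coaffine rig category on two generating objects, whose morphisms are again injections of the corresponding polynomial index sets. The second, more directly, is to use naturality of $\delta_R$, $\alpha_\otimes$ and $\rho_\otimes$ in $A$ and $B$ to factor each path as $(\rho^{-1}\oplus\rho^{-1})$ followed by maps of the form $\id_A\otimes u$ and $\id_B\otimes v$. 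Here $u$ and $v$ are coaffine morphisms $I\to(I\oplus I)\otimes(I\oplus I)$, reassembled by $\delta_R^{-1}$. The pentagon then reduces to two equations $u=u'$ and $v=v'$ between closed coaffine morphisms out of $I$. These check in $\cat{FinInj}$: both sides of the first send the point to $(1,1)$, and both sides of the second send it to $(2,2)$. They then transfer by the corollary, through the unique rig functor $\cat{FinInj}\to\cat{C}$. I would pursue the second route, since it only uses the corollary exactly as stated.
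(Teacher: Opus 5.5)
Your second route is correct and is essentially the paper's proof. The paper's large diagram slides the pieces of $\mu_{A,B}\otimes\mathrm{id}$ past $\delta_R^{-1}$ by naturality (the regions marked (iii)) and uses bifunctoriality and coherence (regions (ii) and (iv)); this leaves region (i) with exactly your componentwise equations $\mu_{I,I}\circ\amalg_i=(\amalg_i\otimes\amalg_i)\circ\rho_\otimes^{-1}$, tensored with $\mathrm{id}_A$ and $\mathrm{id}_B$, which it discharges by \Cref{corr:coaffinediagrams}. Region (iv) is the rig coherence axiom $\alpha_\otimes\circ(\delta_R^{-1}\otimes\mathrm{id})\circ\delta_R^{-1}=\delta_R^{-1}\circ(\alpha_\otimes\oplus\alpha_\otimes)$, not plain naturality of $\alpha_\otimes$ as your sketch puts it. Your concern about the scope of that corollary is well founded, and isolating closed equations out of $I$, as you do, is what makes its use in region (i) fully legitimate.
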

 \begin{proof}
   By commutativity of the diagram below.
\[\adjustbox{scale=0.50}{\begin{tikzcd}[ampersand replacement=\&]
  {A \oplus B} \& {(A \otimes I) \oplus (B \otimes I)} \& {\big(A \otimes (I \oplus I)\big) \oplus \big(B \otimes (I \oplus I)\big)} \& {(A \oplus B) \otimes (I \oplus I)} \\
  \&\& {\big((A \otimes I) \otimes (I \oplus I)\big) \oplus \big((B \otimes I) \otimes (I \oplus I)\big)} \& {\big((A \otimes I) \oplus (B \otimes I)\big) \otimes (I \oplus I)} \\
  \&\& {\big((A \otimes (I \oplus I)) \otimes (I \oplus I)\big) \oplus \big((B \otimes (I \oplus I)) \otimes (I \oplus I)\big)} \& {\big((A \otimes (I \oplus I)) \oplus (B \otimes (I \oplus I))\big) \otimes (I \oplus I)} \\
  \& {\big(A \otimes (I \otimes I)\big) \oplus \big(B \otimes (I \otimes I)\big)} \& {\big(A \otimes ((I \oplus I) \otimes (I \oplus I))\big) \oplus \big(B \otimes ((I \oplus I) \otimes (I \oplus I))\big)} \& {\big((A \oplus B) \otimes (I \oplus I)\big) \otimes (I \oplus I)} \\
  \&\&\& {(A \oplus B) \otimes \big((I \oplus I) \otimes (I \oplus I)\big)} \\
  {(A \otimes I) \oplus (B \otimes I)} \&\& {\big(A \otimes (I \oplus I)\big) \oplus \big(B \otimes (I \oplus I)\big)} \& {(A \oplus B) \otimes (I \oplus I)}
  \arrow["{\rho_\otimes^{-1} \oplus \rho_\otimes^{-1}}", from=1-1, to=1-2]
  \arrow[""{name=0, anchor=center, inner sep=0}, "{\rho_\otimes^{-1} \oplus \rho_\otimes^{-1}}"', from=1-1, to=6-1]
  \arrow["{(\mathrm{id} \otimes \amalg_1) \oplus (\mathrm{id} \otimes \amalg_2)}", from=1-2, to=1-3]
  \arrow[""{name=1, anchor=center, inner sep=0}, "{\rho_\otimes^{-1} \oplus \rho_\otimes^{-1}}"', from=1-2, to=4-2]
  \arrow[""{name=2, anchor=center, inner sep=0}, "{\delta_R^{-1}}", from=1-3, to=1-4]
  \arrow["{(\rho_\otimes^{-1} \otimes \mathrm{id}) \oplus (\rho_\otimes^{-1} \otimes \mathrm{id})}"', from=1-3, to=2-3]
  \arrow["{(\rho_\otimes^{-1} \oplus \rho_\otimes^{-1}) \otimes \mathrm{id}}", from=1-4, to=2-4]
  \arrow[""{name=3, anchor=center, inner sep=0}, "{\delta_R^{-1}}"', from=2-3, to=2-4]
  \arrow["{\big((\mathrm{id} \otimes \amalg_1) \otimes \mathrm{id}\big) \oplus \big((\mathrm{id} \otimes \amalg_2) \otimes \mathrm{id}\big)}"', from=2-3, to=3-3]
  \arrow["{\big((\mathrm{id} \otimes \amalg_1) \oplus (\mathrm{id} \otimes \amalg_2)\big) \otimes \mathrm{id}}", from=2-4, to=3-4]
  \arrow[""{name=4, anchor=center, inner sep=0}, "{\delta_R^{-1}}"', from=3-3, to=3-4]
  \arrow["{\alpha_\otimes \oplus \alpha_\otimes}"', from=3-3, to=4-3]
  \arrow["{\delta_R^{-1} \otimes \mathrm{id}}", from=3-4, to=4-4]
  \arrow["{\big(\mathrm{id} \otimes (\amalg_1 \otimes \amalg_1)\big) \oplus \big(\mathrm{id} \otimes (\amalg_2 \otimes \amalg_2)\big)}"', shift right=2, draw=none, from=4-2, to=4-3]
  \arrow[from=4-2, to=4-3]
  \arrow[""{name=5, anchor=center, inner sep=0}, "{\delta_R^{-1}}"', from=4-3, to=5-4]
  \arrow["{\alpha_\otimes}", from=4-4, to=5-4]
  \arrow["{(\mathrm{id} \otimes \amalg_1) \oplus (\mathrm{id} \otimes \amalg_2)}"', from=6-1, to=6-3]
  \arrow["{(\mathrm{id} \otimes \mu_{I,I}) \oplus (\mathrm{id} \otimes \mu_{I,I})}"', from=6-3, to=4-3]
  \arrow[""{name=6, anchor=center, inner sep=0}, "{\delta_R^{-1}}"', from=6-3, to=6-4]
  \arrow["{\mathrm{id} \otimes \mu_{I,I}}"', from=6-4, to=5-4]
  \arrow["{(i)}"{description}, draw=none, from=0, to=1]
  \arrow["{(ii)}"{description}, draw=none, from=1, to=3-3]
  \arrow["{(iii)}"{description}, draw=none, from=2, to=3]
  \arrow["{(iii)}"{description}, draw=none, from=3, to=4]
  \arrow["{(iv)}"{description}, draw=none, from=5, to=4]
  \arrow["{(iii)}"{description}, draw=none, from=5, to=6]
\end{tikzcd}}\]
 Here, $(i)$ commutes by Corollary~\ref{corr:coaffinediagrams},
 $(ii)$ by bifunctoriality of $\otimes$ and coherence, $(iii)$ by
 naturality of $\delta_R^{-1}$, and $(iv)$ by coherence.
 \end{proof}

 \begin{corollary}
   When $\cat{C}$ is an coaffine rig category, $\mu_{A,B}$ is
   idempotent in $L(\cat{C})$.
 \end{corollary}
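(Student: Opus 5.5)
The plan is to read the statement as a consequence of the pentagon lemma just proved, together with the fact that in $L(\cat{C})$ garbage is discarded and any processing of garbage is therefore invisible.

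\textbf{Step 1: unfold the composite.} In $L(\cat{C})$ a morphism $A \to B$ is represented by a morphism $A \to B \otimes G$ of $\cat{C}$ for some garbage object $G$. I regard $\mu_{A,B}$ as the endomorphism of $A \oplus B$ in $L(\cat{C})$ with garbage $G = I \oplus I$. Composition in $L(\cat{C})$ carries garbage along, so $\mu_{A,B} \circ \mu_{A,B}$ is represented by
\[
  \alpha_\otimes \circ (\mu_{A,B} \otimes \id) \circ \mu_{A,B} \colon A \oplus B \to (A \oplus B) \otimes \bigl((I \oplus I) \otimes (I \oplus I)\bigr),
\]
with garbage $(I \oplus I) \otimes (I \oplus I)$.

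\textbf{Step 2: apply the pentagon.} By the pentagon lemma this representative equals $(\id \otimes \mu_{I,I}) \circ \mu_{A,B}$. So the composite is $\mu_{A,B}$ followed by a postprocessing of its garbage by the $\cat{C}$-morphism $\mu_{I,I} \colon I \oplus I \to (I \oplus I) \otimes (I \oplus I)$.

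\textbf{Step 3: garbage postprocessing is invisible.} It remains to show that $(\id \otimes h) \circ f$ and $f$ represent the same morphism of $L(\cat{C})$ for every morphism $h \colon G \to G'$ of $\cat{C}$. I would derive this from the universal property of $L(\cat{C})$: every object has a unique morphism to $I$ there, namely discarding. Hence the discard map $!_{G'}$ satisfies $!_{G'} \circ h = !_G$. The morphism of $L(\cat{C})$ represented by $f \colon A \to B \otimes G$ is $(\id \otimes !_G) \circ f$, interpreted in $L(\cat{C})$, so
\[
  (\id \otimes !_{G'}) \circ (\id \otimes h) \circ f = (\id \otimes !_G) \circ f .
\]
Applying this with $h = \mu_{I,I}$ gives $\mu_{A,B} \circ \mu_{A,B} = \mu_{A,B}$ in $L(\cat{C})$.

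\textbf{Main obstacle.} Step 3 is the delicate point. If the defining equivalence on garbage in the $L$-construction is only stated up to isomorphisms of garbage, I cannot simply quote it, because $\mu_{I,I}$ is not an isomorphism. In that case I would instead use the fact that the embedding $\cat{C} \to L(\cat{C})$ is a functor and that $I$ is terminal in $L(\cat{C})$. Terminality gives naturality of discarding with respect to all morphisms, and this argument does not need $h$ to be invertible.
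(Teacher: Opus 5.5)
Your proposal is correct and is essentially the paper's argument. The paper's one-line proof is exactly your Steps 1--2: the pentagon shows that $\mu_{I,I}$ mediates between $\mu_{A,B}$ and $\mu_{A,B}\circ\mu_{A,B}$, and the construction of $L(\cat{C})$ identifies morphisms that agree up to postprocessing of garbage, including non-invertible postprocessing. Your Step 3 re-derives that identification from terminality of $I$, which is fine, but your fallback is not really independent: $I$ is terminal in $L(\cat{C})$ only because the identification allows non-invertible garbage postprocessing.
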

 \begin{proof}
   By the lemma above, $\mu_{I,I}$ mediates between $\mu_{A,B}$ and
   $\mu_{A,B} \circ \mu_{A,B}$.
 \end{proof}

The intuition here is that $\mu_{A,B}$ measures whether the state is
in the subspace $A$ or $B$, yielding a single bit of information. That
this bit can be mediated by $\mu_{I,I}$ means that we can either
perform the same measurement again (getting the same result), or just
copy the (classical) measurement outcome by measuring that.

 \begin{lemma}\label{lem:coproductdiagramcommutes}
   The diagram below commutes in any coaffine rig category.
 \[\begin{tikzcd}
  & {C \otimes (E \oplus E')} \\
  {C \otimes E} & {(C \otimes E) \oplus (C \otimes E')} & {C \otimes E'} \\
  A & {A \oplus B} & B
  \arrow[""{name=0, anchor=center, inner sep=0}, "{\mathrm{id} \otimes \amalg_1}", from=2-1, to=1-2]
  \arrow[""{name=1, anchor=center, inner sep=0}, "{\amalg_1}"', from=2-1, to=2-2]
  \arrow["{\delta_L^{-1}}"', from=2-2, to=1-2]
  \arrow[""{name=2, anchor=center, inner sep=0}, "{\mathrm{id} \otimes \amalg_2}"', from=2-3, to=1-2]
  \arrow[""{name=3, anchor=center, inner sep=0}, "{\amalg_2}", from=2-3, to=2-2]
  \arrow["f", from=3-1, to=2-1]
  \arrow[""{name=4, anchor=center, inner sep=0}, "{\amalg_1}"', from=3-1, to=3-2]
  \arrow["{f \oplus g}"', from=3-2, to=2-2]
  \arrow["g"', from=3-3, to=2-3]
  \arrow[""{name=5, anchor=center, inner sep=0}, "{\amalg_2}", from=3-3, to=3-2]
  \arrow["{(ii)}"{description}, draw=none, from=0, to=2-2]
  \arrow["{(ii)}"{description}, draw=none, from=2-2, to=2]
  \arrow["{(i)}"{description}, draw=none, from=4, to=1]
  \arrow["{(i)}"{description}, draw=none, from=5, to=3]
 \end{tikzcd}\]
 \end{lemma}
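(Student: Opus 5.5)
The diagram splits into the two squares labelled $(i)$ and the two triangles labelled $(ii)$. I will check each separately and then paste them together. The outer shape then commutes because all four pieces share the middle object $(C \otimes E) \oplus (C \otimes E')$.

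The squares $(i)$ say that $(f \oplus g) \circ \amalg_1 = \amalg_1 \circ f$ and $(f \oplus g) \circ \amalg_2 = \amalg_2 \circ g$. Here I take $\amalg_1 \colon A \to A \oplus B$ to be the composite $(\id \oplus {!}) \circ \rho_\oplus^{-1}$, exactly as in the proof that $\cat{FinInj}$ is the free coaffine rig category. Then $(f \oplus g)\circ(\id \oplus {!}) = f \oplus (g \circ {!})$ by bifunctoriality of $\oplus$. Since $g \circ {!} \colon O \to C \otimes E'$ is again a morphism out of $O$, coaffineness forces $g \circ {!} = {!}$. Naturality of $\rho_\oplus^{-1}$ then moves $f$ to the front, which gives $\amalg_1 \circ f$. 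The square for $\amalg_2$ is symmetric, with $\lambda_\oplus$ in place of $\rho_\oplus$.

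The triangles $(ii)$ say that $\delta_L^{-1} \circ \amalg_1 = \id \otimes \amalg_1$, and likewise for $\amalg_2$. Equivalently, $\delta_L \circ (\id_C \otimes \amalg_1) = \amalg_1$ as maps $C \otimes E \to (C\otimes E)\oplus(C\otimes E')$. Unfolding $\amalg_1$, the left side is $\delta_L \circ (\id \otimes (\id \oplus {!}_{E'})) \circ (\id \otimes \rho_\oplus^{-1})$.
\begin{itemize}
\item By naturality of $\delta_L$ this equals $((\id \otimes \id) \oplus (\id_C \otimes {!}_{E'})) \circ \delta_L \circ (\id \otimes \rho_\oplus^{-1})$.
\item The map $\id_C \otimes {!}_{E'}$ has domain $C \otimes O \cong O$, so by coaffineness it is the unique map ${!}$, modulo the annihilation isomorphism $C \otimes O \cong O$.
\item What remains, namely $(\id \oplus {!}) \circ (\text{annihilator} \oplus \id)\circ\delta_L \circ (\id \otimes \rho_\oplus^{-1})$ compared with $(\id\oplus{!})\circ\rho_\oplus^{-1}$, is a coaffine diagram in the sense defined before Corollary~\ref{corr:coaffinediagrams}: it is built only from rig coherence isomorphisms and initial maps.
\end{itemize}
So it suffices to check that it commutes in $\cat{FinInj}$, where both sides are the evident inclusion of $C\cdot E$ as the first block of $C\cdot E + C \cdot E'$. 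Corollary~\ref{corr:coaffinediagrams} then transports this to any coaffine rig category. In fact I plan to apply the corollary directly to the whole triangle $(ii)$, since $\delta_L^{-1}$, $\amalg_i$ and $\id \otimes \amalg_i$ are all coaffine. This avoids the manual unfolding above and leaves only the $\cat{FinInj}$ computation.

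The main obstacle I expect is bookkeeping in $\cat{FinInj}$. The Cantor-permutation convention for $\otimes$, together with the left distributor (built from $\sigma_\otimes$), has to send block $E$ into the first summand in the right order. I will verify this on elements $(c,e) \mapsto (c,\amalg_1 e)$ under the lexicographic identification: $\delta_L$ is defined precisely so that $C \otimes (E\oplus E')$ splits as the $E$-part followed by the $E'$-part.
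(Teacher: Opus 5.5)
Your proposal is correct and follows the paper's proof. The squares $(i)$ are naturality of the injections, which you derive explicitly from $\amalg_1 = (\mathrm{id}\oplus{!})\circ\rho_\oplus^{-1}$, bifunctoriality and coaffineness, whereas the paper only cites it; the triangles $(ii)$ follow from Corollary~\ref{corr:coaffinediagrams} by a check in $\cat{FinInj}$, exactly as in the paper (in your optional manual unfolding the factor should be $\mathrm{id} \oplus \text{annihilator}$ rather than $\text{annihilator} \oplus \mathrm{id}$, which is immaterial since you apply the corollary to the whole triangle).
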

 \begin{proof}
 $(i)$ commutes by naturality of injections, $(ii)$ by \Cref{corr:coaffinediagrams}.
 \end{proof}

An object $A+B$ is a \emph{coproduct} of $A$ and $B$ when it has injections $A \to A+B \leftarrow B$, and pairs of morphisms $A \to C \leftarrow B$ factor through a unique morphisms $A+B \to C$ via the injections. This codifies classical control categorically~\cite{heunenvicary:cqm}.

\begin{proposition}\label{prop:coproducts}
  $\Split(\mathrm{LR(\cat{C})})$ has coproducts.
\end{proposition}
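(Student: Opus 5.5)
We will show that in $\Split(\LR(\cat{C}))$ the coproduct of two objects $p \colon A \to A$ and $q \colon B \to B$ is the object $p \oplus q \colon A \oplus B \to A \oplus B$, with injections induced by $\amalg_1$ and $\amalg_2$. Because a morphism $p \to p'$ in $\Split$ is a morphism $f$ of $\LR(\cat{C})$ with $p'fp = f$, the candidate injections are $(p \oplus q)\circ \amalg_1 \circ p = \amalg_1 \circ p$ and $\amalg_2 \circ q$. Here $\amalg_1 = (\id \oplus {!}) \circ \rho_\oplus^{-1}$ is available because $\LR(\cat{C})$ is coaffine. We will take the pairing of $f \colon p \to r$ and $g \colon q \to r$ to be $[f,g] := \nabla \circ (f \oplus g)$ for a suitable codiagonal $\nabla \colon C \oplus C \to C$. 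The codiagonal is the step that does not exist in $\cat{C}$ and must be manufactured.

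To build $\nabla$, we will use that $\LR(\cat{C})$ can discard. First rewrite $C \oplus C \cong C \otimes (I \oplus I)$ by distributivity. Then discard the $I \oplus I$ factor using the unique map $I \oplus I \to I$ of $\LR(\cat{C})$, and finally apply $\rho_\otimes$. Lemma~\ref{lem:coproductdiagramcommutes}, with $E = E' = I$, shows that precomposing with $\amalg_i$ gives $\id \otimes \amalg_i$ followed by discarding. Since discarding is unique, $I \xrightarrow{\amalg_i} I\oplus I \to I$ is the identity, so $\nabla \circ \amalg_i = \id$. Combined with naturality of the injections (part (i) of that lemma), this yields $[f,g]\circ\amalg_1 = f$ and $[f,g]\circ \amalg_2 = g$. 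It remains to check that $[f,g]$ is a legitimate $\Split$-morphism, i.e.\ $r\,[f,g]\,(p\oplus q) = [f,g]$. This follows from naturality of $\nabla$ in $C$ and from $rfp = f$ and $rgq = g$.

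Uniqueness is the main obstacle. Given $h \colon p \oplus q \to r$ with $h\amalg_1 = f$ and $h\amalg_2 = g$ (on the level of $\Split$, i.e.\ modulo the projections), we need $h = \nabla \circ (h\amalg_1 \oplus h\amalg_2)$. This amounts to showing that $\id_{A\oplus B}$ equals $\nabla \circ (\amalg_1 \oplus \amalg_2)$ after composing with $h$, which is false in $\cat{C}$ itself. Our plan is to use the measurement idempotent $\mu_{A,B}$. First we show that $\nabla\circ(\amalg_1\oplus\amalg_2)$ equals the composite of $\mu_{A,B}$ with discarding the $I\oplus I$ register. We then argue that $h$ factors through this decoherence. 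That is, the $\LR$-construction identifies $\id_{A \oplus B}$ with measure-then-forget on the relevant object, because the garbage $I \oplus I$ records only which summand was occupied, and this can be reconstructed from the output. If this fails in general, we would fall back on the idempotent splitting. We would replace $A \oplus B$ by the split object corresponding to the idempotent $\mu_{A,B}$ (idempotent by the corollary above), and define the coproduct object as $(p\oplus q)$ composed with that projection, so that uniqueness holds by construction. Injections and pairing are then verified exactly as above, using the pentagon lemma to show that the projection is compatible with $\amalg_1$ and $\amalg_2$.

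Finally, we will check that the data are natural and that $\amalg_i$ composed with the chosen projections remain $\Split$-morphisms. These are routine applications of Corollary~\ref{corr:coaffinediagrams}, since every diagram involved is coaffine apart from $f$ and $g$, which enter only through naturality.
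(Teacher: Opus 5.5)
Your first candidate, $(A \oplus B, p \oplus q)$, is not a coproduct. The reason you give for uniqueness is false: $\LR(\cat{C})$ does \emph{not} identify $\id_{A \oplus B}$ with measure-then-forget. The equivalence on $\LR$-morphisms only lets you postprocess garbage. Sending the which-summand bit into a discarded $I \oplus I$ register destroys coherence between the summands, and it does not matter that the bit could be recomputed from the output.

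A concrete counterexample: in $\LR(\cat{Unitary}(\mathbb{C}))$ take $A = B = I$ and $p = q = \id$, so your object is the qubit algebra $M_2(\mathbb{C})$. The identity channel and the class of $\mu_{I,I}$ (the completely dephasing channel) are distinct endomorphisms of this object. Yet both satisfy $h \circ \amalg_1 = \amalg_1$ and $h \circ \amalg_2 = \amalg_2$, so the mediating morphism for the pair $(\amalg_1, \amalg_2)$ is not unique. The primary plan therefore does not merely risk failing ``in general''; it fails, and your fallback is compulsory.

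The fallback is exactly the paper's proof. The coproduct is $(A \oplus B, (p \oplus q)\circ \mu_{A,B})$, with injections the classes of $\amalg_1 \circ p$ and $\amalg_2 \circ q$. Your copairing $\nabla \circ (f \oplus g)$ is, up to coherence, the paper's class of $\delta_L^{-1} \circ (f \oplus g) \circ \mu_{A,B}$. However, the checks are not ``exactly as above'', and uniqueness is not ``by construction''. You need the following four steps.
(i) Show $\mu_{A,B} \circ \amalg_i = \amalg_i$ in $\LR(\cat{C})$. This comes from the coaffine squares $\mu_{A,B} \circ \amalg_i = (\amalg_i \otimes \amalg_i) \circ \rho_\otimes^{-1}$ in $R(\cat{C})$, not from the pentagon.
(ii) Use naturality of $\mu$ together with the pentagon (idempotence of $\mu_{A,B}$) to see that $(p \oplus q)\circ\mu_{A,B}$ is an idempotent commuting with $p \oplus q$.
(iii) Show that the copairing absorbs $\mu$, i.e.\ $\nabla \circ (f \oplus g) \circ \mu_{A,B} = \nabla \circ (f \oplus g)$. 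This holds because $\nabla_C \circ \mu_{C,C}$ differs from $\nabla_C$ only by postprocessing the garbage with $\mu_{I,I}$. The paper sidesteps this by building $\mu_{A,B}$ into $[f,g]$ and invoking idempotence.
(iv) For uniqueness, first derive $h = h \circ \mu_{A,B}$ and $h = h \circ (p \oplus q)$ from $h = h \circ (p\oplus q)\circ\mu_{A,B}$ using (ii). Only then apply your (correct) coaffine identification of $\nabla \circ (\amalg_1 \oplus \amalg_2)$ with measure-then-forget, plus naturality of $\nabla$, to get $h = \nabla \circ (h\amalg_1 p \oplus h\amalg_2 q) = \nabla \circ (f \oplus g)$.

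Step (iv) is exactly where the $\mu_{A,B}$ in the object is used, and it is the ingredient your primary plan was missing.
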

\begin{proof}
  For objects $(A,e)$ and $(B,e')$ of $\Split(\mathrm{LR(\cat{C})})$,
  we see that the diagrams
\[\begin{tikzcd}
  A & {A \oplus B} & B \\
  {A \otimes I} & {(A \oplus B) \otimes (I \oplus I)} & {B \otimes I}
  \arrow["{\amalg_1}", from=1-1, to=1-2]
  \arrow["{\rho^{-1}_\otimes}"', from=1-1, to=2-1]
  \arrow["{\mu_{A,B}}", from=1-2, to=2-2]
  \arrow["{\amalg_2}"', from=1-3, to=1-2]
  \arrow["{\rho^{-1}_\otimes}", from=1-3, to=2-3]
  \arrow["{\amalg_1 \otimes \amalg_1}"', from=2-1, to=2-2]
  \arrow["{\amalg_2 \otimes \amalg_2}", from=2-3, to=2-2]
\end{tikzcd}\]
commute in $\mathrm{R}(\cat{C})$ by
Corrolary~\ref{corr:coaffinediagrams}, so $\mu_{A,B} \circ
\amalg_1 = \amalg_1$ and $\mu_{A,B} \circ \amalg_2 = \amalg_2$ in
$\mathrm{LR}(\cat{C})$; call their equivalence classes in
$\mathrm{LR(\cat{C})}$ $i_1$ and $i_2$ respectively. Notice further
that $(e \oplus e') \circ \mu_{A,B} = \mu_{A,B} \circ (e \oplus e')$
by naturality of $\mu_{A,B}$, so $(e \oplus e') \circ \mu_{A,B}$ is
idempotent when $e$ and $e'$ are by
$$
(e \oplus e') \circ \mu_{A,B} \circ (e \oplus e') \circ \mu_{A,B}
= (e \oplus e') \circ (e \oplus e') \circ \mu_{A,B} \circ \mu_{A,B}
= (e \oplus e') \circ \mu_{A,B}
$$ Thus it follows by naturality of $\amalg_1$ and $\amalg_2$ that $(e
\oplus e') \circ \amalg_1$ and $(e \oplus e') \circ \amalg_2$ are
morphisms into $(A \oplus B, (e \oplus e') \circ \mu_{A,B})$ from
$(A,e)$ respectively $(B,e')$. Given morphisms $f : (A,e) \to (C,d)$
and $g : (B,e') \to (C,d)$ and choosing $[f,g]$ to be the equivalence
class of $\delta_L^{-1} \circ (f \oplus g) \circ \mu_{A,B}$ in
$\mathrm{LR(\cat{C})}$, we see that commutativity of the coproduct
triangle
\[\begin{tikzcd}
  & {(C,d)} \\
  \\
  {(A,e)} & {(A \oplus B, (e \oplus e') \circ \mu_{A,B})} & {(B,e')}
  \arrow["f", from=3-1, to=1-2]
  \arrow["{i_1}"', from=3-1, to=3-2]
  \arrow["{[f,g]}"', dashed, from=3-2, to=1-2]
  \arrow["g"', from=3-3, to=1-2]
  \arrow["{i_2}", from=3-3, to=3-2]
\end{tikzcd}\]
follows by Lemma~\ref{lem:coproductdiagramcommutes} and 
$\amalg_1 = \amalg_1 \circ \mu_{A,B}$ and $\amalg_2 = \amalg_2 \circ
\mu_{A,B}$ established above. Notice that $[f,g]$ respects the idempotents
since
\begin{align*}
  & d \circ [f,g] \circ (e
  \oplus e') \circ \mu_{A,B} \\
  & \quad = d \circ \delta_L^{-1} \circ (f \oplus g) \circ \mu_{A,B} \circ (e
  \oplus e') \circ \mu_{A,B} \\
  & \quad = d \circ \delta_L^{-1} \circ (f \oplus g) \circ (e
  \oplus e') \circ \mu_{A,B} \circ \mu_{A,B} \\
  & \quad = \delta_L^{-1} \circ (d \oplus d) \circ (f \oplus g) \circ (e
  \oplus e') \circ \mu_{A,B} \circ \mu_{A,B} \\
  & \quad = \delta_L^{-1} \circ ((d \circ f \circ e) \oplus (d \circ g \circ e'))
  \circ \mu_{A,B} \circ \mu_{A,B} \\
  & \quad = \delta_L^{-1} \circ (f \oplus g) \circ \mu_{A,B} \circ \mu_{A,B} \\
  & \quad = \delta_L^{-1} \circ (f \oplus g) \circ \mu_{A,B} \\
  & \quad = [f,g]
\end{align*}
in $\mathrm{LR(\cat{C})}$ by naturality of $\delta_L^{-1}$ and
$\mu_{A,B}$, idempotence of $\mu_{A,B}$ and the fact that $d \circ f
\circ e = f$ and $d \circ g \circ e'$ since they were assumed to be
morphisms $(A,e) \to (C,d)$ and $(B,e') \to (C,d)$ in
$\Split(\mathrm{LR(\cat{C})})$, so this is indeed a morphism between
the claimed objects in $\Split(\mathrm{LR(\cat{C})})$. This takes care
of existence.

To see that $[f,g]$ is also the unique such, suppose that $h$ is some
other morphism $(A \oplus B, (e \oplus e') \circ \mu_{A,B}) \to (C,d)$
making the coproduct triangle commute. By the fact that $h$ is a morphism
out of $(A \oplus B, (e \oplus e') \circ \mu_{A,B})$, we have that $h$
respects both $\mu_{A,B}$ and $(e \oplus e')$ individually since
\begin{align*}
h \circ \mu_{A,B} & = h \circ (e \oplus e') \circ \mu_{A,B} \circ
\mu_{A,B} = h \circ (e \oplus e') \circ \mu_{A,B} = h \\
h \circ (e \oplus e') & = h \circ (e \oplus e') \circ \mu_{A,B} \circ
(e \oplus e') = h \circ (e \oplus e') \circ (e \oplus e') \circ \mu_{A,B} \\
& = h \circ (e \oplus e') \circ \mu_{A,B} = h
\end{align*}
in $\mathrm{LR}(\cat{C})$. But then
\begin{align*}
  h & = h \circ \mu_{A,B} \\
  & = h \circ \mu_{A,B} \circ \mu_{A,B} \\
  & = h \circ \delta_R^{-1} \circ ((\id \otimes \amalg_1) \oplus (\id
  \otimes \amalg_2)) \circ \rho^{-1}_\otimes \oplus \rho^{-1}_\otimes
  \circ \mu_{A,B} \\
  & = h \circ \delta_L^{-1} \circ ((\amalg_1 \otimes \id) \oplus
  (\amalg_2 \otimes \id)) \circ (\rho^{-1}_\otimes \oplus
  \rho^{-1}_\otimes) \circ \mu_{A,B} \\
  & = \delta_L^{-1} \circ ((h \otimes \id) \oplus (h \otimes \id))
  \circ ((\amalg_1 \otimes \id) \oplus (\amalg_2 \otimes \id)) \circ
  (\rho^{-1}_\otimes \oplus \rho^{-1}_\otimes) \circ \mu_{A,B}\\
  & = \delta_L^{-1} \circ (((h \circ \amalg_1) \otimes \id) \oplus
  ((h \circ \amalg_2) \otimes \id)) \circ (\rho^{-1}_\otimes \oplus
  \rho^{-1}_\otimes) \circ \mu_{A,B} \\
  & = \delta_L^{-1} \circ ((f \otimes \id) \oplus
  (g \otimes \id)) \circ (\rho^{-1}_\otimes \oplus
  \rho^{-1}_\otimes) \circ \mu_{A,B} \\
  & = \delta_L^{-1} \circ (\alpha_\otimes \oplus \alpha_\otimes) \circ
  (\rho^{-1}_\otimes \oplus \rho^{-1}_\otimes) \circ (f \oplus g)
  \circ \mu_{A,B} \\
  & = (\id \otimes \rho^{-1}_\otimes) \circ \delta_L^{-1} \circ (f \oplus g)
  \circ \mu_{A,B} \\
  & = (\id \otimes \rho^{-1}_\otimes) \circ [f,g]
\end{align*}
in $\mathrm{R}(\cat{C})$, so $h$ and $[f,g]$ are in the same equivalence
class in $\mathrm{LR}(\cat{C})$.
\end{proof}

\section*{Completeness}\label{sec:completeness}

We are now ready to prove completeness of $\Split(\LR(\Pi_k))$ as in \Cref{thm:measurement} in the main article. We will do so by verifying that it is a model for a signature satisfying the algebraic theory of quantum computation. For the axioms we refer to~\cite{staton:quantum} and the proof of \Cref{prop:signature} and \Cref{thm:staton} below. 

\begin{proposition}\label{prop:signature}
  In $\Split(\LR(\cat{C}))$, with the exponential
  action of $I \oplus I$ on $\otimes$, the object $I$ is a structure
  for the signature of the algebraic theory of quantum
  computation~\cite{staton:quantum} (restricted to the unitaries of $\cat{C}$).
\end{proposition}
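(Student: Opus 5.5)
The plan is to exhibit the three families of operations of Staton's signature as explicit morphisms in $\Split(\LR(\cat{C}))$, with $X = I$ and the exponential action $n \bullet X = (I \oplus I)^{\otimes n} \otimes I$ of \Cref{lem:expact}. First, \Cref{lem:expact} gives that this is a genuine action of $\cat{FinPerm}$. Second, \Cref{prop:coproducts} gives that $\oplus$ is a coproduct in $\Split(\LR(\cat{C}))$, so targets of the form $(m_1 \bullet X) + \cdots + (m_k \bullet X)$ make sense. Every object $A$ of $\LR(\cat{C})$ is regarded as the object $(A,\id)$ of $\Split(\LR(\cat{C}))$, and morphisms in $\LR(\cat{C})$ between such objects are morphisms in $\Split(\LR(\cat{C}))$. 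Hence it suffices to construct the operations in $\LR(\cat{C})$, or even in $R(\cat{C})$, and transport them along the canonical functors $\cat{C} \to R(\cat{C}) \to \LR(\cat{C}) \to \Split(\LR(\cat{C}))$.

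The operations are as follows.
\begin{itemize}
  \item For $\mathsf{apply}_U \colon n \to n$, with $U \colon (I\oplus I)^{\otimes n} \to (I \oplus I)^{\otimes n}$ a unitary of $\cat{C}$, take the image of $U \otimes \id[I]$ under the canonical functors.
  \item For $\mathsf{new} \colon 0 \to 1$, we need a morphism $I \to (I \oplus I)\otimes I = I \oplus I$. Take $\amalg_1 \colon I \to I \oplus I$, which exists in $R(\cat{C})$ as $(\id \oplus {!}) \circ \rho_\oplus^{-1}$, where ${!}\colon O \to I$ is the unique initialisation morphism supplied by the $R$-construction. This is the coaffine structure underlying \Cref{corr:coaffinediagrams}.
  \item For $\mathsf{measure} \colon 1 \to (0,0)$, we need a morphism $I \oplus I \to I + I$, where $+$ is the coproduct. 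Since the coproduct constructed in \Cref{prop:coproducts} is the object $(I \oplus I, \mu_{I,I})$, the natural candidate is the idempotent $\mu_{I,I}$ itself. Regarded as a morphism $(I\oplus I,\id) \to (I\oplus I,\mu_{I,I})$, it is well defined because $\mu_{I,I} \circ \mu_{I,I} \circ \id = \mu_{I,I}$ by the corollary asserting idempotence of $\mu_{A,B}$ in $\LR(\cat{C})$.
\end{itemize}
One then checks that the domains and codomains match the arities up to the strictness isomorphisms $(I \oplus I)^{\otimes 0} \otimes I = I$ and $O \otimes$-absorption, which are identities in the strict (bipermutative) setting.

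The main obstacle is the measurement operation. It requires identifying the codomain $(0 \bullet X) + (0 \bullet X) = I + I$ with the specific split object $(I \oplus I, (\id \oplus \id)\circ\mu_{I,I}) = (I\oplus I,\mu_{I,I})$ from \Cref{prop:coproducts}. It also requires checking that $\mu_{I,I}$ is a morphism in $\Split(\LR(\cat{C}))$ with the right projection condition $p' f p = f$. Since a structure only requires the existence of such morphisms, with the equations deferred to \Cref{thm:staton}, this amounts to the idempotence computation above together with the observation that $\mu_{I,I}$ composed with the coproduct injections $i_1,i_2$ recovers them. The remaining cases are routine, because they only use the functoriality of the free constructions.
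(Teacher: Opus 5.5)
Your proposal is correct and follows the paper's proof: you use the same assignments $\sem{\new} = \amalg_1$, $\sem{\measure} = \mu_{I,I}$ and $\sem{\apply_U} = F(U)$, with \Cref{lem:expact} supplying the action and the arities matching up to coherence isomorphisms, and your explicit check that $\mu_{I,I}$ is a legitimate morphism $(I\oplus I,\id) \to (I \oplus I,\mu_{I,I})$ in the Karoubi envelope is a detail the paper leaves implicit. The one hypothesis you use without stating it is that $\Split(\LR(\cat{C}))$ is symmetric monoidal, which \Cref{lem:expact} requires; the paper justifies this by noting that $\Split(\cat{D})$ inherits a symmetric monoidal structure from $\cat{D}$ and that $\LR(\cat{C})$ is symmetric monoidal.
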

\begin{proof}
  Since $\Split(\cat{D})$ is symmetric monoidal when $\cat{D}$
  is, it follows that $\Split(\mathrm{LR(\cat{C})})$ is
  symmetric monoidal as well (since $LR(\cat{C})$ is symmetric
  monoidal), and so admits an exponential action with $I \oplus I$ by
  Lemma~\ref{lem:expact}. For the operations $\new \colon 0 \to 1$,
  $\measure \colon 1 \to (0,0)$, and $\apply_U \colon n \to n$
  we associate the morphisms
  \begin{align*}
    \sem{\new} & = \amalg_1 \colon I \to I \oplus I \\
    \sem{\measure} & = \mu_{I,I} \colon I \oplus I \to I+I \\
    \sem{\apply_U} & = F(U) \colon (I \oplus I)^{\otimes n} \to
    (I \oplus I)^{\otimes n}
  \end{align*}
  where $F$ is the symmetric monoidal functor $\cat{C} \to
  \Split(\mathrm{LR(\cat{C})})$. These operations match the
  signature since $m \bullet I = (I \oplus I)^{\otimes m}$ in the exponential action of $I \oplus I$ (up to canonical coherence isomorphism).
\end{proof}

\begin{definition}\cite{staton:quantum}
  A model of an algebraic theory with linear parameters is a structure
  for its signature such that for each axiom $x_1 : m_1, \ldots, x_k :
  m_k \mid a_1, \ldots, a_p \vdash t = u$ the interpretations
  $\sem{t},\sem{u} : p \bullet X \to (m_1 \bullet X) + \cdots + (m_k
  \bullet X)$ are equal.
\end{definition}

\begin{theorem}\label{thm:staton}
  In $\Split(\mathrm{LR(\cat{C})})$ with the exponential
  action of $I \oplus I$ on $\otimes$, the object $I$ forms a model of
  the algebraic theory of quantum computation~\cite{staton:quantum}.
\end{theorem}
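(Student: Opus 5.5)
The structure for the signature is already supplied by \Cref{prop:signature}, so what remains is to check each of Staton's equations~\cite{staton:quantum} in $\Split(\LR(\cat{C}))$. I will interpret both sides of every axiom as morphisms $p \bullet I \to (m_1 \bullet I) + \cdots + (m_k \bullet I)$, using the coproducts of \Cref{prop:coproducts} for the branching outputs. I will then show the two sides agree.

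The axioms fall into three groups, and I will treat them group by group.

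\emph{Unitary axioms.} These say that $\apply_{U}\apply_{U'} = \apply_{UU'}$, that $\apply_\myid$ is trivial, and that $\apply$ is compatible with reindexing of wires and with tensoring by identities. They hold because $F\colon \cat{C} \to \Split(\LR(\cat{C}))$ is a strict symmetric monoidal functor. The exponential action of \Cref{lem:expact} permutes wires by exactly the symmetry that $F$ preserves.

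\emph{Pure classical-control and initialisation axioms.} These are the equations among $\new$ and $\measure$ that do not mention a nontrivial unitary. Examples are $\measure(\new)$ reducing to the first branch, commutation of $\new$/$\measure$ on different wires, and the associativity/commutativity of nested measurements. Every morphism involved is built from $\amalg_i$, rig coherence isomorphisms and $\mu$, that is, it is coaffine. So I reduce each such equation, via the definition of $\mu_{A,B}$, to a coaffine diagram. I check that diagram in $\cat{FinInj}$ and conclude by \Cref{corr:coaffinediagrams}. Where the equation says measuring twice equals measuring once, I instead invoke idempotence of $\mu$ (the pentagon lemma and its corollary). Branch-merging equations follow from the uniqueness part of \Cref{prop:coproducts}.

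\emph{Mixed axioms.} These relate $\measure$ to unitaries, and I expect them to be the main obstacle. Two are central: controlled unitaries pass through measurement (measure, then apply $U$ or $U'$, equals apply $\myid\oplus\ldots$ block-diagonal, then measure), and a phase or $X$ before measurement is absorbed. For controlled unitaries I use naturality of $\mu_{A,B}$ in $A$ and $B$ together with \Cref{lem:coproductdiagramcommutes}: $\mu\circ(f\oplus g) = \delta^{-1}\circ(f\oplus g)\ldots$ matches the copairing $[f,g]$. The hardest case is discarding a qubit after a diagonal phase, e.g.\ $\measure\circ\apply_{\myid\oplus\zeta}=\measure$. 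Here I will use that in $\LR(\cat{C})$ the measurement outcome qubit is garbage, and that phases on a discarded/garbage system are equal up to postprocessing the garbage. Concretely, I will exhibit the garbage isomorphism $\myid\oplus\zeta^{-1}$ witnessing the identification in the $L$-construction. I also need $\measure\circ\apply_X$ equal to $\measure$ with branches swapped, which follows from naturality of $\mu$ with respect to $\sigma_\oplus$. If a direct proof of the phase case is awkward, the fallback is to reduce it to global-phase invariance of discarding, $!\circ s = {!}$ for scalars $s$. That invariance holds since the $L$-construction identifies morphisms up to isomorphisms on garbage, and $s\colon I\to I$ is such an isomorphism.
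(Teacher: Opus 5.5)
\paragraph{Comparison with the paper.} Your three groups match the paper's proof. The $\apply$-axioms (F)--(I) follow from $F\colon\cat{C}\to\Split(\LR(\cat{C}))$ being symmetric monoidal. The coaffine axioms (A), (D), (J), (K), (L) are checked in $\cat{FinInj}$ and transported by \Cref{corr:coaffinediagrams}. The controlled-unitary axioms (B), (E) follow from naturality of $\mu$ together with \Cref{lem:coproductdiagramcommutes}.

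\paragraph{The omission: axiom (C).} Your plan never states axiom (C): discarding after an \emph{arbitrary} unitary $u$ on $2^{\otimes n}$, entangling ones included, equals discarding. The equation you call the hardest case, $\measure\circ\apply_{\myid\oplus\zeta}=\measure$, is not among (A)--(L). It follows from (B) with $n=0$ and scalar branches $1,\zeta$, once scalars are known to be trivial. Your fallback ${!}\circ s={!}$ likewise handles only scalars, not multi-qubit unitaries, so the proposal as written leaves one of the twelve axioms unverified.

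\paragraph{How to close it.} Your own tool fixes this immediately, and it is essentially what the paper uses. In $\LR(\cat{C})$ the discarding map $d$ is represented by $\lambda^{-1}\colon 2^{\otimes n}\to I\otimes 2^{\otimes n}$, with the whole system as garbage. By naturality of $\lambda$, $\lambda^{-1}\circ u=(\id\otimes u)\circ\lambda^{-1}$. This differs from $d$ only by postprocessing the garbage with the isomorphism $u$, so $d\circ u=d$. This is exactly terminality of $I$ in $\Split(\LR(\cat{C}))$, from which the paper reads off (C) in one line. The same fact gives the triviality of scalars needed for your phase equation.
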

\begin{proof}
  We verify the required axioms (A)-(L)~\cite{staton:quantum}. Write $2$ for the object $1+1$, write $\distribli$ and $\distribri$ for the coherence isomorphisms $A \otimes (B \oplus C) \to (A \otimes B) \oplus (A \otimes C)$ and $(A \oplus B) \otimes C \to (A \otimes C) \oplus (B \otimes C)$, and write $\lambda_A$ for the coherence isomorphism $I \otimes A \to A$.

  The first group of axioms concerns coaffine diagrams, which are seen to hold in $\cat{FinInj}$ by direct computation, and therefore hold in $\Split(\LR(\cat{C}))$ by \Cref{corr:coaffinediagrams}:
  \begin{enumerate}[itemsep=1mm]
    \item[(A)]
      $
        \sem{\measure} \circ X
        = \mu_{1,1} \circ \sigma^+ 
        = (\sigma^+ \times \sigma^+) \circ \mu_{1,1}
        = X \circ \sem{\measure}
      $; 

    \item[(D)]
      $\measure \circ \new = \new$; 

    \item[(J)]
      $(\distribri + \distribri) \circ \big((\measure \otimes \id) + (\measure \otimes \id)\big) \circ \distribli \circ (\id \otimes \measure)$ \\
      $= (\id +\sigma+\id) \circ (\distribli +\distribli) \circ \distribri \circ (\id \otimes \measure) \circ (\measure \otimes \id)$; 

    \item[(K)]
      $
        (\sem{\new} \otimes \id[2]) \circ (\id[2] \otimes \sem{\new})
        =
        (\id[2] \otimes \sem{\new}) \circ (\sem{\new} \otimes \id[2]) 
      $; 

    \item[(L)]
      $
        (\id[2] \otimes \measure) \circ (\new \otimes \id[2]) \circ \lambda_2^{-1} 
        = (\new \otimes \id[2]) \circ \lambda_2^{-1} \circ \measure
      $.
  \end{enumerate}
  For example, the natural transformation $\mu_{A,B} \colon A + B \to (A+B) \times (1+1)$ is given by $A \ni a \mapsto (a,\amalg_1(*))$ and $B \ni b \mapsto (b,\amalg_2(*))$ in $\cat{FinInj}$. 
  Write
  \[D(u,v) = (\distribri)^{-1} \circ (u \oplus v) \circ \distribri \colon 2^{\otimes n+1} \to 2^{\otimes n+1}
  \]
  for the map in $\LR(\cat{C})$ that models a computation on $n+1$ qubits, where the first qubit controls whether $u$ or $v$ is applied to the last $n$ qubits. The following two axioms are proven via \Cref{corr:coaffinediagrams} as above using naturality.
  \begin{enumerate}[itemsep=1mm]
    \item[(B)]
      $D(u,v) \circ (\measure \otimes \id[2^{\otimes n}])
       =(\measure \otimes \id[2^{\otimes n}]) \circ D(u,v)$;

    \item[(E)]
      $D(u,v) \circ (\new \otimes \id[2^{\otimes n}] \otimes \id[2^{\otimes n}]) = \new \otimes u$.
  \end{enumerate}
  \noindent
  The following group of axioms holds because the embedding $F \colon \cat{C} \to \Split(\LR(\cat{C}))$ is by construction a symmetric monoidal functor:
  \begin{enumerate}[itemsep=1mm]
    \item[(F)]
      $F(\sigma^\otimes_{2^{\otimes m},2^{\otimes n}}) = \sigma^\otimes_{2^{\otimes m},2^{\otimes n}}$;

    \item[(G)]
      $F(\id[2^{\otimes n}]) = \id[2^{\otimes n}]$; 
    
    \item[(H)]
      $F(v \circ u) = F(v) \circ F(u)$;
    
    \item[(I)]
      $F(u \otimes v) = F(u) \otimes F(v)$; 
  \end{enumerate}
  \noindent
  The final axiom follows from the nature of the construction of $\Split(\LR(\cat{C}))$:
  \begin{enumerate}[itemsep=1mm]
    \item[(C)]
      It follows from the fact that $I$ is terminal in $\Split(\LR(\cat{C}))$ that $d \circ u = d$ for any $u \colon 2^{\otimes n} \to 2^{\otimes n}$ and $d \colon 2^{\otimes n} \to I$ in $\cat{C}$, so in particular for the discarding map $d \colon 2^{\otimes n} \to 1$ in $\Split(\LR(\cat{C}))$ lifted from $\lambda^{-1} \colon 2^{\otimes n} \to I \otimes 2^{\otimes n}$ in $\cat{C}$.\qedhere
  \end{enumerate}
\end{proof}

Let $\cat{QChannel}$ denote the category of
finite-dimensional $C^*$-algebras and quantum channels (completely positive trace-preserving linear maps). The
following corollary to Theorem~\ref{thm:staton} establishes the first
part of \Cref{thm:measurement} in the main article.

\begin{corollary}
  The induced functor $\sem{-} \colon \Split(\LR(\Pi_k)) \to
  \cat{QChannel}$ is faithful.
\end{corollary}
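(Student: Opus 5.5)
The plan is to derive faithfulness from Staton's completeness theorem, using \Cref{thm:staton} to relate $\Split(\LR(\Pi_k))$ to the syntax of the algebraic theory of quantum computation. By \Cref{thm:staton} and \Cref{prop:signature}, $I$ is a model of that theory in $\Split(\LR(\Pi_k))$, restricted to the unitaries coming from $\Pi_k$. Staton's theory has an initial (syntactic) model: its terms modulo the equations (A)--(L), which we take to form a category $\cat{Syn}_k$. We get two interpretation functors: $\sem{-}_{\Split}\colon \cat{Syn}_k \to \Split(\LR(\Pi_k))$ and $\sem{-}_{Q}\colon \cat{Syn}_k \to \cat{QChannel}$. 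By Staton's completeness result, restricted to the unitary subsignature $\sem{\Pi_k}\subseteq\cat{Unitary}(\DD[\zeta_k])$, the functor $\sem{-}_Q$ is faithful. Completeness for this restriction requires that $\sem{\Pi_k}$ contain the Clifford$+CCX$ fragment together with (dyadic) state preparation. That holds for $k\ge 2$ by \Cref{thm:density}'s sketch, possibly combined with the completeness result \Cref{thm:completeness} to identify unitary terms up to $\approx_k$.

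First, I show that the induced functor $\sem{-}\colon\Split(\LR(\Pi_k))\to\cat{QChannel}$ satisfies $\sem{-}\circ\sem{-}_{\Split} = \sem{-}_Q$. Both sides are structure-preserving: they preserve coproducts by \Cref{prop:coproducts}, the monoidal structure, and the exponential action. It therefore suffices to check agreement on the generators $\new$, $\measure$ and $\apply_U$. This is immediate from the definitions in \Cref{prop:signature} and from \Cref{freeinterpretation}.

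Second, I show that $\sem{-}_{\Split}$ is full onto the relevant hom-sets. Every morphism of $\Split(\LR(\Pi_k))$ between objects of the form $(2^{\otimes p}, e)\to(\bigoplus_j 2^{\otimes m_j}, e')$ is represented by a morphism $A\oplus H\to B\otimes G$ of $\Pi_k$. I plan to rewrite it as a term: initialise the heap using $\new$, apply the unitary via $\apply_U$, discard the garbage, and compose with the idempotents. The idempotents themselves are built from $\measure$ (via $\mu$) and unitaries. Objects whose dimension is not a power of two are handled by padding them to qubit objects using $\oplus$-injections, which are coaffine maps and so are covered by \Cref{corr:coaffinediagrams}.

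Faithfulness then follows formally. Suppose $\sem{f}=\sem{g}$, and pick terms $s,t$ with $\sem{s}_{\Split}=f$ and $\sem{t}_{\Split}=g$. Then $\sem{s}_Q=\sem{t}_Q$, so by Staton's completeness $s=t$ in $\cat{Syn}_k$, and hence $f=g$.

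I expect the fullness step to be the main obstacle. It requires arbitrary objects of the Karoubi envelope, with arbitrary dimensions and arbitrary projections, to be reachable from Staton's qubit-based syntax. The first approach I would try is to use the universal property of $\Split$: each object is a retract of a qubit object, so morphisms between retracts are restrictions of morphisms between qubit objects. It then suffices to treat qubit objects, where the $R$- and $L$-constructions directly yield the term decomposition described above.
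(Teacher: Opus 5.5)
Your overall route is the paper's. Its proof is a single appeal: Staton's axioms are complete for quantum channels, and they hold in $\Split(\LR(\Pi_k))$ by \Cref{thm:staton}. Your commutation and fullness steps spell out what that appeal silently uses.

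Fullness is easier than you fear. Hom-sets of $\Split(\cat{D})$ are subsets of those of $\cat{D}$ with the same equality, so the idempotents play no role and need not be built from $\measure$. In $\LR(\Pi_k)$ every object $n$ is a retract of some $2^q$: include by $\amalg_1$, and retract by an injection into $n \otimes 2^m$ followed by discarding $2^m$, using terminality of $I$ and \Cref{corr:coaffinediagrams}. A representative $2^p \oplus H \to 2^q \otimes G$ then becomes $\new$'s, one unitary, and discards after padding $H$ and $G$, as you describe. If $\apply_U$ is indexed by matrices $U \in \sem{\Pi_k}$, interpreting it needs \Cref{thm:completeness} together with the ``if'' direction of \Cref{completeness}. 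That ensures $\Pi_k$-terms with the same matrix already agree in $\Split(\LR(\Pi_k))$; you gesture at this, and it does work.

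The genuine gap is the sentence that transfers Staton's completeness to the subsignature $\sem{\Pi_k}$. Staton's theorem concerns $\apply_U$ for \emph{all} unitaries $U$. A derivation of $t=u$ in that theory may pass through some $\apply_W$ with $W \notin \sem{\Pi_k}$, even when $t$ and $u$ avoid such $W$. One example is a unitary on the discarded qubits that relates two dilations of the same channel and is then absorbed by axiom (C). What you need is that the full theory is conservative over the restricted one. Concretely: two $\Pi_k$-dilations of the same channel must already be identified by $\Pi_k$-processing of the garbage.

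That is an exact, number-theoretic statement about $\DD[\zeta_k]$. Already for $k=2$, a squared norm such as $3/4$ is not $|\alpha|^2$ for any $\alpha \in \DD[\zeta_2]$, so dilations cannot simply be Gram--Schmidt-normalised on the environment. Neither \Cref{thm:density} nor the presence of Clifford+$CCX$ bears on this. Those are statements about approximation, not about which exact equations are derivable in the restricted theory.

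The paper's one-line proof makes the same transfer without comment, so this missing lemma is shared with the paper. Still, the justification you give for it does not work.
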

\begin{proof}
  It was established in \cite{staton:quantum} that every equality of
  quantum channels can be established using the discrete set of axioms
  shown to hold in $\Split(\LR(\Pi_k))$ by Theorem~\ref{thm:staton}.
\end{proof}

Finally, we conclude that $\Split(\LR(\Pi_k))$ is complete for unitaries,
establishing the last part of \Cref{thm:measurement} in the main article.

\begin{theorem}\label{completeness}
  Parallel morphisms $f$ and $g$ in $\Pi_k$ are equal when included
  into $\Split(\LR(\Pi_k))$ if and only if $f \approx_k s \cdot g$
  for some scalar $s$.
\end{theorem}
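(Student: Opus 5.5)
We prove the two directions separately, routing both through the concrete semantics: the faithful functor $\sem{-} \colon \Split(\LR(\Pi_k)) \to \cat{QChannel}$ of the preceding corollary, and completeness of $\Pi_k$ up to $\approx_k$ (\Cref{thm:completeness}). The idea is that inclusion into $\Split(\LR(\Pi_k))$ forgets exactly global phase and nothing else.

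For the direction ``if'', suppose $f \approx_k s \cdot g$. We first show that the inclusion functor $F \colon \Pi_k \to \Split(\LR(\Pi_k))$ factors through the quotient by $\approx$. Suppose $a \oplus b = a' \oplus b'$ in $\Pi_k$. Then $\sem{a}\oplus\sem{b} = \sem{a'}\oplus\sem{b'}$, so by strong cancellativity of $\cat{Unitary}(\DD[\zetak])$ we get $\sem{a}=\sem{a'}$. Hence $\sem{F(a)} = \sem{F(a')}$ as channels (conjugation by equal unitaries), and faithfulness of $\sem{-}$ on $\Split(\LR(\Pi_k))$ gives $F(a)=F(a')$. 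Since $\approx_k$ is generated by $\sim_k$ and $\approx$, it follows that $F(f)=F(s\cdot g)$. Next we show $F(s\cdot g)=F(g)$. By centrality, $F(s\cdot g) = F(s)\cdot F(g)$, with $F(s)\colon I\to I$. In $\LR(\Pi_k)$, $I$ is terminal (discarding is unique), so every endomorphism of $I$ equals $\id[I]$. Alternatively, $\sem{F(s)}$ is conjugation by a phase, which is the identity channel, and faithfulness again applies.

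For the direction ``only if'', suppose $F(f)=F(g)$ for $f,g\colon n\to n$. Applying $\sem{-}$ gives equal channels $\rho\mapsto \sem{f}\rho\sem{f}^\dag$ and $\rho\mapsto\sem{g}\rho\sem{g}^\dag$ on $M_n(\mathbb{C})$. The standard argument shows that two unitaries inducing the same conjugation channel differ by a scalar: $\sem{g}^\dag\sem{f}$ commutes with all matrices, so it is $\lambda\cdot\id$ with $|\lambda|=1$.

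The main obstacle is showing that $\lambda$ is a scalar actually available in $\Pi_k$, namely a power of $\zetak$, since the entries of $\sem{f}$ and $\sem{g}$ lie in $\DD[\zetak]$. We would argue as follows. Since $\lambda$ is the ratio of two nonzero entries of unitaries over $\DD[\zetak]$, we have $\lambda \in \mathbb{Q}(\zetak)$. Moreover, $\lambda$ has absolute value $1$ under every Galois conjugate, because the Galois action maps unitaries to unitaries, as conjugation is the Galois automorphism that commutes with everything in this abelian extension. One would then like to conclude by Kronecker's theorem that $\lambda$ is a root of unity. However, Kronecker's theorem requires $\lambda$ to be an algebraic integer, and $\DD[\zetak]$ has the denominator $2$, so we must control $2$-adic valuations. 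Since $(2)$ is totally ramified in $\mathbb{Q}(\zetak)$ with a single prime, which is fixed by conjugation, we have $v(\lambda)=v(\bar\lambda)$. Combined with $\lambda\bar\lambda=1$, this gives $v(\lambda)=0$, so $\lambda$ is a unit at $2$ and therefore an algebraic integer. Thus $\lambda$ is a root of unity in $\mathbb{Q}(\zetak)$, hence $\lambda=\pm\zetak^j$, which is a power of $\zetak$. If this number-theoretic step turns out to be delicate, the fallback is to cite the corresponding fact for $k=2$ from~\cite{caretteetal:sqrtpi} and push it through the catalytic embedding $\Phi_k$.

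With $s=\zetak^j$ and $\sem{s}=\lambda$, we get $\sem{f}=\sem{s\cdot g}$, and \Cref{thm:completeness} yields $f\approx_k s\cdot g$.
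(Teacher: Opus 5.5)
Your proposal is correct in outline. Your ``only if'' direction is the paper's argument: equal channels, hence unitaries equal up to a phase, then \Cref{thm:completeness}. You also supply a step the paper skips. Before \Cref{thm:completeness} can produce $f \approx_k s\cdot g$, the phase must be $\sem{s}$ for an actual scalar $s$ of $\Pi_k$, i.e.\ a power of $\zeta_k$. The paper just says ``equal up to a global phase'' and moves on, and your Galois/Kronecker argument is exactly what is needed there.

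Your ``if'' direction takes a different route. The paper argues diagrammatically inside $\Split(\LR(\Pi_k))$. It reads the witness $f\oplus f' = (s\cdot g)\oplus g'$ as an equation between circuits with an auxiliary control, and uses the state preparation and discarding available in $\Split(\LR(\Pi_k))$ to cut away $f'$ and $g'$. It then removes $s$ because $I$ is terminal. You instead push $\approx$ through the semantics and pull back along the faithfulness corollary. Your version is shorter and handles the fact that $\approx_k$ is a generated congruence without comment. However, it makes the easy direction depend on the heaviest input of the section: faithfulness, i.e.\ Staton's completeness theorem. The paper's argument for this direction uses only the preparation and discarding built into the $\Split\LR$ construction. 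Killing the scalar via terminality of $I$ is the same in both.

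One slip needs repair. A ratio of two entries only places $\lambda$ in $\mathbb{Q}(\zeta_k)$, and for such elements ``unit at $2$, therefore an algebraic integer'' is false. For instance, $(3+4i)/5 = (2+i)/(2-i) \in \mathbb{Q}(i)$ has modulus $1$ under every embedding and $2$-adic valuation $0$, but is not a root of unity.

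Use instead $\lambda\cdot\id = \sem{g}^\dagger\sem{f}$, which you already have. Its entries lie in $\DD[\zeta_k]$, which is closed under complex conjugation since $\overline{\zeta_k} = \zeta_k^{2^k-1}$. So $\lambda \in \DD[\zeta_k]$ is integral at every prime not above $2$. Your valuation argument at the unique, conjugation-stable prime above $2$ then shows $\lambda$ is an algebraic integer, hence a root of unity in $\mathbb{Q}(\zeta_k)$, hence a power of $\zeta_k$.
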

\begin{proof}
  Suppose $f \approx_k s \cdot g$ in $\Pi_k$, i.e., there exist $f'$
  and $g'$ such that $f \oplus f' = (s \cdot g) \oplus g'$. Note that
  $f \oplus f'$ and $(s \cdot g) \oplus g'$ are precisely
  interpretations of the circuits
  \begin{center}
    \includegraphics{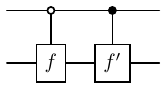} \qquad
    \includegraphics{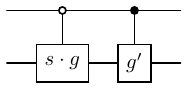}
  \end{center}
  But then $f = g$ in $\Split(\LR(\Pi_k))$ follows by
  \begin{center}
    \includegraphics{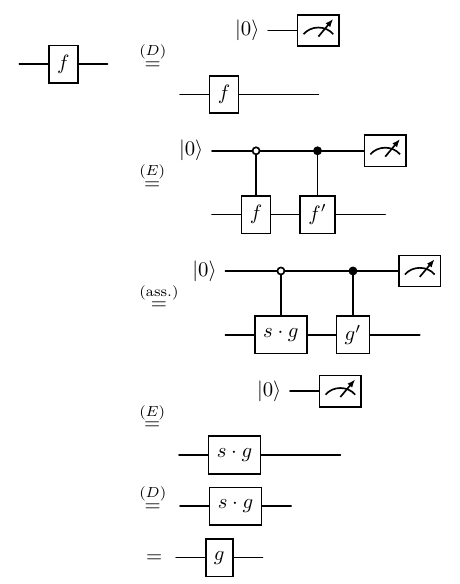}\qedhere
  \end{center}
  where the last step follows as all scalars are identified in
  $\Split(\LR(\Pi_k))$ (since $I$ is terminal), so $s \cdot g = \id
  \cdot g = g$.

  In the other direction, when $f$ and $g$ are unitaries in $\Pi_k$
  such that $f = g$ when included in $\Split(\LR(\Pi_k))$, it follows
  by functoriality of $\sem{-} \colon \Split(\LR(\Pi_k)) \to
  \cat{QChannel}$ that $\sem{f} = \sem{g}$, and since
  $f$ and $g$ were included from $\Pi_k$ these quantum channels in
  fact just adjoining by the underlying unitaries of $f$ and $g$.
  Since adjoining by the underlying unitaries of $f$ and $g$ give
  equal quantum channels, it follows by unitary freedom~\cite[Theorem
    2.6]{nielsenchuang} of their Choi-states that their underlying
  unitaries are equal up to a global phase.  Applying
  Theorem~\ref{thm:completeness} then establishes $f \approx_k s
  \cdot g$ in $\Pi_k$.
\end{proof}

\section*{Asymptotic scaling with $k$}

We study how the $\Pi_k$ term size needed to approximate a target unitary depends on the precision level~$k$. Given an overall approximation error~$\varepsilon$ in operator norm, how does the minimal $\Pi_k$ term size scale with $k$ for different classes of unitaries? Two representative cases illustrate the range of behaviours: generic unitaries and the quantum Fourier transform.

\paragraph{Generic unitaries.}
Consider circuits made of single-qubit rotations and CNOT gates. A rotation $R_a(\phi)$ can be exactly represented by a term of $\Pi_k$ if and only if the entries of $R_a(\phi)$ lie in $\DD[\zetak]$~\cite{PhysRevLett.110.190502}; all other rotations must be approximated. For such generic rotations, any $\varepsilon$-approximation requires $O(\log(1/\varepsilon))$ primitive terms and approximations matching this lower bound can be constructed efficiently~\cite{rossselinger:synthesis}. If a constant fraction of the rotations in a circuit are generic, the overall size of the corresponding $\Pi_k$ term therefore scales as $O(\log(1/\varepsilon))$ for any fixed $k$, with larger $k$ potentially improving only the hidden constant factor. Thus, no finite $k$ can remove the logarithmic dependence on $\varepsilon$, only improve its constant.

\paragraph{The quantum Fourier transform.}
The $n$-qubit quantum Fourier transform ($\text{QFT}_n$) contains $n$ Hadamard gates and $\binom{n}{2}$ controlled-phase gates. The angles appearing in the phase gates are of the form $2\pi / 2^d$, with $d$ ranging from $2$ to $n$. At precision level $k$, all rotations with $d\leq k$ are native to $\Pi_k$ and therefore require no approximation. Hence, when $n\leq k$, no approximations are needed at all, and $O(n^2)$ $\Pi_k$-terms suffice to represent $\text{QFT}_n$ exactly. When $n > k$,
\[
  \binom{(n-k)+1}{2}
\]
controlled-phase gates cannot be exactly represented in $\Pi_k$ and must therefore be approximated. For example, a single gate requires approximation when $k=n-1$, three gates require approximations when $k=n-2$, and so on. Overall, assuming that the total error budget $\varepsilon$ is split evenly among all gates requiring approximation, standard methods~\cite{rossselinger:synthesis} will therefore produce $\Pi_k$ terms of size $O ((n^2 - (n-k)^2) + (n-k)^2\log_2((n-k)^2/\varepsilon)))$. Hence, in the case of the quantum Fourier transform, increasing $k$ can remove the dependence on $\varepsilon$ altogether: As $k$ increases, the residual cost drops until $k=n$, after which all rotations are native and the size plateaus at $O(n^2)$.

\end{document}